\DeclareMathOperator*{\defeq}{\triangleq}
\newtheorem{theorem}{Theorem}
\newtheorem{corollary}{Corollary}
\newtheorem{lemma}{Lemma}
\newcommand{\bit}{\begin{itemize}}
\newcommand{\eit}{\end{itemize}}
\newcommand{\bc}{\begin{center}}
\newcommand{\ec}{\end{center}}
\newcommand{\ba}{\begin{array}}
\newcommand{\ea}{\end{array}}
\newcommand{\beq}{\begin{equation}}
\newcommand{\eeq}{\end{equation}}
\newcommand{\beqn}{\begin{equation*}}
\newcommand{\eeqn}{\end{equation*}}
\newcommand{\bean}{\begin{eqnarray*}}
\newcommand{\eean}{\end{eqnarray*}}
\newcommand{\bea}{\begin{eqnarray}}
\newcommand{\eea}{\end{eqnarray}}
\def\E{\mathbb{E}}
\newcommand{\Lc}{{\mathcal L}}
\newcommand{\Rc}{{\mathcal R}}
\newcommand{\Sc}{{\mathcal S}}
\newcommand{\Zc}{{\mathcal Z}}
\newtheorem{remark}{Remark}
\newcommand{\non}{\nonumber}
\newcommand{\Hen}{\mathbb{H}}
\newcommand{\hen}{\mathrm{h}}
\newcommand{\Imu}{\mathbb{I}}
\newcommand{\bln}{n}
\newcommand{\Ho}{\mathcal{H}_{\text{out}}}
\newcommand{\Hop}{\mathcal{H}'_{\text{out}}}
\begin{document}
\sloppy

\title{Optimal Secure GDoF of Symmetric Gaussian  Wiretap Channel with a Helper}
\author{Jinyuan Chen and  Chunhua Geng
\thanks{Jinyuan Chen is with Louisiana Tech University, Department of Electrical Engineering, Ruston, USA (email: jinyuan@latech.edu). Chunhua Geng is with Nokia Bell Labs, Murray Hill, NJ 07974 USA (email: chunhua.geng@nokia-bell-labs.com).  
This work was presented in part at the 2019 IEEE International Symposium on Information Theory. 
The work of Jinyuan Chen was partly supported by Louisiana Board of Regents Support Fund (BoRSF) Research Competitiveness Subprogram (RCS) under grant 32-4121-40336.}  
}

\maketitle
\thispagestyle{empty}
%\pagestyle{headings}

%%%%%%%%%%%%%%%%%%%%%%%%%%%%%%%%%%%%%%%%%%%%%
\begin{abstract}

We study a symmetric Gaussian wiretap channel with a helper, where a confidential message is sent from  a transmitter  to a legitimate receiver, in the presence of a helper and an eavesdropper, under a weak notion of secrecy constraint.  
For this setting,  we characterize the optimal secure generalized degrees-of-freedom (GDoF). The result reveals that, adding a helper can significantly increase the secure GDoF of the wiretap channel.  The result is supported by a new converse and a new scheme. 
In the proposed scheme, the helper sends a cooperative jamming signal at a specific power level and direction. In this way, it minimizes the penalty in GDoF incurred by the secrecy constraint.  
In the secure rate analysis, the techniques of noise removal and signal separation  are used.

\end{abstract}

\section{Introduction}

The study of information-theoretic secrecy dates back to Shannon's  work  of \cite{Shannon:49} in 1949. Since then, information-theoretic secrecy has been investigated  in varying  communication channels,  e.g., the wiretap channels \cite{Wyner:75, CsiszarKorner:78, CH:78}, multiple access channels with confidential messages and  wiretap multiple access channels  \cite{TY:08cj, TekinYener:08d, LP:08, BMK:10, LLP:11,  HKY:13, KG:15, BSP15}, the broadcast  channels with confidential messages \cite{LMSY:08, KTW:08, XCC:09, LLPS:10, ChiaGamal:12, BMK:13},   and the interference channels with confidential messages \cite{LMSY:08,LBPSV:09,LYT:08, YTL:08, HY:09, PDT:09,  KGLP:11, XU:14,  MDHS:14, MM:14o, XU:15,  GTJ:15, GJ:15, MM:16,  allerton:16, MU:16, FW:16,   ChenSecurityIT:17, MXU:17, ChenAllerton:18, ChenIC:18, ChenLiArxiv:18}. 
In those settings, the messages are transmitted over the channels with secrecy constraints, which often incur a  penalty in capacity (cf.~\cite{LMSY:08, LBPSV:09,  KGLP:11, HKY:13, XU:14, XU:15, GTJ:15, MM:16, MU:16, ChenAllerton:18,  ChenIC:18, LK:18}).   
One way to minimize the capacity penalty incurred by secrecy constraints is to add helper(s) into the channels  (see, e.g., \cite{XU:12, XU:14, XU:13, NY:13, NY:14, MXU:17, MU:16, FW:16, ChenLiArxiv:18, Tang+:11}). Specifically, the work in \cite{ChenLiArxiv:18} recently showed that adding a helper can \emph{totally} remove the  penalty in sum generalized degrees-of-freedom (GDoF), in a two-user symmetric Gaussian interference channel. 

In this work, we consider secure communications over a Gaussian wiretap channel with a helper. In this setting, a confidential message sent from  a transmitter  to a legitimate receiver needs to be secure from an eavesdropper,  in the presence of a helper.  
The wiretap channel and its variations have been considered as the basic channels for the investigation of information-theoretic secrecy.   
For example,  the wiretap channel with a helper can be considered as a specific case of an interference channel with only one confidential message. 
The insights gained from the former  can be very helpful in understanding the fundamental limits of the latter. 
For the Gaussian wiretap channel with one helper, the work in \cite{Tang+:11} provided both inner and outer bounds on the achievable secrecy rate at finite SNRs, where the achievability is based on unstructured Gaussian random codes and the derived inner and outer bounds do not match in general. 
In the wiretap channel with $M$ helpers, the works in \cite{XU:12, XU:14} showed that the secure degrees-of-freedom (DoF) is $\frac{M}{M+1}$ for almost all channel gains. The result is derived under the assumption that perfect channel state information (CSI) is available at the transmitters.  The work in \cite{XU:13, MXU:17} then showed that the same secure DoF  of $\frac{M}{M+1}$ is still achievable when the eavesdropper CSI is not available at the transmitters.  Another work in \cite{NY:13} studied a Gaussian wiretap channel with a helper, where a single antenna is equipped at  each of the transmitter and the legitimate receiver, while multiple antennas are equipped at each of the helper and the eavesdropper.  The result in \cite{NY:13} revealed that  the secure DoF $1/2$ is achievable irrespective of the number of antennas at the eavesdropper, as long as the number of antennas at the helper  is the same as the number of antennas at the  eavesdropper.  
In the setting of wiretap channel with a helper,  the previous DoF results were generalized  to the multiple-antenna scenario where multiple antennas are equipped at each node \cite{NY:14, MU:16}. 
The work in \cite{NY:14} used the assumption that perfect CSI is  available at the transmitters, while    the work in \cite{MU:16} used the assumption that the eavesdropper CSI is not available at the transmitters.
In all of those previous works in \cite{XU:12, XU:14, XU:13, MXU:17, NY:13, NY:14, MU:16} (except for \cite{Tang+:11}, which studied the secrecy rate at finite SNRs), the authors considered the secure DoF performance of the channels.  
The DoF metric is a form of capacity approximation. Under the DoF metric,   all the non-zero channel gains are treated equally strong, at the regime of high signal-to-noise ratio (SNR). However, in the communication channels the capacity is usually affected by   different channel strengths of different links.  Therefore, it motivates us to go beyond the DoF metric and consider a better form of  capacity approximation.  
GDoF metric is a generalization of DoF, which is able to capture the  capacity behavior when  different links have different channel strengths and  is very helpful in understanding  the capacity to within a constant gap (cf.~\cite{ETW:08}).
The work in \cite{FW:16} studied  the secure GDoF and secure capacity  of the Gaussian wiretap channel with a helper, as well as the Gaussian multiple access wiretap channel, where channel gain from the first transmitter to the eavesdropper is the same as the channel gain from the second transmitter to the eavesdropper. Note that, the setting considered in \cite{FW:16} has  symmetric channel gains at the wiretapper, which is different from our setting considered in this work.  
Also note that, the secure GDoF upper bound and the lower bound provided in  \cite{FW:16} are not matched for a certain range of channel parameters. 
In this work, we seek to characterize  the \emph{optimal} secure GDoF of a wiretap channel with a helper.

\begin{figure}[t!]
\centering
\includegraphics[width=8.9cm]{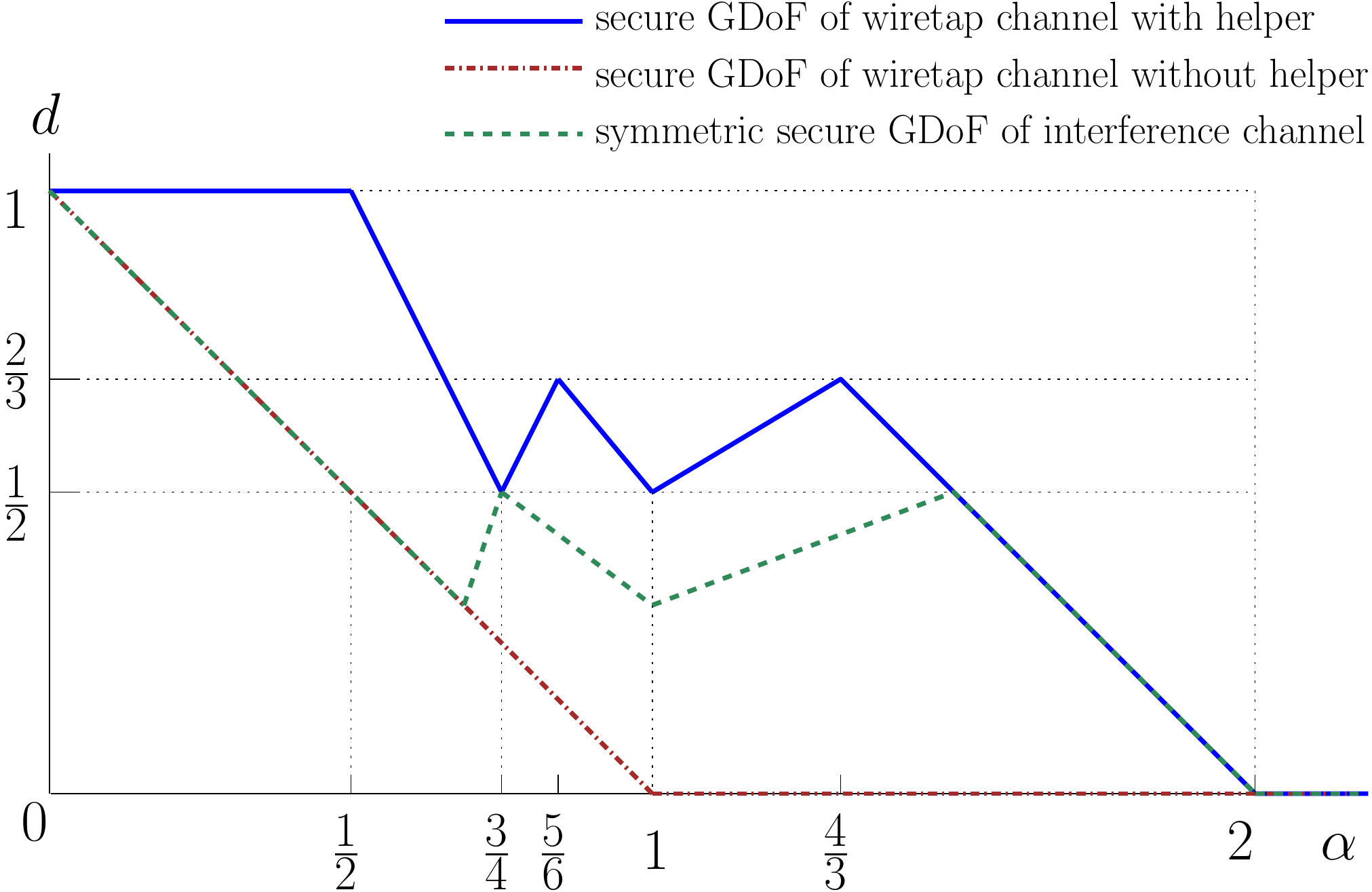}
\caption{Optimal secure  GDoF vs. $\alpha$ for the symmetric  Gaussian  wiretap channels with and without a helper, where $\alpha$ denotes the ratio between the direct-links'  channel strength  (in decibel scale) and the cross-links'  channel strength  (in decibel scale).   The direct-links refers to the link from the  transmitter to its legitimate receiver, as well as the link from the helper to the eavesdropper, while the   cross-links refers to the other two links, in this symmetric setting. The dash line with green color refers to  the optimal symmetric secure GDoF vs. $\alpha$ for the symmetric  Gaussian  interference  channel with confidential messages (cf.~\cite{ChenAllerton:18}).}
\label{fig:SGDoFWTH}
\end{figure} 

Specifically, the main  contribution of this work is the \emph{optimal} secure GDoF  characterization of a symmetric wiretap channel with a helper, for all the channel parameters.  The result reveals that, adding a helper can significantly increase the secure GDoF of the wiretap channel (see Fig.~\ref{fig:SGDoFWTH}).  The result is supported by a new converse and a new scheme.  The  converse is derived for the  wiretap channel with a helper under the \emph{general} channel parameters, i.e., the converse holds for the symmetric and asymmetric channels.  
In the proposed scheme,\footnote{Although in this paper, for illustration simplicity, the achievable scheme is mainly discussed for the symmetric settings, the key ideas could be generalized to asymmetric channels as well.}
  the helper sends a cooperative jamming signal at a specific power level and direction. In this way, it minimizes the penalty in GDoF incurred by the secrecy constraint.  In the proposed scheme, the  signal of the transmitter is a superposition of  a common signal, a middle signal, and a private signal. The power of  private signal is low enough such that this signal arrives at the eavesdropper under the noise level.  The power of the common signal and middle signal is above the noise level at the receivers. However, at the eavesdropper each of the common signal and middle signal is aligned  at a specific power level and direction with the jamming signal sent from the helper, which minimizes the penalty in GDoF incurred by the secrecy constraint.
The optimal secure GDoF is described in  different expressions for different interference regimes. For each interference regime, the power  and rate levels of the signals in the proposed scheme are set to the optimal values, so as to achieve the optimal secure GDoF. 
In the secure rate analysis, the techniques of noise removal and signal separation  are used (cf.~\cite{MGMK:14, NM:13}).  
The secure GDoF result derived in this work can be extended to understand the secure  capacity to within a constant gap, which will be investigated in the future work.

We will organize the rest of this work as follows.  
In Section~\ref{sec:system} we will describe the channel model.   In Section~\ref{sec:mainresult}, the  main results of this work will be provided. 
The converse proof will be described  in   Section~\ref{sec:converse}.
The achievability proof will be shown  in   Section~\ref{sec:CJGau} and Section~\ref{sec:epcase4}.
In Section~\ref{sec:concl333} we will provide the conclusion.
In terms of notations, we  use $\Hen(\bullet)$ and $\Imu(\bullet)$  to represent the  entropy and mutual information,  respectively, and use $\hen(\bullet)$  to represent differential entropy.  
$\Zc$ and  $\Zc^+$ are used to   denote the sets of integers and  positive integers, respectively, while   $\Rc$ is used to  denote the set of real numbers.   
$(\bullet)^+= \max\{0, \bullet\}$. When $f(s)=o(g(s))$ is used, it suggests that $\lim_{s \to \infty} f(s)/g(s) =0$. 
All the logarithms are considered with base~$2$.

\section{System model  \label{sec:system} }

This work focuses on a Gaussian wiretap channel with a helper (see Fig.~\ref{fig:WTH}). In this setting,   transmitter~1 sends a \emph{confidential} message  to receiver~1 (the legitimate  receiver), in the presence of a helper (transmitter~2) and an eavesdropper (receiver~2).  
By following the common conventions in \cite{NM:13,ChenAllerton:18, ChenIC:18},  the channel input-output relationship of this setting is described by
\begin{subequations} \label{eq:Cmodel} 
\begin{align}
y_{1}(t) &= \sqrt{P^{\alpha_{11}}} h_{11} x_{1}(t) +   \sqrt{P^{\alpha_{12}}} h_{12} x_{2}(t) +z_{1}(t)   \\ 
y_{2}(t) &= \sqrt{P^{\alpha_{21}}} h_{21} x_{1}(t) +   \sqrt{P^{\alpha_{22}}} h_{22} x_{2}(t) +z_{2}(t)   
\end{align}
\end{subequations}
where $y_{k}(t)$  denotes the received signal of receiver~$k$ at time $t$,    $x_{k}(t)$  denotes the transmitted signal of  transmitter~$k$  with a normalized power constraint  $\E |x_{k}(t)|^2 \leq 1$,  and $z_k(t) \sim \mathcal{N}(0, 1)$ denotes the additive white Gaussian noise. 
$ \sqrt{P^{\alpha_{k\ell}}} h_{k\ell}$ represents the channel gain of the  link from  transmitter~$\ell$ to receiver~$k$, for $ \ell, k =1,2$. 
The nonnegative  parameter  $\alpha_{k\ell}$ captures the \emph{link strength} of the  channel from  transmitter~$\ell$ to receiver~$k$.  
$h_{k\ell} \in (1, 2]$  is a parameter of the channel gain. 
 In this setting, $P \geq 1$ captures the base of signal-to-noise ratio of all the links.  
Since  the form of $ \sqrt{P^{\alpha_{k\ell}}} h_{k\ell}$ can describe any real channel gain greater than $1$,   the above model can describe the  general channels (focusing on the cases with channel gains greater than $1$) in terms of capacity approximation.
In this setting,  all the nodes  are assumed to know all the channel parameters  $\{\alpha_{k\ell}, h_{k\ell}\}_{k, \ell}$.\footnote{The availability of the CSI of the eavesdropper links can be justified when the eavesdropper is a legitimate user in the network, as in the case of interference channels with confidential messages \cite{XU:14,ChenIC:18}.}  
When we consider the \emph{symmetric} case,  we will assume that 
\[  \alpha_{12}= \alpha_{21}=\alpha,  \quad \alpha_{22}= \alpha_{11}=1,  \quad   \alpha \geq 0.  \]

\begin{figure}[t!]
\centering
\includegraphics[width=6cm]{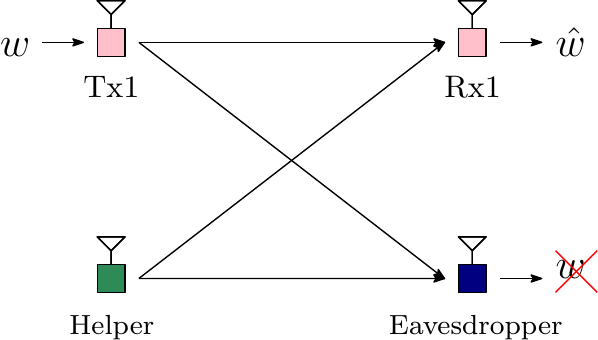}
 \vspace{-.05 in}
\caption{A Gaussian wiretap channel with a helper, where  transmitter~1 sends a \emph{confidential} message  $w$ to receiver~1, in the presence of a helper (transmitter~2) and an eavesdropper (receiver~2).}
\label{fig:WTH}
\end{figure}

In  this  setting, transmitter~$1$  sends a message $w$ to its legitimate  receiver over $\bln$ channel uses,  where $w$  is   chosen uniformly from the set $\mathcal{W} \defeq \{1,2, 3, \cdots, 2^{\bln R}\}$. 
When transmitting the confidential message from transmitter~1,   a stochastic function  \[f_1: \mathcal{W}_0 \times \mathcal{W}   \to   \mathcal{R} ^{\bln} \] maps $w \in \mathcal{W}$  to the signal  $ x_1^{\bln}  = f_1(w_0, w)   \in  \mathcal{R} ^{\bln}$, where the randomness in this mapping is represented by $w_0 \in  \mathcal{W}_0$. We assume that $w_0$ and $w$ are independent. 
At the helper (transmitter~$2$), another function \[f_2: \mathcal{W}_h  \to   \mathcal{R}^{\bln} \]  generates a random signal $ x_2^{\bln}  = f_2(w_h)   \in  \mathcal{R} ^{\bln}$, where    $w_h \in  \mathcal{W}_h$ is a random variable that is independent of $w_0$ and $w$.  We assume that $w_0$ is available  at the first transmitter only, while $w_h$ is  available  at the second transmitter only.
We say a secure rate $R$ is achievable if there exists a sequence of  codes with $\bln$-length,  such that the legitimate  receiver  can reliably decode the message $w$, i.e.,  
 \begin{align}
 \text{Pr}[w  \neq \hat{w} ]  \leq \epsilon   \label{eq:Pedef}
  \end{align}
  for any $\epsilon >0$, 
and the message is secure from the eavesdropper, i.e.,  
 \begin{align}
 \Imu(w; y_{2}^{\bln})  &\leq  \bln \epsilon .   \label{eq:secrecyconstraint} 
 \end{align}
 The above secrecy  constraint is also known as a weak notion of secrecy constraint. 
We will use $C(\alpha_{11}, \alpha_{12}, \alpha_{21}, \alpha_{22}, P)$ to denote the  secure capacity, which is defined as the maximal  secure rate that is achievable.
We will use  $d (\alpha_{11}, \alpha_{12}, \alpha_{21}, \alpha_{22})$ to denote the  secure  GDoF, which is defined as 
 \begin{align}
 d (\alpha_{11}, \alpha_{12}, \alpha_{21}, \alpha_{22})\defeq   \lim_{P \to \infty}   \frac{C(\alpha_{11}, \alpha_{12}, \alpha_{21}, \alpha_{22}, P)}{\frac{1}{2}\log P}.  \label{eq:GDoF99}
 \end{align}
GDoF is a form of the approximation of capacity.  In this setting, DoF is a particular case of GDoF by considering  only one point with $\alpha_{12}= \alpha_{21}= \alpha_{22}= \alpha_{11}=1$.

\section{Main result  \label{sec:mainresult}}

The main result of this paper is the characterization of the optimal secure GDoF value for the symmetric wiretap channel with a helper defined in Section~\ref{sec:system}.

\begin{theorem}  \label{thm:GDoF}
Consider  the \emph{symmetric}  Gaussian  wiretap channel with a helper  defined in Section~\ref{sec:system}, with   $\alpha_{12}= \alpha_{21}=\alpha$ and $\alpha_{22}= \alpha_{11}=1$. For almost all channel coefficients  $\{h_{k\ell}\} \in (1, 2]^{2\times 2}$,  the optimal secure  GDoF  is characterized as 
\begin{subnumcases}
{d(\alpha) =} 
  \quad  1   &                              for   \ $ 0 \leq \alpha \leq  1/2$        \label{thm:capacitydet1} \\
 \quad 2 - 2\alpha  &  for \ $1/2  \leq \alpha \leq 3/4$    \label{thm:capacitydet2} \\ 
  \quad 2\alpha -1  &  for \ $3/4  \leq \alpha \leq  5/6$    \label{thm:capacitydet3} \\ 
 \quad      3/2 - \alpha  &                            for  \          $5/6 \leq \alpha \leq 1$    \label{thm:capacitydet4} \\ 
   \quad       \alpha/2  &  for  \  $1  \leq \alpha \leq 4/3$    \label{thm:capacitydet6} \\ 
     \quad                   2- \alpha  &  for \   $4/3 \leq  \alpha \leq 2$   \label{thm:capacitydet7} \\ 
     \quad                                          0  &  for  \   $2\leq  \alpha$ .  \label{thm:capacitydet8}
\end{subnumcases}
\end{theorem}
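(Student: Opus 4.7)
The plan is to prove the theorem through two independent arguments: a converse that holds for the general asymmetric wiretap-with-helper channel and then specializes to the symmetric case, and a matching achievable scheme constructed regime-by-regime.

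For the \emph{converse}, I would combine Fano's inequality with the weak secrecy constraint to get
\[
nR \leq I(w;y_1^n) - I(w;y_2^n) + n\epsilon_n,
\]
and then feed the legitimate receiver a carefully chosen genie, typically consisting of $x_2^n$ together with a noise-augmented version of the eavesdropper's observation $y_2^n + \tilde z^n$, where the variance of $\tilde z^n$ is tuned so that after cancellation the ``residual noise'' level equals $P^{-\beta}$ for the appropriate exponent $\beta \in \{\alpha, 1-\alpha, 2\alpha-1, \ldots\}$. The worst-case-noise inequality (Gaussian maximizes differential entropy under a covariance constraint) then converts the difference of entropies into linear bounds in $\alpha$, and a cut-set style bound on the helper link handles the $\alpha\geq 2$ regime. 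The seven regime expressions arise as the minimum, at each $\alpha$, of the family of bounds obtained by varying which side of the channel is noise-augmented and whether $x_2^n$ alone or also a portion of $x_1^n$ is supplied as genie information.

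For the \emph{achievability}, the transmitter sends a three-layer superposition $x_1 = x_{1,c}+x_{1,m}+x_{1,p}$ of a common, a middle, and a private symbol, with power exponents $P^{-r_c}, P^{-r_m}, P^{-r_p}$, while the helper emits a cooperative jamming signal $x_2$ at exponent $P^{-r_j}$. The parameters are chosen so that (i) the private symbol arrives at the eavesdropper at or below the noise floor, i.e., $\alpha - r_p \leq 0$; (ii) at receiver~$2$, the received levels of $x_{1,c}$ and $x_{1,m}$ coincide with those of ``slices'' of $x_2$ arriving through the direct link, so that each confidential layer is perfectly masked level-by-level; (iii) at the legitimate receiver, the remaining four components $(x_{1,c}, x_{1,m}, x_{1,p}, x_2)$ are decodable by successive cancellation. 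I would use real interference alignment or a PAM/lattice discretization so that signal separation at both receivers is exact modulo bounded gaps. The reliability bound follows from a standard successive decoding calculation; the secure rate follows from applying the noise removal / signal separation technique of \cite{MGMK:14, NM:13} to show $I(w;y_2^n) = o(\log P)$, so that the secrecy penalty on the common and middle layers is only their own entropy minus what the eavesdropper observes, which matches the alignment.

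The main obstacle is the joint tuning: the four exponents $(r_c, r_m, r_p, r_j)$ must simultaneously (a) push leakage to $o(\log P)$ through precise power-level alignment, (b) keep all four layers resolvable at the legitimate receiver, and (c) saturate the converse. This forces a different optimum in each regime, and the piecewise curve in \eqref{thm:capacitydet1}--\eqref{thm:capacitydet8} is the envelope obtained by solving one linear program per regime for the maximum of $r_c+r_m$ subject to the alignment and decoding constraints. Verifying that these tunings indeed meet the converse tightly at each of the six breakpoints $\alpha \in \{1/2, 3/4, 5/6, 1, 4/3, 2\}$, and that the secrecy analysis is uniform across regime boundaries, is where the bulk of the case-analysis effort will lie.
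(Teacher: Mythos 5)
Your outline matches the paper's overall strategy (Fano plus weak secrecy, genie-aided converse, superposition coding with cooperative jamming and real interference alignment for the achievability), so you have the right toolbox in hand. However, two pieces of your plan as stated would not carry through, and both point to subtleties the paper takes care to handle.

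On the converse: you propose to feed the legitimate receiver the genie $x_2^{\bln}$ together with a noise-augmented $y_2^{\bln}$. The paper does \emph{not} give $x_2^{\bln}$ cleanly; in the first bound it gives $s_{22}^{\bln}$, a noisy copy of $x_2$ at the precise power exponent $(\alpha_{22}-\alpha_{21})^+$, and in the second bound it gives noisy copies $\tilde x_1^{\bln},\tilde x_2^{\bln}$ at the maximum observed power exponents. Handing the decoder a clean $x_2^{\bln}$ removes the helper's randomization from the picture on the legitimate side while leaving it on the eavesdropper's side, and the resulting difference bound does not tighten to the piecewise curve. Moreover, in the paper's first bound the noise-augmented eavesdropper signal $\bar y_2^{\bln}$ is \emph{not} given to receiver~1 — it replaces $y_2^{\bln}$ in the penalty term via the Markov chain $w \to y_2^{\bln} \to \bar y_2^{\bln}$ (so $I(w;y_2^{\bln}) \geq I(w;\bar y_2^{\bln})$), which is a degradation of the eavesdropper, not genie information for the decoder. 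Your sketch conflates these two distinct constructions, and the paper actually needs \emph{three} separate genie bounds whose pointwise minimum produces the seven-piece curve. On the achievability: you claim the four components "are decodable by successive cancellation" at the legitimate receiver, and separately mention real interference alignment. These are not interchangeable here. In the regimes $5/6 \leq \alpha \leq 1$ and $1 \leq \alpha \leq 4/3$, the common layer $v_c$ and the helper's jamming signal $u_c$ arrive at receiver~1 at overlapping power levels and cannot be separated by power-level successive decoding; the paper's Lemmas~\ref{lm:rateerror561} and~\ref{lm:epcase5} instead decode a rationally-independent linear combination and then split it (Motahari et al.\ style), after excising an outage set of vanishing Lebesgue measure — this is precisely the source of the ``for almost all channel coefficients'' qualifier in the theorem. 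If you try to execute the plan with plain successive cancellation you will get stuck on these middle regimes, so you should commit to the noise-removal / signal-separation argument there and be explicit that it only holds off an outage set.
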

\begin{proof}
The converse follows from Lemma~\ref{lm:gupper} and Corollary~\ref{cor:symSGDoF} described in Section~\ref{sec:converse}. 
Specifically, Lemma~\ref{lm:gupper} provides some  upper bounds on the secure \emph{rate} of the  Gaussian  wiretap channel with a helper, under \emph{general} channel parameters.  Corollary~\ref{cor:symSGDoF} is the GDoF result derived from Lemma~\ref{lm:gupper},  in the setting of \emph{symmetric} Gaussian  wiretap channel with a helper.
The achievability based on cooperative jamming is described in Sections~\ref{sec:CJGau} and~\ref{sec:epcase4}.
\end{proof}

\vspace{10pt}

\begin{remark}
In Fig.~\ref{fig:SGDoFWTH}, for the wiretap channel \emph{without} a helper (removing transmitter~2), the secure GDoF, denoted by $d_{no} $, is given by
\begin{align}
d_{no}  (\alpha) = (1 -  \alpha)^+       \quad \forall  \alpha \in [0,  \infty)   \non 
\end{align}
(see Appendix~\ref{sec:GDoFnoHelper} for details). Comparing $d_{no}  (\alpha)$ with $d(\alpha)$ in Theorem \ref{thm:GDoF}, one can find that adding a helper significantly increases the secure GDoF of the wiretap channel.
\end{remark}

\begin{remark}
From Fig.~\ref{fig:SGDoFWTH}, it is not hard to verify that for all $\alpha$ values, the sum secure GDoF of two-user symmetric interference channels with confidential messages is no less than the secure GDoF of symmetric wiretap channel with a helper (surprisingly, for a large regime of $\frac{3}{2}\leq\alpha<2$, the former is the double of the latter). This indicates that in a two-user symmetric interference channel with confidential messages, acting one of the transmitter as a pure helper (i.e., not sending its own confidential message) does not  improve the sum secure GDoF. 
\end{remark}

%%%%

Our achievability scheme is based on   pulse  amplitude modulation, superposition coding, cooperative jamming, and  alignment techniques.  Specifically, the transmitted signal of transmitter~1 is a superposition of  a \emph{common signal} (denoted by $v_c$), a \emph{middle signal} (denoted by $v_m$), and a \emph{private signal} (denoted by $v_p$). 
The power of  private signal is low enough such that this signal arrives at the eavesdropper under the noise level.  The power of the common signal is higher than that of the middle signal, and both signals are above the noise level at the receivers.
The transmitted signal of the helper (transmitter~2) is a superposition of  a \emph{common jamming signal} (denoted by $u_c$) and a \emph{private jamming signal} (denoted by $u_p$).
The common signal $v_c$  is aligned with the common jamming signal  $u_c$ at the  eavesdropper at a specific power level and direction, while the middle signal $v_m$  is aligned with the private jamming signal  $u_p$ at the  eavesdropper, again,  at a specific power level and direction.  
One difference between the common and  private jamming signals is that, the latter  arrives at the legitimate receiver under the noise level. 
Therefore, the signal $u_p$  will \emph{not} cause much  interference at the legitimate receiver,  while the signal $u_c$ will  create significant interference at the legitimate receiver.
For the \emph{private signal}, it will not require jamming signal from the helper. 
Depending on different regimes of $\alpha$, some signals are not needed and are thus set as  empty signals. Table~\ref{tab:signals} gives a summary of the signal design. Note that  $(v_c, u_c)$ is considered as a pair of signals, and $(v_m, u_p)$ is considered as another pair of signals, due to our alignment design. 
With the above signal design, our scheme achieves the optimal secure GDoF value. More details of the achievability scheme can be found in  Sections~\ref{sec:CJGau} and~\ref{sec:epcase4}.

\begin{table}
\caption{Signal design for different cases, where ``$\checkmark$'' means that the signal will be sent and  ``$\times$''  means that the signal won't be sent.}
\begin{center}
{\renewcommand{\arraystretch}{1.7}
\begin{tabular}{|c|c|c|c|c|c|c|}
  \hline
                     &   $0 \leq \alpha \leq \frac{1}{2}$  &  $\frac{1}{2}  \leq  \alpha \leq \frac{3}{4}$  &   $\frac{3}{4}  \leq \alpha \leq \frac{5}{6}$   & $\frac{5}{6} \leq \alpha \leq 1 $  &  $1 \leq  \alpha \leq \frac{4}{3}$  & $\frac{4}{3} \leq \alpha \leq 2$     \\
    \hline
   $(v_c, u_c) $    		&   $\times$     			&   $\times$                       &     $\checkmark$       &    $\checkmark$     &   $\checkmark$    &    $\checkmark$   \\
    \hline
   $(v_m, u_p) $    		&   $\checkmark$     			&    $\checkmark$                        &     $\checkmark$       &   $\checkmark$     &   $\times$     &    $\times$    \\
    \hline
   $v_p$    		&   $\checkmark$   			&   $\checkmark$                       &    $\checkmark$      &   $\checkmark$   &   $\times$     &    $\times$     \\
    \hline
       \end{tabular}
}
\end{center}
\label{tab:signals}
\end{table}

\section{Converse \label{sec:converse}}

For the  Gaussian  wiretap channel with a helper  defined in Section~\ref{sec:system},  we provide a  general upper bound on the secure  rate, which is stated in the following  Lemma~\ref{lm:gupper}.

\begin{lemma}  \label{lm:gupper}
For the  Gaussian  wiretap channel with a helper  defined in Section~\ref{sec:system},  letting $\phi_1 \defeq  ( \alpha_{12}- (\alpha_{22} - \alpha_{21})^+)^+ $ and $\phi_3 \defeq \min\{\alpha_{21},  \alpha_{12},  ( \alpha_{11} - \phi_1)^+ \}$,  the secure  rate is upper bounded   by 
\begin{align}
R  & \leq    \frac{1}{2} \log \Bigl(1+ P^{\alpha_{11} - \phi_3} \cdot \frac{|h_{11}|^2}{|h_{21}|^2} +  P^{\alpha_{12} -(\alpha_{2 2} -\alpha_{2 1}  )^+} \cdot \frac{|h_{12}|^2}{|h_{22}|^2}  \Bigr)      +   \frac{1}{2} \log \bigl(1+      P^{ \phi_3  - \phi_1  }  |h_{22}|^2   \bigr)   +   7.3   \label{eq:upWT}   \\ 
  R  & \leq  \frac{1}{2}  \Bigl(   \frac{1}{2} \log \bigl(1+   \frac{P^{ (\alpha_{11} - \alpha_{21} )^+}}{|h_{21}|^2}   \bigr)   +   \frac{1}{2} \log \bigl(1+   \frac{P^{ (\alpha_{22} - \alpha_{12} )^+}}{|h_{12}|^2}   \bigr)  + \frac{1}{2} \log \bigl(1+ P^{\alpha_{11}}   |h_{11}|^2  + P^{\alpha_{12}}   |h_{12}|^2  \bigr)       +  \log 9   \Bigr) \label{eq:detup3}  \\
    R  & \leq      \frac{1}{2} \log \Bigl(1+ P^{\alpha_{11}-\alpha_{21}} \cdot \frac{|h_{11}|^2}{|h_{21}|^2} +  P^{\alpha_{22} + \alpha_{11} - \alpha_{21} } \cdot \frac{|h_{11}|^2 |h_{22}|^2}{|h_{21}|^2}  \Bigr)       \label{eq:detup1} .
\end{align}
\end{lemma}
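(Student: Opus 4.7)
The plan is to derive all three bounds from the common starting point obtained by combining Fano's inequality with the weak secrecy constraint~\eqref{eq:secrecyconstraint}:
\begin{align}
n R \;\leq\; I(w; y_1^{n}) - I(w; y_2^{n}) + n \epsilon_n, \non
\end{align}
where $\epsilon_n \to 0$ as $n \to \infty$. From this inequality, each bound will be obtained by enhancing receiver~1 with a carefully chosen genie (without touching the eavesdropper), and then collapsing the resulting multi-letter expression to a single-letter Gaussian one via maximum-entropy and differential-entropy inequalities. Throughout, independence of $x_2^{n}$ from $w$ (or of a noise-degraded surrogate of $x_2^{n}$ from $w$) will ensure that the genie does not inflate $I(w; y_2^{n})$.

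I would first tackle~\eqref{eq:detup1}, which is the cleanest. The plan is to hand $x_2^{n}$ to receiver~1 as genie side information. After subtracting the helper contribution, receiver~1 effectively sees the clean channel $\tilde y_1^{n} = \sqrt{P^{\alpha_{11}}}\,h_{11} x_1^{n} + z_1^{n}$. A standard noise-scaling step then rewrites $y_2^{n}$ as a scaled version of $\tilde y_1^{n}$ plus an independent noise that contains the helper contribution $\sqrt{P^{\alpha_{22}}} h_{22} x_2^{n}$ together with a Gaussian component. A Gaussian extremal inequality applied to $I(w; \tilde y_1^{n}|x_2^{n}) - I(w; y_2^{n})$ should produce the claimed single-letter form $\tfrac12\log\bigl(1 + P^{\alpha_{11}-\alpha_{21}}|h_{11}|^2/|h_{21}|^2 \cdot (1 + P^{\alpha_{22}}|h_{22}|^2)\bigr)$.

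For~\eqref{eq:upWT} I would apply the same recipe but with a noise-\emph{degraded} surrogate of the helper signal: the genie hands receiver~1 a scaled, noisy copy of $x_2^{n}$ at precisely the level $\phi_1 = (\alpha_{12} - (\alpha_{22}-\alpha_{21})^{+})^{+}$, which captures how much of the helper is visible to the eavesdropper above its noise floor relative to the cross interference. I expect the two logarithmic terms in~\eqref{eq:upWT} to emerge as two telescoping pieces: the first $\tfrac12\log(\,\cdot\,)$ accounting for the useful direct-link gain up to the bottleneck level $\phi_3 = \min\{\alpha_{21},\alpha_{12},(\alpha_{11}-\phi_1)^{+}\}$, and the second $\tfrac12\log(1 + P^{\phi_3 - \phi_1}|h_{22}|^2)$ accounting for the residual helper effect in the $\phi_3 - \phi_1$ gap; the $\min$ in $\phi_3$ identifies which link is the bottleneck. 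For~\eqref{eq:detup3}, the outer factor $\tfrac12$ strongly suggests a symmetrization/averaging step: I would derive two ``dual'' bounds (each $2 n R \leq \sum$ of three conditional mutual informations), obtained by alternately providing receiver~1 with a noisy surrogate of the helper's cross contribution and providing the eavesdropper with a noisy surrogate of receiver~1's cross contribution, and then average them. The three logarithmic terms in~\eqref{eq:detup3} should correspond respectively to an $x_2$-genie log, an $x_1$-genie log, and the combined $y_1$-capacity log, with the $\log 9$ absorbing the additive constants from the two entropy-power-style steps.

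I expect the most delicate step to be bound~\eqref{eq:upWT}. The difficulty is threefold: choosing the correct noise-level for the genie surrogate (parameterized by $\phi_1$) so that both logarithmic terms fall out naturally; invoking the secrecy constraint~\eqref{eq:secrecyconstraint} \emph{simultaneously} with (not after) the genie step, so that the helper's contribution to $y_2^{n}$ can be separated cleanly from the message contribution; and tracking the additive slack from the EPI/change-of-variable log-determinants carefully enough to land at the explicit constant $7.3$. The averaging construction for~\eqref{eq:detup3} and the constant $\log 9$ should be comparatively routine once the two dual genie bounds are in hand, and~\eqref{eq:detup1} should be a short warm-up involving no genie noise at all.
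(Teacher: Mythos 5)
Your general strategy (Fano plus weak secrecy, then genie-enhance and collapse to a single-letter Gaussian expression) matches the paper in spirit, and your instinct that~\eqref{eq:detup1} is the short warm-up and~\eqref{eq:upWT} is the delicate one is correct. However, there are two concrete divergences that would block your plan.

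For~\eqref{eq:upWT}, you write that you will enhance receiver~1 only, ``without touching the eavesdropper.'' The paper's proof of this bound crucially \emph{does} touch the eavesdropper: it replaces $y_2^n$ by a noise-degraded and scaled surrogate $\bar y_2(t) = \sqrt{P^{-(\alpha_{21}-\phi_3)}} y_2(t) + \bar z_2(t)$ (cf.\ \eqref{eq:defybdet}), justified via the Markov chain $w\to y_2^n\to\bar y_2^n$ so that $-I(w;y_2^n)\le -I(w;\bar y_2^n)$. This is exactly where the bottleneck exponent $\phi_3$ enters: the first logarithm in~\eqref{eq:upWT} has $P^{\alpha_{11}-\phi_3}$ rather than $P^{\alpha_{11}-\alpha_{21}}$, and that substitution is produced by the scaling $\sqrt{P^{-(\alpha_{21}-\phi_3)}}$ inside $\bar y_2$. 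Your proposal of a single helper-side genie at level $\phi_1$ handed only to receiver~1 will not make $\phi_3$ appear in the right place; moreover, the actual genie used by the paper on the receiver-1 side is the noisy helper observation $s_{22}(t) = \sqrt{P^{(\alpha_{22}-\alpha_{21})^+}}h_{22}x_2(t)+\tilde z_2(t)$ (exponent $(\alpha_{22}-\alpha_{21})^+$, not $\phi_1$), and further auxiliary genies $\bar x_1^n, \bar x_2^n$ are injected later (cf.\ Lemma~\ref{lm:differencea11}) to control the term $h(\bar y_2^n, s_{22}^n\mid w) - h(y_1^n, s_{22}^n\mid w)$. So the mechanism you propose for the ``telescoping'' of the two logarithms is not the mechanism that makes it work.

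For~\eqref{eq:detup3}, your averaging idea is directionally right (the paper sums two bounds to get $2nR$ and halves), but the specific pair you propose is off. You suggest ``providing the eavesdropper with a noisy surrogate of receiver~1's cross contribution,'' but in a converse one cannot enhance the eavesdropper's observation inside $-I(w;y_2^n)$: the secrecy constraint~\eqref{eq:secrecyconstraint} is assumed for the true $y_2^n$, and replacing it with a richer observation there changes the inequality the wrong way. The paper instead uses (i) a genie that hands $\{\tilde x_1^n, \tilde x_2^n, y_2^n\}$ to \emph{receiver~1} (so the eavesdropper's signal appears on the legitimate-receiver side, not as an enhancement of the eavesdropper), and (ii) a plain single-user capacity bound $nR \le I(x_1^n;y_1^n) + n\epsilon_{1,n}$ that uses no genie at all; these two are then summed. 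You should swap your second ``dual'' bound for this simple single-user bound, and recast your first bound so that $y_2^n$ sits inside the genie handed to receiver~1.
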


 \vspace{5pt}

The proof of Lemma~\ref{lm:gupper} is provided in the following subsections. 
Our converse  is based on genie-aided techniques. Specifically, we enhance the setting by using the following two approaches. a)  Adding noise with a certain power to the observation of eavesdropper. See    \eqref{eq:ups62366} later on, where $y_{2}(t)$ becomes $\bar{y}_{2}(t)$, which is a noisy version of $y_{2}(t)$.  b) Adding information to the legitimate receiver. See \eqref{eq:ups22det} and \eqref{eq:2R1add223} later on. 
The converse also builds on  some bounds on the difference of conditional differential
entropy of an interference channel (see Lemmas~\ref{lm:differencefirst2} and \ref{lm:differencea11} later on). More details of the converse proof are provided in the following subsections.

Based on Lemma~\ref{lm:gupper}, we provide the  secure GDoF upper bound in the following corollary.
\begin{corollary}  \label{cor:SGDoF}
For the Gaussian   wiretap channel with a helper  defined in Section~\ref{sec:system},  the secure  GDoF is upper bounded   by 
\begin{align}
d (\alpha_{11}, \alpha_{12}, \alpha_{21}, \alpha_{22}) \leq   \min\Bigl\{  & \underbrace{ \max \{  \phi_1    ,  (\alpha_{11} -  \phi_3)^+ \} +  (\phi_3   -    \phi_1 )^+}_{ Bound~1} ,  \non\\ &  \underbrace{\frac{1}{2}\bigl( (\alpha_{11} - \alpha_{21} )^+ +  (\alpha_{22} - \alpha_{12} )^+  +  \max\{ \alpha_{11}, \alpha_{12}  \}\bigr)}_{ Bound~2} ,    \non\\&   \underbrace{  (\alpha_{22} + \alpha_{11} - \alpha_{21})^+  }_{ Bound~3}    \Bigr\}  .
\end{align}
\end{corollary}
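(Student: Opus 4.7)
The plan is to obtain each of the three GDoF upper bounds in the corollary by applying the definition~\eqref{eq:GDoF99} directly to the corresponding rate bound from Lemma~\ref{lm:gupper}, i.e.\ dividing the right-hand side by $\tfrac{1}{2}\log P$ and letting $P\to\infty$. Since $h_{k\ell}\in(1,2]$, all multiplicative coefficients on the SNR terms are bounded away from $0$ and $\infty$, and the constants $7.3$, $\tfrac{1}{2}\log 9$, and the ``$1+$'' inside each logarithm contribute $o(\log P)$. The one elementary fact I would invoke repeatedly is that for any reals $a_1,\dots,a_n$ and any bounded positive constants $c_0,c_1,\dots,c_n$,
\begin{equation*}
\lim_{P\to\infty}\frac{\tfrac{1}{2}\log\bigl(c_0+c_1 P^{a_1}+\cdots+c_n P^{a_n}\bigr)}{\tfrac{1}{2}\log P}=\max\{0,a_1,\dots,a_n\}.
\end{equation*}

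Applying this rule to~\eqref{eq:upWT}, the first logarithm yields $\max\{0,\alpha_{11}-\phi_3,\alpha_{12}-(\alpha_{22}-\alpha_{21})^+\}$ while the second yields $(\phi_3-\phi_1)^+$. Using the definition $\phi_3=\min\{\alpha_{21},\alpha_{12},(\alpha_{11}-\phi_1)^+\}$ I observe $\phi_3\le\alpha_{11}$, so $\alpha_{11}-\phi_3=(\alpha_{11}-\phi_3)^+$; and using $\phi_1=(\alpha_{12}-(\alpha_{22}-\alpha_{21})^+)^+$ I absorb the max with $0$ into the third alternative, collapsing the first-logarithm contribution to $\max\{(\alpha_{11}-\phi_3)^+,\phi_1\}$. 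Summing with $(\phi_3-\phi_1)^+$ reproduces Bound~1 exactly.

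Applying the same rule to~\eqref{eq:detup3}, the three inner $\tfrac{1}{2}\log$ terms contribute $(\alpha_{11}-\alpha_{21})^+$, $(\alpha_{22}-\alpha_{12})^+$, and $\max\{\alpha_{11},\alpha_{12}\}$ respectively once the $\tfrac{1}{2}\log P$ in the denominator cancels each inner $\tfrac{1}{2}$, while the $\log 9$ term vanishes; the outer $\tfrac{1}{2}$ then reassembles these into Bound~2. For~\eqref{eq:detup1}, since $\alpha_{22}\ge 0$, the largest exponent inside the single logarithm is $\alpha_{22}+\alpha_{11}-\alpha_{21}$ when that quantity is positive and $0$ otherwise, yielding $(\alpha_{22}+\alpha_{11}-\alpha_{21})^+$, which is Bound~3. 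Taking the minimum of the three individual bounds completes the proof. There is no genuine obstacle here; the whole argument is high-SNR bookkeeping, with the only subtlety being the correct simplification of the inner maxima using the definitions of $\phi_1$ and $\phi_3$.
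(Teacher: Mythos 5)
Your argument is correct and is essentially the same as the paper's: each of the three GDoF bounds is extracted from the corresponding rate bound in Lemma~\ref{lm:gupper} by dividing by $\tfrac{1}{2}\log P$ and taking $P\to\infty$, exactly as the paper's (very terse) proof of Corollary~\ref{cor:SGDoF} indicates. The only thing you do beyond the paper is spell out the high-SNR bookkeeping and check the algebraic simplifications (that $\phi_3\le\alpha_{11}$ so the $(\cdot)^+$ on $\alpha_{11}-\phi_3$ is immaterial, that the $\max$ with $0$ folds into $\phi_1$ via its definition, and that $\alpha_{22}\ge 0$ makes the dominant exponent in bound~\eqref{eq:detup1} equal to $\alpha_{22}+\alpha_{11}-\alpha_{21}$); these checks are all correct.
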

\begin{proof}
The first bound $d   \leq   \max \{  \phi_1    ,  (\alpha_{11} -  \phi_3)^+ \} +  (\phi_3   -    \phi_1 )^+$ follows from the bound in \eqref{eq:upWT}. The second bound follows from the bound in \eqref{eq:detup3} and the last bound follows from the  bound in \eqref{eq:detup1}.
\end{proof}

The following result is a simplified version of Corollary~\ref{cor:SGDoF} for the symmetric setting with $\alpha_{11}= \alpha_{22}=1,   \alpha_{21}= \alpha_{12}=\alpha$. 
\begin{corollary}  \label{cor:symSGDoF}
For the symmetric Gaussian  wiretap channel with a helper  defined in Section~\ref{sec:system}, with $\alpha_{11}= \alpha_{22}=1,   \alpha_{21}= \alpha_{12}=\alpha$, the secure  GDoF is upper bounded  by 
\begin{subnumcases}
{d  (\alpha) \leq } 
  \quad  1   &                              for   \ $ 0 \leq \alpha \leq  1/2$        \non \\
 \quad 2 - 2\alpha  &  for \ $1/2  \leq \alpha \leq 3/4$    \non \\ 
  \quad 2\alpha -1  &  for \ $3/4  \leq \alpha \leq  5/6$    \non \\ 
 \quad      3/2 - \alpha  &                            for  \          $5/6 \leq \alpha \leq 1$   \non \\ 
   \quad       \alpha/2  &  for  \  $1  \leq \alpha \leq 4/3$   \non \\ 
     \quad                   2- \alpha  &  for \   $4/3 \leq  \alpha \leq 2$   \non \\ 
     \quad                                          0  &  for  \   $2\leq  \alpha$ .  \non 
\end{subnumcases}
\end{corollary}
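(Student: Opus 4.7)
Since Corollary~\ref{cor:SGDoF} is already established for general channel parameters, the plan is to specialize each of its three bounds under $\alpha_{11}=\alpha_{22}=1$ and $\alpha_{12}=\alpha_{21}=\alpha$, and then identify the pointwise minimum on each of the seven regimes listed in the statement. No new information-theoretic argument is required; the proof is purely an algebraic bookkeeping exercise.

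First I would evaluate the two auxiliary quantities that appear in Bound~1. Direct substitution yields
\begin{align*}
\phi_1 = \bigl(\alpha - (1-\alpha)^+\bigr)^+ = \begin{cases} 0, & 0 \leq \alpha \leq 1/2, \\ 2\alpha-1, & 1/2 \leq \alpha \leq 1, \\ \alpha, & \alpha \geq 1, \end{cases}
\end{align*}
and, using $\phi_3 = \min\{\alpha,(1-\phi_1)^+\}$,
\begin{align*}
\phi_3 = \begin{cases} \alpha, & 0 \leq \alpha \leq 2/3, \\ 2-2\alpha, & 2/3 \leq \alpha \leq 1, \\ 0, & \alpha \geq 1. \end{cases}
\end{align*}
Plugging these into Bound~1 $=\max\{\phi_1,(1-\phi_3)^+\}+(\phi_3-\phi_1)^+$ on the four subintervals $[0,1/2]$, $[1/2,2/3]$, $[2/3,3/4]$, $[3/4,1]$ and $[1,\infty)$ separately, Bound~1 reduces to $1$ on $[0,1/2]$, to $2-2\alpha$ on $[1/2,3/4]$, to $2\alpha-1$ on $[3/4,1]$, and to $\alpha$ on $[1,\infty)$. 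Bound~2 collapses to $\tfrac{1}{2}\bigl(2(1-\alpha)^+ + \max\{1,\alpha\}\bigr)$, which equals $\tfrac{3}{2}-\alpha$ on $[0,1]$ and $\alpha/2$ on $[1,\infty)$, while Bound~3 becomes $(2-\alpha)^+$.

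With the three curves tabulated I would sweep through the regimes and take the minimum: on $[0,1/2]$ Bound~1 already gives $1$; on $[1/2,3/4]$ Bound~1 $= 2-2\alpha$ undercuts Bound~2 (since $2-2\alpha \leq 3/2-\alpha \Leftrightarrow \alpha \geq 1/2$) and trivially undercuts Bound~3; on $[3/4,5/6]$ Bound~1 switches to $2\alpha-1$ and remains below Bound~2 until the crossover $2\alpha-1 = 3/2-\alpha$ at $\alpha=5/6$; on $[5/6,1]$ Bound~2 $= 3/2-\alpha$ takes over; on $[1,4/3]$ Bound~2 $= \alpha/2$ is smallest, with the crossover against Bound~3 $= 2-\alpha$ occurring at $\alpha=4/3$; on $[4/3,2]$ Bound~3 $= 2-\alpha$ is smallest; and for $\alpha \geq 2$ Bound~3 vanishes. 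Each comparison recovers exactly one of \eqref{thm:capacitydet1}--\eqref{thm:capacitydet8}.

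The only mild obstacle is Bound~1, whose two ingredients $\phi_1$ and $\phi_3$ have different breakpoints (at $\alpha=1/2$ and $\alpha=2/3$), so one must handle four subintervals on $[0,1]$ before the piecewise answer collapses onto the three intervals $[0,1/2]$, $[1/2,3/4]$, $[3/4,1]$ that appear in the final result; in particular, the internal crossover between $2-2\alpha$ and $2\alpha-1$ inside Bound~1 itself sits at $\alpha=3/4$, matching precisely the breakpoint between \eqref{thm:capacitydet2} and \eqref{thm:capacitydet3}. Once this case analysis is carried out, everything else reduces to routine linear comparisons between the three piecewise-affine upper bounds.
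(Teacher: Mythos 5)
Your proposal is correct and follows essentially the same route as the paper's own proof of Corollary~\ref{cor:symSGDoF}: substitute the symmetric parameters into Corollary~\ref{cor:SGDoF}, evaluate $\phi_1$ and $\phi_3$ piecewise, obtain the three piecewise-affine bounds, and take their pointwise minimum regime by regime. The paper organizes the case split more coarsely (over $[0,1/2]$, $[1/2,1]$, and $[1,\infty)$, carrying compound $\min$/$\max$ expressions into the final comparison), while you resolve the inner breakpoint of $\phi_3$ at $\alpha=2/3$ explicitly before combining, but the two computations are the same underneath.
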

\begin{proof}
See Appendix~\ref{sec:symSGDoF}. 
\end{proof}
\vspace{10pt}

In what follows, we provide the proof of Lemma~\ref{lm:gupper}. 
At first,   for $k,\ell   \in \{1,2\}, k \neq \ell$ we define that 
\begin{align}
    \phi_1 &\defeq  (\alpha_{12}- (\alpha_{22 } - \alpha_{21})^+ )^+   \label{eq:defphi1}  \\
     \phi_2 &\defeq  (\alpha_{11} - \phi_1)^+    \label{eq:defphi2} \\
     \phi_3 &\defeq \min\{\alpha_{21},  \alpha_{12},  \phi_2 \}  \label{eq:defphi3} \\
    s_{kk}(t)  & \defeq   \sqrt{P^{(\alpha_{kk}-\alpha_{k\ell})^+}} h_{kk} x_{k}(t) +  \tilde{z}_{k}(t)    \label{eq:defs11} \\
 s_{\ell k}(t)  & \defeq   \sqrt{P^{\alpha_{\ell k}}} h_{\ell k} x_{k}(t)  +z_{\ell}(t)  \label{eq:defs12}  \\
  \bar{x}_{1}(t)  &\defeq  \sqrt{ P^{    \min\{\alpha_{21},  \alpha_{12},  \alpha_{11} -   \phi_1\}   }}  h_{21} x_{1}(t)  +   \bar{z}_{3}(t)  \label{eq:defx1b} \\
            \bar{x}_{2}(t) & \defeq     \sqrt{P^{  \phi_3}}   \tilde{z}_{2}(t)   +   \bar{z}_{4}(t)     \label{eq:defx2b}   
 \end{align}
 and 
 \begin{align}
           \bar{y}_{2}(t)  &\defeq  \sqrt{P^{-(\alpha_{2 1} - \phi_3 )}} y_{2}(t)    +   \bar{z}_{2}(t)   \non\\
      &=  \sqrt{P^{\phi_3 }}  h_{21} x_{1}(t) +  \sqrt{P^{\alpha_{2 2} -(\alpha_{2 1} - \phi_3 )}}   h_{22} x_{2}(t)    +  \sqrt{P^{-(\alpha_{2 1} - \phi_3 )}}  z_{2}(t)
         +   \bar{z}_{2}(t)    \label{eq:defybdet} 
\end{align}
 where  $\tilde{z}_{1}(t), \tilde{z}_{2}(t), \bar{z}_{2}(t), \bar{z}_{3}(t), \bar{z}_{4}(t) \sim \mathcal{N}(0, 1)$  are  i.i.d.  noise random variables  that are independent of the other noise random variables and transmitted signals $\{x_1(t), x_2 (t)\}_t$.  Let $s^{\bln}_{kk} \defeq \{ s_{kk}(t)\}_{t=1}^{\bln}$, $s^{\bln}_{\ell k} \defeq \{ s_{\ell k}(t)\}_{t=1}^{\bln}$,  $\bar{x}^{\bln}_{k} \defeq \{ \bar{x}_{k}(t)\}_{t=1}^{\bln}$ and $\bar{y}^{\bln}_{2} \defeq \{ \bar{y}_{2}(t)\}_{t=1}^{\bln}$.

 \subsection{Proof of bound \eqref{eq:upWT} \label{sec:upWT} }

Let us focus on the proof of bound \eqref{eq:upWT}.  For the channel defined in Section~\ref{sec:system}, the secure rate is bounded as follows:
\begin{align}
  \bln R &= \Hen(w)    \nonumber \\
  &= \Imu(w; y^{\bln}_1) + \Hen(w| y^{\bln}_1)  \nonumber \\
  &\leq  \Imu(w; y^{\bln}_1) + \bln \epsilon_{1,n}    \label{eq:Fanodet}  \\
  &\leq  \Imu(w;  y^{\bln}_1) - \Imu(w; y^{\bln}_2) +  \bln \epsilon_{1,n} + \bln \epsilon   \label{eq:secrecydet}   \\ 
  &\leq  \Imu(w;  y^{\bln}_1,  s_{22}^{\bln}) - \Imu(w; y^{\bln}_2) +  \bln \epsilon_{1,n} + \bln \epsilon   \label{eq:ups22det}  \\
    &=    \underbrace{\Imu(w;  s_{22}^{\bln})}_{=0} +  \Imu(w;  y^{\bln}_1|  s_{22}^{\bln})  - \Imu(w; y^{\bln}_2) +  \bln \epsilon_{1,n} + \bln \epsilon   \nonumber  \\
    &=     \Imu(w;  y^{\bln}_1|  s_{22}^{\bln})  - \Imu(w; y^{\bln}_2) +  \bln \epsilon_{1,n} + \bln \epsilon   \label{eq:ups2244det}  \\
     &\leq      \Imu(w;  y^{\bln}_1|  s_{22}^{\bln})  - \Imu(w;   \bar{y}_{2}^{\bln}) +  \bln \epsilon_{1,n} + \bln \epsilon   \label{eq:ups62366}   
\end{align}
where \eqref{eq:Fanodet} follows from  Fano's inequality; 
\eqref{eq:secrecydet} results from secrecy constraint in \eqref{eq:secrecyconstraint}, i.e.,  $ \Imu(w; y^{\bln}_2)  \leq \bln \epsilon$ for an  arbitrary small  $\epsilon$;
 \eqref{eq:ups22det}  uses the fact that adding information will not reduce the mutual information;
\eqref{eq:ups2244det} follows  from the fact  that $w$ is independent of  $ x_{2}^{\bln}$  and $ s_{22}^{\bln}$ (cf.~\eqref{eq:defs11});
\eqref{eq:ups62366} stems from the fact that  $w \to y^{\bln}_2  \to  \bar{y}_{2}^{\bln}$ forms a Markov chain, which implies that $\Imu(w;   y^{\bln}_2) \geq  \Imu(w;   \bar{y}_{2}^{\bln})$.

We invoke the following lemma to bound  $\Imu(w;   \bar{y}_{2}^{\bln})$ appeared in \eqref{eq:ups62366}.

  \vspace{10pt}
\begin{lemma}  \label{lm:bound1112} 
For  $s_{22}(t)$ and $\bar{y}_{2}(t)$ defined  in  \eqref{eq:defs11} and \eqref{eq:defybdet},  the following inequality holds true
\begin{align}
\Imu(w;   \bar{y}_{2}^{\bln})  \geq   \Imu( w;  \bar{y}_{2}^{\bln} | s_{22}^{\bln})     -   \frac{\bln}{2}  \log  14  . \label{eq:bound1112}    
\end{align}
\end{lemma}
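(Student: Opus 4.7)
The plan is to reduce Lemma~\ref{lm:bound1112} to a bound of the form $\Imu(w;s_{22}^{n}\mid\bar{y}_{2}^{n})\leq\tfrac{n}{2}\log 14$, and then establish that bound by combining an upper bound on $h(s_{22}^{n}\mid\bar{y}_{2}^{n})$ coming from a deterministic linear combination of $\bar{y}_{2}^{n}$, together with a lower bound on $h(s_{22}^{n}\mid\bar{y}_{2}^{n},w)$ coming from the noise floor. First I would apply the chain rule of mutual information twice to get the identity
\begin{align*}
\Imu(w;\bar{y}_{2}^{n}\mid s_{22}^{n})-\Imu(w;\bar{y}_{2}^{n})
=\Imu(w;s_{22}^{n}\mid\bar{y}_{2}^{n})-\Imu(w;s_{22}^{n}).
\end{align*}
By the definition \eqref{eq:defs11}, $s_{22}^{n}$ is a function of $x_{2}^{n}$ and $\tilde{z}_{2}^{n}$ only, and both are independent of $w$ (the former through $w_{h}\perp w$, the latter as an external noise), so $\Imu(w;s_{22}^{n})=0$. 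Hence proving $\Imu(w;s_{22}^{n}\mid\bar{y}_{2}^{n})\leq\tfrac{n}{2}\log 14$ suffices.

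Next, I would write $\Imu(w;s_{22}^{n}\mid\bar{y}_{2}^{n})=h(s_{22}^{n}\mid\bar{y}_{2}^{n})-h(s_{22}^{n}\mid\bar{y}_{2}^{n},w)$ and bound the two terms separately. The lower bound is immediate: since $s_{22}^{n}-\sqrt{P^{(\alpha_{22}-\alpha_{21})^{+}}}h_{22}x_{2}^{n}=\tilde{z}_{2}^{n}$ and $\tilde{z}_{2}^{n}$ is independent of $(\bar{y}_{2}^{n},w,x_{2}^{n})$, conditioning further on $x_{2}^{n}$ yields
\begin{align*}
h(s_{22}^{n}\mid\bar{y}_{2}^{n},w)\geq h(s_{22}^{n}\mid\bar{y}_{2}^{n},w,x_{2}^{n})=h(\tilde{z}_{2}^{n})=\tfrac{n}{2}\log(2\pi e).
\end{align*}
For the upper bound, I would use the elementary inequality $h(s_{22}^{n}\mid\bar{y}_{2}^{n})\leq h(s_{22}^{n}-c\,\bar{y}_{2}^{n})$, valid for any deterministic scalar $c$, and select $c\defeq P^{-\phi_{3}/2}$. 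This specific choice makes the coefficient of $x_{1}^{n}$ in $s_{22}^{n}-c\bar{y}_{2}^{n}$ collapse to $h_{21}$ (bounded magnitude), scales the coefficients of $z_{2}^{n}$ and $\bar{z}_{2}^{n}$ to $P^{-\alpha_{21}/2}$ and $P^{-\phi_{3}/2}$ respectively (both at most one), and makes the $x_{2}^{n}$ coefficient $h_{22}\bigl(\sqrt{P^{(\alpha_{22}-\alpha_{21})^{+}}}-P^{(\alpha_{22}-\alpha_{21})/2}\bigr)$, which vanishes when $\alpha_{22}\geq\alpha_{21}$ and is bounded in magnitude by $|h_{22}|\leq 2$ otherwise. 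A direct per-coordinate variance calculation (using $\mathbb{E}|x_{k}(t)|^{2}\leq 1$, $|h_{k\ell}|^{2}\leq 4$, and $P\geq 1$) then bounds $\mathrm{Var}(s_{22}(t)-c\bar{y}_{2}(t))$ by a constant no larger than $14$, so that the Gaussian maximum-entropy inequality gives $h(s_{22}^{n}\mid\bar{y}_{2}^{n})\leq\tfrac{n}{2}\log(2\pi e\cdot 14)$. Subtracting the two bounds yields the claim.

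The main technical obstacle will be verifying that the residual variance stays bounded by the claimed constant \emph{uniformly in} $P$ across both regimes $\alpha_{22}\geq\alpha_{21}$ (where $x_{2}^{n}$ cancels exactly) and $\alpha_{22}<\alpha_{21}$ (where a small but nonzero $x_{2}^{n}$ term survives and must be absorbed by $|h_{22}|\leq 2$ and the fact that $1-P^{(\alpha_{22}-\alpha_{21})/2}\in(0,1]$), and across all the active cases in the definition $\phi_{3}=\min\{\alpha_{21},\alpha_{12},\phi_{2}\}$ in \eqref{eq:defphi3}. The choice $c=P^{-\phi_{3}/2}$ is precisely the scaling under which the $x_{1}^{n}$ component of $\bar{y}_{2}^{n}$ is normalized to unit power level in $s_{22}^{n}-c\bar{y}_{2}^{n}$; showing this normalization is simultaneously compatible with all the $x_{2}^{n}$-cancellation and noise-suppression requirements is the crux of the proof.
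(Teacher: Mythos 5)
Your proposal follows essentially the same route as the paper's own proof: reduce via the chain rule (and the independence $\Imu(w;s_{22}^{n})=0$) to bounding $\Imu(w;s_{22}^{n}\mid\bar{y}_{2}^{n})$, lower-bound $\hen(s_{22}^{n}\mid\bar{y}_{2}^{n},w)$ by further conditioning on $x_{2}^{n}$ to isolate $\tilde{z}_{2}^{n}$, and upper-bound $\hen(s_{22}^{n}\mid\bar{y}_{2}^{n})$ by subtracting the scaled copy $\sqrt{P^{-\phi_{3}}}\,\bar{y}_{2}^{n}$ so the residual has $O(1)$ variance. The scaling $c=P^{-\phi_{3}/2}$, the coefficient cancellations, and the constant $14$ all match the paper's argument, so this is not a different proof but a correct anticipation of the same one.
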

\begin{proof}
The proof of this lemma is provided in Appendix~\ref{sec:bound1112}.
\end{proof}

Then, by incorporating the result of Lemma~\ref{lm:bound1112} into \eqref{eq:ups62366}, it gives 
 \begin{align}
 & \bln R  - \frac{\bln}{2}  \log 14  -  \bln \epsilon_{1,n} - \bln \epsilon     \non\\
  \leq &    \Imu(w;  y^{\bln}_1|  s_{22}^{\bln})   - \Imu( w;  \bar{y}_{2}^{\bln} | s_{22}^{\bln})  \non   \\
     =  &    \hen( y^{\bln}_1|  s_{22}^{\bln})  - \hen( \bar{y}_{2}^{\bln} | s_{22}^{\bln})  + \hen(   \bar{y}_{2}^{\bln} | s_{22}^{\bln}, w) -  \hen(  y^{\bln}_1|  s_{22}^{\bln}, w)      \non    \\
         = &     \hen( y^{\bln}_1|  s_{22}^{\bln})  - \hen( \bar{y}_{2}^{\bln} | s_{22}^{\bln})  + \hen(     \bar{y}_{2}^{\bln} , s_{22}^{\bln} |  w)     -   \hen(  y^{\bln}_1, s_{22}^{\bln}| w)    \label{eq:ups9266}   
\end{align}
where  \eqref{eq:ups9266} uses the identities that $\hen(   \bar{y}_{2}^{\bln} | s_{22}^{\bln}, w) =   \hen(     \bar{y}_{2}^{\bln} , s_{22}^{\bln} |  w)   -  \hen( s_{22}^{\bln} |  w) $ and $  \hen(  y^{\bln}_1|  s_{22}^{\bln}, w)  = \hen(  y^{\bln}_1, s_{22}^{\bln}| w) - \hen( s_{22}^{\bln}| w)$.
For the first two terms  in \eqref{eq:ups9266},  we have an upper bound that is stated in the following lemma. 
 
 \vspace{10pt}
\begin{lemma}  \label{lm:differencefirst2} 
For  $s_{22}(t)$ and $\bar{y}_{2}(t)$ defined  in  \eqref{eq:defs11} and \eqref{eq:defybdet}, the following inequality holds true
 \begin{align}
    \hen( y^{\bln}_1|  s_{22}^{\bln})  - \hen( \bar{y}_{2}^{\bln} | s_{22}^{\bln})    \leq &   \frac{n}{2} \log \Bigl(1+ P^{\alpha_{11} - \phi_3} \cdot \frac{|h_{11}|^2}{|h_{21}|^2} +  P^{\alpha_{12} -(\alpha_{2 2} -\alpha_{2 1}  )^+} \cdot \frac{|h_{12}|^2}{|h_{22}|^2}  \Bigr)     +    \frac{n}{2} \log 10  . \non
\end{align}
\end{lemma}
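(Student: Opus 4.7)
The plan is to bound $\hen(y_1^n|s_{22}^n) - \hen(\bar{y}_2^n|s_{22}^n)$ by reducing it, via the chain rule for conditional differential entropy, to the entropy of an explicit linear residual that is dominated by Gaussian noise, and then applying the Gaussian maximization principle.

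First, from the identity $\hen(y_1^n, \bar{y}_2^n|s_{22}^n) = \hen(\bar{y}_2^n|s_{22}^n) + \hen(y_1^n|\bar{y}_2^n, s_{22}^n)$ together with $\hen(y_1^n|s_{22}^n) \leq \hen(y_1^n, \bar{y}_2^n|s_{22}^n)$,
\begin{align*}
\hen(y_1^n|s_{22}^n) - \hen(\bar{y}_2^n|s_{22}^n) \leq \hen(y_1^n|\bar{y}_2^n, s_{22}^n).
\end{align*}
Then for any constants $c_1,c_2 \in \R$, the identity $\hen(X|Y) = \hen(X-g(Y)|Y) \leq \hen(X-g(Y))$ gives
\begin{align*}
\hen(y_1^n|\bar{y}_2^n, s_{22}^n) \leq \hen\bigl((y_1 - c_1 \bar{y}_2 - c_2 s_{22})^n\bigr).
\end{align*}

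Next, I would choose $c_1 = \sqrt{P^{\alpha_{11}-\phi_3}}\,h_{11}/h_{21}$ so that the $x_1$ coefficient of $c_1\bar{y}_2$ matches the $\sqrt{P^{\alpha_{11}}}\,h_{11}$ carried by $y_1$, and choose $c_2$ so that the remaining $x_2$ contribution in the residual — which now includes the amplified $x_2$ term coming out of $c_1\bar{y}_2$ — combines with the $\tilde{z}_2$ noise injected by $c_2 s_{22}$ to give a joint scale on the order of $\sqrt{P^{\alpha_{12}-(\alpha_{22}-\alpha_{21})^+}}\,|h_{12}/h_{22}|$. The residual $y_1 - c_1 \bar{y}_2 - c_2 s_{22}$ is then a linear combination of $x_1$, $x_2$, and the independent Gaussian noises $z_1,z_2,\bar{z}_2,\tilde{z}_2$, so per-letter Gaussian maximization yields $\hen\bigl((y_1 - c_1 \bar{y}_2 - c_2 s_{22})^n\bigr) \leq \tfrac{n}{2}\log\bigl(2\pi e\,\operatorname{Var}(y_1 - c_1 \bar{y}_2 - c_2 s_{22})\bigr)$. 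Expanding the variance using $\E x_k^2 \leq 1$, $|h_{k\ell}| \leq 2$, and the key inequality $\phi_3 \leq \alpha_{21}$ (so $P^{-(\alpha_{21}-\phi_3)} \leq 1$), the three dominant contributions are $1$ from $z_1$; $c_1^2\bigl(P^{-(\alpha_{21}-\phi_3)} + 1\bigr) \lesssim P^{\alpha_{11}-\phi_3}|h_{11}|^2/|h_{21}|^2$ from the amplified $\bar{y}_2$ noises; and $c_2^2 \lesssim P^{\alpha_{12}-(\alpha_{22}-\alpha_{21})^+}|h_{12}|^2/|h_{22}|^2$ from the amplified $\tilde{z}_2$. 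The small absolute-constant overhead (including $2\pi e$, the bound $|h_{k\ell}| \leq 2$, and the $(a+b)^2 \leq 2(a^2 + b^2)$ splits) is absorbed into the $\tfrac{n}{2}\log 10$ slack.

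The hard part will be controlling the cross-contamination: scaling $\bar{y}_2$ by $c_1$ amplifies its $x_2$ term to an amplitude of order $\sqrt{P^{\alpha_{11}+\alpha_{22}-\alpha_{21}}}\,|h_{11}h_{22}/h_{21}|$, which in general exceeds the target scale and must be folded into the $c_2 s_{22}$ correction in a way that does not in turn blow up $c_2^2$. I would handle this by case-splitting on the sign of $\alpha_{22}-\alpha_{21}$ (since $(\alpha_{22}-\alpha_{21})^+$ switches behavior at zero) and, in each case, either by directly cancelling the residual $x_2$ coefficient and checking that the resulting $c_2^2$ meets the target, or by solving the scalar least-squares problem $\min_{c_2}\{(\text{residual }x_2\text{-coef})^2 + c_2^2\}$, whose minimum takes the tight form $a^2/(1+b^2)$. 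The structural identities $\phi_3 = \min\{\alpha_{21},\alpha_{12},(\alpha_{11}-\phi_1)^+\}$ and $\phi_1 = (\alpha_{12}-(\alpha_{22}-\alpha_{21})^+)^+$ will be used to verify that the resulting scaling matches the claimed bound in every parameter regime.
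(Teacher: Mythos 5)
The gap is in your very first reduction, not in the linear-algebra that follows it. You write
\begin{align*}
\hen(y_1^{\bln}\mid s_{22}^{\bln}) - \hen(\bar{y}_2^{\bln}\mid s_{22}^{\bln}) \;\leq\; \hen(y_1^{\bln}\mid \bar{y}_2^{\bln}, s_{22}^{\bln}),
\end{align*}
which is logically valid (it follows from $\hen(\bar{y}_2^{\bln}\mid y_1^{\bln}, s_{22}^{\bln})\geq 0$, true here because of the independent additive noise $\bar{z}_2$), but it silently discards the quantity $\hen(\bar{y}_2^{\bln}\mid y_1^{\bln}, s_{22}^{\bln})$, and that quantity is \emph{not} $\Oc(n)$; it scales like $\Theta(n\log P)$ for generic parameters. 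Consequently the right-hand side $\hen(y_1^{\bln}\mid \bar{y}_2^{\bln}, s_{22}^{\bln})$ can exceed the claimed bound by an amount that grows without bound in $P$, and no choice of $c_1,c_2$ can repair this: for Gaussian inputs, $\min_{c_1,c_2}\hen\bigl(y_1-c_1\bar{y}_2-c_2 s_{22}\bigr) = \hen(y_1\mid \bar{y}_2, s_{22})$ exactly, so the least-squares optimization you propose is already accounted for.

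Concretely, take the symmetric point $\alpha_{11}=\alpha_{22}=1$, $\alpha_{12}=\alpha_{21}=1/2$, so $\phi_1=0$, $\phi_3=1/2$, $(\alpha_{22}-\alpha_{21})^+=1/2$, and the claimed bound is $\tfrac{n}{2}\log\bigl(1+P^{1/2}\tfrac{|h_{11}|^2}{|h_{21}|^2}+\tfrac{|h_{12}|^2}{|h_{22}|^2}\bigr)+\tfrac{n}{2}\log 10\approx \tfrac{n}{4}\log P$. With $x_1,x_2$ i.i.d.\ $\mathcal N(0,1)$ and unit channel coefficients, a direct computation of the $3\times 3$ Gaussian covariance of $(y_1,\bar{y}_2,s_{22})$ gives $\hen(y_1\mid\bar{y}_2,s_{22})\approx\tfrac12\log(2\pi e\cdot P/2)$, i.e.\ your bound is of order $\tfrac{n}{2}\log P$, roughly double the target exponent. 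What is happening is that, after canceling $x_1$ by subtracting $c_1\bar{y}_2$ with $c_1\sim\sqrt{P^{\alpha_{11}-\phi_3}}$, the amplified $x_2$ term sits at level $P^{\alpha_{11}+\alpha_{22}-\alpha_{21}}$, and even solving $\min_{c_2}\{(A-c_2B)^2+c_2^2\}=A^2/(1+B^2)$ with $B^2\sim P^{(\alpha_{22}-\alpha_{21})^+}$ only brings it down to $\sim P^{\alpha_{11}}$, not $P^{\alpha_{11}-\phi_3}$. The reason the claimed bound is nevertheless true is that $\hen(\bar{y}_2^{\bln}\mid y_1^{\bln}, s_{22}^{\bln})$ is \emph{also} $\approx\tfrac{n}{4}\log P$ here and precisely cancels the overshoot.

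The paper avoids this by never conditioning on $\bar{y}_2$. It first applies ``conditioning reduces entropy'' to the \emph{second} term only, replacing $\hen(\bar{y}_2^{\bln}\mid s_{22}^{\bln})$ by $\hen(\bar{y}_2^{\bln}\mid s_{22}^{\bln}, z_2^{\bln})$ to strip the $z_2$ noise out of $\bar{y}_2$, and then invokes a separate difference-of-conditional-entropies lemma (Lemma~\ref{lm:differenceb}, i.e.\ \cite[Lemma~9]{ChenIC:18}) with $\bar w_2 = s_{22}^{\bln}$. That lemma bounds the \emph{difference} $\hen(y_1^{\bln}\mid\bar w_2)-\hen(y_2'^{\bln}\mid\bar w_2)$ directly, exploiting the cancellation between the two terms that your decomposition throws away. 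To make your plan correct you would have to keep that difference structure and re-derive (or re-cite) a two-channel lemma of that type, rather than collapsing to a single conditional entropy.
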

\begin{proof}
The proof of this lemma is provided in Appendix~\ref{sec:differencefirst2}.
\end{proof}

 For the last two terms in \eqref{eq:ups9266},  we have an upper bound that is stated in the following lemma. 
 
 \vspace{10pt}
\begin{lemma}  \label{lm:differencea11} 
For  $y_{2}(t)$ defined  in \eqref{eq:Cmodel}, and $s_{22}(t) $  defined in \eqref{eq:defs11},  the following inequality holds true
\begin{align}
\hen(     \bar{y}_{2}^{\bln} , s_{22}^{\bln} |  w)     -   \hen(  y^{\bln}_1, s_{22}^{\bln}| w) 
\leq   \frac{n}{2} \log \bigl(1+      P^{ \phi_3  - \phi_1  }  |h_{22}|^2 \bigr)      +  \frac{n}{2} \log 168    . \label{eq:differencea11}    
\end{align}
\end{lemma}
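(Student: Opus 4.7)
The approach is to apply the chain rule of differential entropy to cancel the common factor $\hen(s_{22}^n \mid w)$, reducing the problem to bounding
$$\hen(\bar y_2^n \mid s_{22}^n, w) - \hen(y_1^n \mid s_{22}^n, w).$$
The strategy is then to bound each conditional entropy separately using a carefully chosen additive shift by a scaled copy of $s_{22}^n$, exploiting the structure of $s_{22}$ as a noisy observation of $x_2$.

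For the upper bound on $\hen(\bar y_2^n \mid s_{22}^n, w)$, I would subtract from $\bar y_2$ a scaled copy of $s_{22}$ (specifically, $\sqrt{P^{\alpha_{22}-\alpha_{21}+\phi_3-(\alpha_{22}-\alpha_{21})^+}} s_{22}$) chosen to eliminate the $x_2$ contribution entirely. Since conditioning on $s_{22}^n$ is invariant under additive shifts by functions of $s_{22}^n$, this leaves a signal of the form $\sqrt{P^{\phi_3}} h_{21} x_1^n$ plus a bounded linear combination of the noise terms $z_2^n, \bar z_2^n, \tilde z_2^n$; the standard $\frac{1}{2}\log(2\pi e\cdot\mathrm{variance})$ bound then applies. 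For the lower bound on $\hen(y_1^n \mid s_{22}^n, w)$, the analogous substitution $\tilde y_1 = y_1 - \sqrt{P^{\phi_1}}(h_{12}/h_{22}) s_{22}$ leaves $\tilde y_1^n = \sqrt{P^{\alpha_{11}}} h_{11} x_1^n + z_1^n - \sqrt{P^{\phi_1}}(h_{12}/h_{22}) \tilde z_2^n$. I would then decompose $\tilde y_1^n$ as a sum of two components which are conditionally independent given $(s_{22}^n, w)$ -- namely $W_1 = \sqrt{P^{\alpha_{11}}} h_{11} x_1 + z_1$ (independent of $s_{22}^n$ given $w$, since $x_1$ and $z_1$ are independent of $x_2$ and $\tilde z_2$) and $W_2 = -\sqrt{P^{\phi_1}}(h_{12}/h_{22}) \tilde z_2$ -- and apply the conditional entropy power inequality. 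The $W_1$ summand contributes at least $\frac{n}{2}\log(2\pi e)$ from $z_1^n$ alone, while the $W_2$ summand contributes a term of order $\frac{n}{2}\log(P^{\phi_1}|h_{12}|^2/|h_{22}|^2)$ after a lower bound on $\hen(\tilde z_2^n \mid s_{22}^n)$ that uses the Gaussianity of $\tilde z_2$ together with the unit power constraint on $x_2$.

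Combining the two bounds, the $P^{\phi_3}|h_{21}|^2$ factor in the upper bound cancels against the $P^{\phi_1}|h_{12}|^2/|h_{22}|^2$ factor in the lower bound, producing $P^{\phi_3-\phi_1}|h_{22}|^2$ inside the logarithm, which matches the claim up to the additive slack $\frac{n}{2}\log 168$ that absorbs the numerical constants $|h_{21}|^2/|h_{12}|^2 \in (1/4,4)$ from $h_{k\ell}\in (1,2]$ and the $2\pi e$ factors. The main obstacle will be the lower bound on $\hen(\tilde z_2^n\mid s_{22}^n)$: since $\tilde z_2^n$ is part of $s_{22}^n$, this conditional entropy can collapse when $x_2^n$ has small differential entropy (e.g., when $x_2$ is nearly degenerate). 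The added noise $\bar z_2$ and the scaling $\sqrt{P^{-(\alpha_{21}-\phi_3)}} z_2$ in the very definition of $\bar y_2$ appear to be engineered precisely so that this EPI lower-bound step goes through uniformly in the distribution of $x_2^n$; verifying this rigorously for arbitrary admissible $(x_1^n, x_2^n)$ is where the bulk of the technical work lies.
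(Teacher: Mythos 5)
There is a genuine gap in your proposal, concentrated exactly where you flag uncertainty. After subtracting $\hen(s_{22}^{\bln}\mid w)$ from both terms, you need a \emph{lower} bound on $\hen(y_1^{\bln}\mid s_{22}^{\bln},w)$ that scales like $\tfrac{n}{2}\log\bigl(P^{\phi_1}|h_{12}|^2/|h_{22}|^2\bigr)$ in order to divide against the $P^{\phi_3}$ in the upper bound on $\hen(\bar y_2^{\bln}\mid s_{22}^{\bln},w)$. Your EPI route routes this through a lower bound on $\hen(\tilde z_2^{\bln}\mid s_{22}^{\bln})$, and this quantity is genuinely unbounded below over admissible codes: if $x_2^{\bln}$ is (nearly) degenerate, then $s_{22}^{\bln}\approx\tilde z_2^{\bln}$ and $\hen(\tilde z_2^{\bln}\mid s_{22}^{\bln})\to -\infty$. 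Your proposed rescue -- that the extra noise $\bar z_2$ and the attenuation $\sqrt{P^{-(\alpha_{21}-\phi_3)}}$ in the definition of $\bar y_2$ are engineered to save this step -- cannot work, because those terms live in $\bar y_2$, not in $s_{22}$, and therefore have no effect whatsoever on $\hen(\tilde z_2^{\bln}\mid s_{22}^{\bln})$ or on $\hen(y_1^{\bln}\mid s_{22}^{\bln},w)$ ($y_1$ contains neither $\bar z_2$ nor $z_2$). So the separate ``upper minus lower'' decomposition, with the lower half handled by conditional EPI, does not close.

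The paper's proof sidesteps this entirely. It never lower bounds $\hen(y_1^{\bln}\mid s_{22}^{\bln},w)$. Instead, it works with the \emph{joint} quantity $\hen(\bar y_2^{\bln},s_{22}^{\bln}\mid w)-\hen(y_1^{\bln},s_{22}^{\bln}\mid w)$ and injects three auxiliary genie sequences: $\bar x_1^{\bln}$ (a noisy attenuated copy of $x_1$, eq.~\eqref{eq:defx1b}), $s_{12}^{\bln}$ (eq.~\eqref{eq:defs12}), and crucially $\bar x_2^{\bln}$ defined as $\bar x_2(t)=\sqrt{P^{\phi_3}}\,\tilde z_2(t)+\bar z_4(t)$, a fixed noisy amplification of $\tilde z_2$ that does \emph{not} depend on the code. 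A careful chain-rule expansion then produces the identity $\hen(s_{22}^{\bln},s_{12}^{\bln}\mid\bar x_1^{\bln},w)=\hen(y_1^{\bln},s_{22}^{\bln}\mid\bar x_1^{\bln},w,x_1^{\bln})$, so the two terms that would otherwise require a lower bound combine into $-\Imu(x_1^{\bln};y_1^{\bln},s_{22}^{\bln}\mid\bar x_1^{\bln},w)\leq 0$ and simply drop. The $P^{\phi_3-\phi_1}|h_{22}|^2$ factor then arises as an upper bound on $J_{44}=\hen(\bar x_2^{\bln}\mid s_{22}^{\bln},s_{12}^{\bln},\bar x_1^{\bln},w)$, which is controlled uniformly over codes precisely because $\bar x_2$ has a fixed Gaussian law with variance $1+P^{\phi_3}$ (no dependence on the distribution of $x_2^{\bln}$). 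The two bounds on $J_{44}$ -- unconditionally by $\tfrac{1}{2}\log(2\pi e(1+P^{\phi_3}))$, and, after subtracting a linear combination of $s_{22}$ and $s_{12}$, by $\tfrac{1}{2}\log(2\pi e(1+P^{\phi_3-(\alpha_{12}-(\alpha_{22}-\alpha_{21})^+)}|h_{22}|^2/|h_{12}|^2))$ -- are combined via a $\min$, which is exactly how the $(\cdot)^+$ inside $\phi_1$ appears. If you want to salvage your structure, the moral is that any quantity playing the role of your $W_2$ must be a fresh, code-independent noise variable like $\bar x_2$ rather than $\tilde z_2$ itself; that is the key idea your proposal is missing.
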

\begin{proof}
The proof of this lemma is provided in Appendix~\ref{sec:differencea11}.
\end{proof}

Finally, by incorporating  the results of Lemmas~\ref{lm:differencefirst2}  and \ref{lm:differencea11}  into \eqref{eq:ups9266},  the secure rate is bounded by 
 \begin{align}
     R  
     &\leq    \frac{1}{2} \log \Bigl(1+ P^{\alpha_{11} - \phi_3} \cdot \frac{|h_{11}|^2}{|h_{21}|^2} +  P^{\alpha_{12} -(\alpha_{2 2} -\alpha_{2 1}  )^+} \cdot \frac{|h_{12}|^2}{|h_{22}|^2}  \Bigr)       +   \frac{1}{2} \log \bigl(1+      P^{ \phi_3  - \phi_1  }  |h_{22}|^2   \bigr)   +   7.3   +    \epsilon_{1,n}  + \epsilon     .   \non  
\end{align} 
Letting  $n\to \infty, \  \epsilon_{1,n} \to 0, \ \epsilon_{2,n} \to 0$ and  $\epsilon \to 0$, we  get the desired bound \eqref{eq:upWT}.

\subsection{Proof of bound \eqref{eq:detup3} \label{sec:detup3} }

Let us now prove the bound \eqref{eq:detup3}.    
Let \[\tilde{x}_{k}(t) \defeq \sqrt{P^{\max\{\alpha_{k k}, \alpha_{\ell k}\}}}x_{k}(t) +  \tilde{z}_{k}(t) \]  and  $ \tilde{x}^{\bln}_{k} \defeq \{ \tilde{x}_{k}(t)  \}_{t=1}^{\bln} $ for $k,\ell   \in \{1,2\}, k \neq \ell$,  where  $\tilde{z}_{k}(t) \sim \mathcal{N}(0, 1)$ is a virtual noise that is independent of the other noise and transmitted signals. Recall that \[s_{\ell k}(t)  \defeq   \sqrt{P^{\alpha_{\ell k}}} h_{\ell k} x_{k}(t)  +z_{\ell}(t)\] for $k,\ell   \in \{1,2\}, k \neq \ell$ (cf.~\eqref{eq:defs12})
Beginning with Fano's inequality, the secure rate  is bounded as:
\begin{align}
  \bln R -  \bln \epsilon_{1,n}   
&\leq  \Imu(w; y^{\bln}_1)   \non \\
  &\leq  \Imu(w; y^{\bln}_1)    - \Imu(w; y^{\bln}_2) + \bln \epsilon    \label{eq:2R1sec325}  \\
  &\leq  \Imu(w; y^{\bln}_1, \tilde{x}^{\bln}_{1}, \tilde{x}^{\bln}_{2},    y^{\bln}_2 )   - \Imu(w; y^{\bln}_2) + \bln \epsilon    \label{eq:2R1add223}  \\
        &=   \hen(\tilde{x}^{\bln}_{1}, \tilde{x}^{\bln}_{2} ) -  \hen(y^{\bln}_2 )  + \hen(y^{\bln}_1,y^{\bln}_2 | \tilde{x}^{\bln}_{1}, \tilde{x}^{\bln}_{2} )  - \hen(y^{\bln}_1, \tilde{x}^{\bln}_{1}, \tilde{x}^{\bln}_{2} |   y^{\bln}_2, w )   + \bln \epsilon   \non  \\
        &\leq    \hen(\tilde{x}^{\bln}_{1}, \tilde{x}^{\bln}_{2} ) -  \hen(s^{\bln}_{21} )  + \hen(y^{\bln}_1,y^{\bln}_2 | \tilde{x}^{\bln}_{1}, \tilde{x}^{\bln}_{2} )  - \hen(y^{\bln}_1, \tilde{x}^{\bln}_{1}, \tilde{x}^{\bln}_{2} |   y^{\bln}_2, w )   + \bln \epsilon     \label{eq:2R1Markove8836}   
 \end{align}
where \eqref{eq:2R1sec325}  results  from a secrecy constraint (cf.~\eqref{eq:secrecyconstraint});
\eqref{eq:2R1add223}  stems from the fact that adding information does not decrease the mutual information;
  \eqref{eq:2R1Markove8836}  follows from the derivation that $ \hen(y^{\bln}_2 ) \geq  \hen(y^{\bln}_2 |x^{\bln}_2) = \hen(s^{\bln}_{21} )$.  Note that $y_2(t) =    \sqrt{P^{\alpha_{22}}} h_{22} x_{2}(t) +   s_{21}(t) $. 
 On the other hand, we have 
\begin{align}
   \bln R   &\leq  \Imu(x^{\bln}_{1}; y^{\bln}_1 )  + \bln \epsilon_{1,n}   \label{eq:2R1Markove82435}  \\
  &=    \hen(y^{\bln}_1 )  -  \hen( s^{\bln}_{12} | x^{\bln}_{1})   +  \bln \epsilon_{1,n}     \label{eq:2R1y122}  \\
  &=    \hen(y^{\bln}_1 )  -  \hen( s^{\bln}_{12})   + \bln \epsilon_{1,n}     \label{eq:2R1ind254}  
 \end{align}
where  
\eqref{eq:2R1Markove82435}  follows from  the Markov chain  of $w \to  x^{\bln}_{1}  \to  y^{\bln}_1$;  
\eqref{eq:2R1y122} results from the fact that $y_1(t) =    \sqrt{P^{\alpha_{11}}} h_{11} x_{1}(t) +   s_{12}(t) $;
\eqref{eq:2R1ind254}  follows from the independence between $x^{\bln}_{1}$ and $s^{\bln}_{12}$.
Finally, by combining \eqref{eq:2R1Markove8836} and \eqref{eq:2R1ind254}, it gives
\begin{align}
 &  2\bln R  -  2\bln \epsilon_{1,n} - \bln \epsilon    \non  \\
 &\leq    \hen(\tilde{x}^{\bln}_{1}) -  \hen( s^{\bln}_{21})  +   \hen( \tilde{x}^{\bln}_{2} ) -  \hen( s^{\bln}_{12})  + \hen(y^{\bln}_1 )     + \hen(y^{\bln}_1,y^{\bln}_2 | \tilde{x}^{\bln}_{1}, \tilde{x}^{\bln}_{2} )  - \hen(y^{\bln}_1, \tilde{x}^{\bln}_{1}, \tilde{x}^{\bln}_{2} |   y^{\bln}_2, w ).    \label{eq:2R1sum8256}
 \end{align}
 At this point, by following the steps from (171)-(176) in \cite{ChenIC:18}, 
 we end up with
\begin{align}
 &  2 R +   2 \epsilon_{1,n}  -  \epsilon    \non  \\
      &\leq   \frac{1}{2} \log \bigl(1+   \frac{P^{ (\alpha_{11} - \alpha_{21} )^+}}{|h_{21}|^2}   \bigr)    +   \frac{1}{2} \log \bigl(1+   \frac{P^{ (\alpha_{22} - \alpha_{12} )^+}}{|h_{12}|^2}   \bigr)  + \frac{1}{2} \log \bigl(1+ P^{\alpha_{11}}   |h_{11}|^2  + P^{\alpha_{12}}   |h_{12}|^2  \bigr)        +  \log 9 .  \non
 \end{align}
By setting  $n\to \infty, \  \epsilon_{1,n} \to 0 $ and  $\epsilon \to 0$, it gives bound \eqref{eq:detup3}.

\subsection{Proof of bound \eqref{eq:detup1} \label{sec:detup12} }

Bound \eqref{eq:detup1} is directly from \cite[Lemma~8]{ChenIC:18}.

\section{Achievability  \label{sec:CJGau} }

This section focuses on  the \emph{symmetric} Gaussian wiretap channel with a helper defined in Section~\ref{sec:system}. 
Note that we consider the symmetric channel mainly for illustration simplicity. The key ideas of the achievable scheme presented in this section could be generalized to asymmetric settings as well. 
For this channel, we will provide a cooperative jamming scheme to  achieve the optimal secure GDoF expressed in Theorem~\ref{thm:GDoF}.  
The proposed scheme will use   pulse  amplitude modulation (PAM) and signal alignment.  
The details of the scheme are described as follows. 

 \subsubsection{Codebook}
At transmitter~$1$, a codebook is  generated as
  \begin{align}
     \mathcal{B} \defeq \Bigl\{  v^{\bln} (w,  w_0):  \  w \in \{1,2,\cdots, 2^{\bln R}\},   
      w_0 \in \{1,2,\cdots, 2^{\bln R_0}\}   \Bigr\}     \label{eq:code2341J}
     \end{align}
with $v^{\bln} $ being the codewords. All the codewords' elements are independent and identically generated according to a specific distribution. 
  $R$ and  $R_0$ are the rates  of  the confidential message $w$  and the confusion message $w_0$, respectively.  The purpose of using the confusion message is to  guarantee the security of the  confidential message.
  The message will be mapped to a codeword under the following two steps.  First, given the message $w$, a  sub-codebook  $\mathcal{B}( w) $ is selected, where   $\mathcal{B}( w) $  is   defined  as 
\[   \mathcal{B} (w)  \defeq \bigl\{ v^{\bln} (w,  w_0): \  w_0 \in \{1,2,\cdots, 2^{\bln R_0}\}   \bigr\}.   \]
Second,   transmitter~1 \emph{randomly} selects a codeword from  the selected sub-codebook   based on  a uniform distribution.
Then, the channel input is mapped from selected codeword $v^{\bln}$  such that 
 \begin{align}
  x_1 (t) =   h_{22}  v (t)      \label{eq:xvkkk}      
   \end{align}
  for  $t=1,2, \cdots, \bln$, where $v (t)$ denotes  the $t$th element of $v^{\bln}$.

\subsubsection{Constellation and  alignment}
In the proposed scheme, each codeword  $v^{\bln} $ is generated  such that each element takes the following form
 \begin{align}
   v (t)  =     \sqrt{P^{ - \beta_{c}}}  \cdot  v_{c} (t) + \sqrt{P^{ - \beta_{m}}}  \cdot  v_{m} (t) +   \sqrt{P^{ - \beta_{p}}}    \cdot  v_{p}(t)   \label{eq:xvk}  
 \end{align}
 which suggests that   the  input $ x_1$  in~\eqref{eq:xvkkk} can be described as 
 \begin{align}
  x_1  = &    \sqrt{P^{ - \beta_{c}}}   h_{22} v_{c}  +  \sqrt{P^{ - \beta_{m}}}   h_{22}v_{m}   +   \sqrt{P^{ - \beta_{p} }}    h_{22} v_{p} \label{eq:xvkkk1}  
 \end{align}
without the time index for simplicity (same for the next signal descriptions).  For transmitter~2 (the helper), the transmitted signal is  a cooperative jamming signal designed as  
    \begin{align}
  x_2  = &    \sqrt{P^{ \alpha - 1 - \beta_{c} }}   h_{21}  u_{c} +    \sqrt{P^{ \alpha - 1 - \beta_{m}  }}   h_{21}  u_{p} .     \label{eq:u3def}  
 \end{align}
For the above transmitted signals,  the random variables $v_{c}$, $v_{m}$, $v_{p}$, $u_{c}$ and $u_{p}$ are  \emph{independently} (cross symbols and times) and \emph{uniformly}  drawn from the corresponding PAM constellation sets
 \begin{align}
   v_{c} ,  u_{c}     &  \in    \Omega ( \xi =  \frac{  6 \gamma}{Q} ,   \   Q =  P^{ \frac{ \lambda_{c} }{2}} )  \label{eq:constellationGsym1}   \\ 
     v_{m} ,  u_{p}     &  \in    \Omega ( \xi =   \frac{ 2 \gamma}{Q} ,   \   Q =  P^{ \frac{ \lambda_{m} }{2}} )  \label{eq:constellationGsym3}   \\ 
      v_{p}      &  \in    \Omega ( \xi =  \frac{ \gamma}{Q} ,   \   Q = P^{ \frac{  \lambda_{p} }{2}} )    \label{eq:constellationGsym2}    
 \end{align}
 where   $\Omega (\xi,  Q)  \defeq   \{ \xi \cdot a :   \    a \in  \Zc  \cap [-Q,   Q]   \}$  denotes the PAM constellation set,   and  $\gamma$ is a  finite constant such that
  \begin{align}
\gamma  \in (0, 1/ 20  ].   \label{eq:gammadef} 
 \end{align}
In   Table~\ref{tab:wthpara} we provide the  parameters $\{\beta_{c}, \beta_{m},\beta_{p},  \lambda_{c},  \lambda_{m} ,   \lambda_{p} \}$  under different regimes\footnote{Without loss of generality, we assume that $P^{ \frac{ \lambda_{c} }{2}}$, $P^{ \frac{ \lambda_{m} }{2}}$ and $P^{ \frac{  \lambda_{p} }{2}}$ are integers. Consider one example with  $\lambda_{c} = \alpha  -1/2 - \epsilon$ and $P^{ \frac{ \lambda_{c} }{2}}=\sqrt{ P^{ \alpha  -1/2 - \epsilon}}$. When $\sqrt{ P^{ \alpha  -1/2 - \epsilon}}$ is not an integer, we can slightly modify $\epsilon$ such that $\sqrt{ P^{ \alpha  -1/2 - \epsilon}}$ is an integer, for the regime with large $P$.}.  
If the parameters are set as $\beta_{p} = \infty$ and $\lambda_{p}=0$, we will treat  $ v_{p}$ as an empty term  in the transmitted signal. Similar implication is applied to $\{   v_{c} ,  u_{c} ,   v_{m} ,  u_{p} \}$ .
In Fig.~\ref{fig:schematic_WTH} we provide a schematic representation of the proposed scheme, focusing on  transmitter~1.

 \begin{figure}[h]
\centering
\includegraphics[width=16cm]{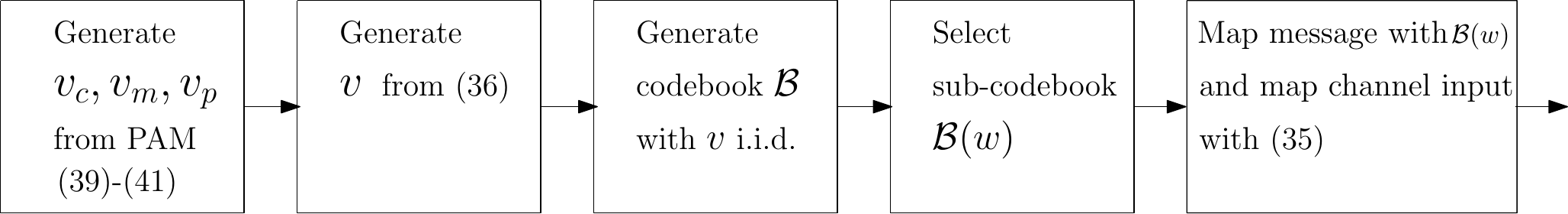}
\caption{A schematic representation of the proposed scheme, focusing on transmitter~1.}
\label{fig:schematic_WTH}
\end{figure}

Given our signal design,   the power constraints  $\E |x_1|^2 \leq 1$ and $\E |x_2|^2 \leq 1$ are satisfied.  
Focusing on the first transmitter, we have  
\begin{align}
 \E |v_{c}|^2   &=  \frac{2 \times (\frac{  6 \gamma}{Q})^2 }{ 2Q +1}  \sum_{i=1}^{Q} i^2  =  \frac{  (\frac{  6 \gamma}{Q})^2 \cdot Q(Q+1)}{3}  \leq   \frac{  72 \gamma^2 }{3}   \non\\   
  \E |v_{m}|^2    &\leq   \frac{  8 \gamma^2 }{3}  \non\\ 
  \E |v_{p}|^2    &\leq   \frac{  2  \gamma^2 }{3}  
\end{align} 
which suggests that 
\begin{align}
 \E |x_1|^2 \leq  4\times  (\frac{  72 \gamma^2 }{3} + \frac{  8 \gamma^2 }{3}   +  \frac{  2  \gamma^2 }{3}    )  = \frac{328}{3}  \gamma^2 \leq  \frac{328}{3}  \times  \frac{1}{ 400}  <1 .
\end{align} 
Similarly, we have $\E |x_2|^2 \leq 1$. 
Note that, with our parameter design it holds true that $\beta_{c} \geq  \alpha - 1$ and   $\beta_{m} \geq 2 \alpha - 1$, which controls the average power of the transmitted signal $x_2$  to satisfy $\E |x_2|^2 \leq 1$.

The above signal design then implies the following forms of the received signals
\begin{align}
y_{1} &=    \sqrt{P^{ 1 - \beta_{c}}} h_{11}    h_{22} v_{c} +   \sqrt{P^{ 1 - \beta_{m} }} h_{11}    h_{22} v_{m}  +    \sqrt{P^{ 1 - \beta_{p} }} h_{11}    h_{22}  v_{p}  \non\\& \quad  +     \sqrt{P^{ 2 \alpha - 1 -  \beta_{c}  }}  h_{12}  h_{21}  u_c  +     \sqrt{P^{ 2 \alpha - 1 -  \beta_{m}  }}  h_{12}  h_{21}  u_p   +  z_{1}    \label{eq:yvk1}  \\
y_{2} &=       h_{21}    h_{22}  (   \sqrt{P^{ \alpha - \beta_{c}}} (  v_{c}    +     u_c   ) +  \sqrt{P^{ \alpha - \beta_{m}}} (   v_{m}    +     u_p ) )+    \sqrt{P^{ \alpha - \beta_{p}}}  h_{21}    h_{22}  v_{p}        +  z_{2} .   \label{eq:yvk2}  
\end{align}
As we can see,  at the eavesdropper, the  jamming signal $u_c$  (resp. $u_p$) is aligned at  a specific power level and direction  with the  signal $v_{c}$  (resp. $v_{m}$).  In this way, it will minimize the penalty in GDoF incurred by the secrecy constraint, which can  be seen later. 
Note that, with the above parameter design, the power of signal term  with $v_{p}$ is under the noise level at receiver~2, while the power of signal term  with $u_{p}$ is under the noise level at receiver~1.

\subsubsection{Secure rate analysis} For  $\epsilon >0$, let us define  the two rates   as
\begin{align}
R &\defeq   \Imu(v; y_1) -  \Imu ( v; y_{2} ) - \epsilon   \label{eq:Rk623J} \\  
R_0  &\defeq  \Imu ( v; y_{2}) - \epsilon .   \label{eq:Rk623bJ}  
\end{align}
Note that the wiretap channel with a helper can be considered as  a specific case of the two-user interference channel with confidential messages, by removing the message of the second transmitter. Therefore,  the result of \cite[Theorem~2]{XU:15} reveals that the rate $R$ defined in \eqref{eq:Rk623J} is achievable  and  the message $w$ is secure, i.e.,   $\Imu(w; y_{2}^{\bln})  \leq  \bln \epsilon$.

In the following, we will analyze the secure rate for different cases of $\alpha$.

\begin{table}
\caption{Designed  parameters for different cases, for some $\epsilon >0$.}
\begin{center}
{\renewcommand{\arraystretch}{1.7}
\begin{tabular}{|c|c|c|c|c|c|c|}
  \hline
                     &   $0 \leq \alpha \leq \frac{1}{2}$  &  $\frac{1}{2}  \leq  \alpha \leq \frac{3}{4}$  &   $\frac{3}{4}  \leq \alpha \leq \frac{5}{6}$   & $\frac{5}{6} \leq \alpha \leq 1 $  &  $1 \leq  \alpha \leq \frac{4}{3}$  & $\frac{4}{3} \leq \alpha \leq 2$     \\
   \hline
   $\beta_{c} $    		&   $\infty$     			&   $ \infty $                     &    $0 $    &   $0 $ &   $ \alpha - 1$    &    $\alpha - 1 $   \\
    \hline
       $\beta_{m} $    		&   $0$     			&   $  2\alpha -1  $    &    $2\alpha -1 $    &   $2\alpha -1  $ &   $ \infty$    &    $\infty$   \\
    \hline
       $\beta_{p} $    		&   $\alpha$     			&   $  \alpha$    &       $\alpha $    &   $\alpha $ &   $ \infty$    &    $\infty $   \\
    \hline 
   $\lambda_{c}$ 		&   $0$ 	&  $0 $              &   $4\alpha -3    - \epsilon$     &  $  \alpha  -1/2    - \epsilon$  &  $\alpha/2  - \epsilon$     &   $2- \alpha - \epsilon$        \\
  \hline
     $\lambda_{m}$ 		&   $\alpha  - \epsilon$ 	&  $1 - \alpha    - \epsilon $     &   $1- \alpha   - \epsilon$     &  $  1  -  \alpha  - \epsilon$  &  $0$     &   $0$        \\
  \hline
   $\lambda_{p}$  &   $1 - \alpha - \epsilon$  	&  $1- \alpha - \epsilon$     &     $1- \alpha - \epsilon $  &   $ 1- \alpha - \epsilon$ &  $0$     &     $0 $    \\
    \hline
    \end{tabular}
}
\end{center}
\label{tab:wthpara}
\end{table}

 \subsection{Rate analysis when $0\leq  \alpha \leq  1/2$   \label{sec:CJscheme012}}

 For the first case with $0\leq  \alpha \leq  1/2$,   the parameter design in Table~\ref{tab:wthpara} gives the following forms of the transmitted signals
  \begin{align}
  x_1  = &      h_{22} v_{m}   +   \sqrt{P^{ - \alpha }}    h_{22} v_{p}  \label{eq:xvkkk129552}    \\
  x_2  = &    \sqrt{P^{ \alpha - 1 }}   h_{21}  u_{p} .        \label{eq:u3def252525}  
 \end{align}
 Then, the received signals become
 \begin{align}
y_{1} &=    \sqrt{P} h_{11}    h_{22} v_{m}   +    \sqrt{P^{ 1 - \alpha }} h_{11}    h_{22}  v_{p}    +     \sqrt{P^{ 2 \alpha - 1   }}  h_{12}  h_{21}  u_p    +  z_{1}    \label{eq:yvk12866ww}  \\
y_{2} &=      \sqrt{P^{ \alpha }}  h_{21}    h_{22}   (  v_{m}    +     u_p   )+     h_{21}    h_{22}  v_{p}        +  z_{2} .   \label{eq:yvk27r9r99r}  
\end{align}
Let us now analyze the achievable secure rate expressed in \eqref{eq:Rk623J}, i.e.,    
\begin{align}
R  & =     \Imu(v; y_1) -  \Imu ( v; y_{2} )      \label{eq:lbrate1} 
\end{align}
by setting  $\epsilon \to 0$. 
We will begin with the first term in the right-hand side of \eqref{eq:lbrate1}. 
With our signal design,  $v$  is now expressed as $v = v_{m} +   \sqrt{P^{ - \alpha }}  v_{p}$.
In this case,   the two random variables $v_{m}$ and $v_{p}$ can be estimated  from  $y_1$, with  error probability denoted by $\text{Pr} [  \{ v_{m} \neq \hat{v}_{m} \} \cup  \{ v_{p} \neq \hat{v}_{p} \}  ] $.  
For the  first term in the right-hand side of \eqref{eq:lbrate1}, we have the following bound
  \begin{align}
  \Imu(v; y_1)   &\geq   \Imu(v; \hat{v}_{m}, \hat{v}_{p})  \label{eq:rate5544}     \\
  &=   \Hen(v) -   \Hen(v  |  \hat{v}_{m}, \hat{v}_{p})    \non    \\
    &\geq    \Hen(v) -       \bigl( 1+    \text{Pr} [  \{ v_{m} \neq \hat{v}_{m} \} \cup  \{ v_{p} \neq \hat{v}_{p} \}  ] \cdot \Hen(v) \bigr)  \label{eq:rate28536}     \\
        & =     \bigl( 1 -   \text{Pr} [  \{ v_{m} \neq \hat{v}_{m} \} \cup  \{ v_{p} \neq \hat{v}_{p} \}  ] \bigr)   \cdot \Hen(v)  - 1  \label{eq:rate2256}     
 \end{align}
where  \eqref{eq:rate5544} uses  the Markov property of $v \to y_1 \to  \{\hat{v}_{m}, \hat{v}_{p} \}$; and 
\eqref{eq:rate28536} follows from Fano's inequality.
 The entropy  $\Hen(v)$ in \eqref{eq:rate2256} can be computed as 
   \begin{align}
 \Hen(v) & =  \Hen(v_{m}) +   \Hen(v_{p})  \non\\ 
        &=  \log (2 \cdot P^{ \frac{ \alpha  - \epsilon}{2}} +1)    +  \log (2 \cdot P^{ \frac{ 1 - \alpha - \epsilon}{2}} +1)   \non\\
        &= \frac{ 1  - 2\epsilon}{2} \log P + o(\log P)        \label{eq:rate39737}                              
 \end{align}  
 using the facts that    $v_{m} \in    \Omega (\xi   =  2 \gamma \cdot \frac{ 1}{Q},   \   Q =  P^{ \frac{ \alpha  - \epsilon}{2}} ) $ and $v_{p}  \in    \Omega (\xi   =\gamma \cdot \frac{ 1}{Q},   \   Q = P^{ \frac{ 1 - \alpha - \epsilon}{2}} ) $,  and  that   $\{ v_{p}, v_{m}\}$ can be reconstructed from $v$,  and vice versa. 
For the error probability  appeared in \eqref{eq:rate2256}, we have the following result.
 \begin{lemma}  \label{lm:errorcase1}
Consider the case with $0\leq  \alpha \leq  1/2$, and consider the signal design in  \eqref{eq:xvkkk1}-\eqref{eq:gammadef} and Table~\ref{tab:wthpara}. Then, the error probability of the estimation  of  $\{v_{m}, v_{p} \}$ from $y_1$ is
 \begin{align}
\text{Pr} [  \{ v_{m} \neq \hat{v}_{m} \} \cup  \{ v_{p} \neq \hat{v}_{p} \}  ]    \to 0         \quad \text {as}\quad  P\to \infty .   \label{eq:epcase1}
 \end{align}
 \end{lemma}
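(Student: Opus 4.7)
The plan is to decode $(v_m,v_p)$ from $y_1$ by nearest-neighbour decoding on a two-level PAM constellation, after absorbing the cooperative-jamming term $\sqrt{P^{2\alpha-1}}h_{12}h_{21}u_p$ into an effective additive noise. The key structural observation is that when $\alpha \leq 1/2$, one has $P^{2\alpha-1}\leq 1$, so the jamming contribution at receiver~1 is bounded in absolute value by a constant (at most $2\gamma h_{12}h_{21}\leq 0.4$), hence it sits below the noise floor. This is precisely the mirror of what happens at the eavesdropper, where $v_p$ is the signal that falls below the noise floor.

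First, I would rewrite $y_1$ from \eqref{eq:yvk12866ww} as
\begin{equation*}
y_1 \;=\; \sqrt{P}\,h_{11}h_{22}\,v_m \;+\; \sqrt{P^{1-\alpha}}\,h_{11}h_{22}\,v_p \;+\; I_1,\qquad I_1 \defeq \sqrt{P^{2\alpha-1}}\,h_{12}h_{21}\,u_p + z_1,
\end{equation*}
and note $|I_1|\leq 2\gamma h_{12}h_{21} + |z_1|$, so $I_1$ is a bounded perturbation plus a unit-variance Gaussian. Next I compute the minimum distance of the noise-free constellation. The term $\sqrt{P}h_{11}h_{22}v_m$ forms a coarse PAM grid of spacing $2\gamma h_{11}h_{22}P^{(1-\alpha+\epsilon)/2}$, and within each coarse bin the term $\sqrt{P^{1-\alpha}}h_{11}h_{22}v_p$ forms a finer PAM grid of spacing $\gamma h_{11}h_{22}P^{\epsilon/2}$ and total range $2\gamma h_{11}h_{22}P^{(1-\alpha)/2}$. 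Since the coarse spacing exceeds twice the fine range by the strict factor $P^{\epsilon/2}>1$, different $(v_m,v_p)$ pairs produce distinct deterministic received values with
\begin{equation*}
d_{\min}\;\geq\; h_{11}h_{22}\,\gamma\, P^{\epsilon/2}\qquad\text{for all sufficiently large }P.
\end{equation*}

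Then I apply minimum-distance decoding. A decoding error forces $|I_1|\geq d_{\min}/2$, hence for large $P$ it implies $|z_1|\geq d_{\min}/2 - 2\gamma h_{12}h_{21} \geq \tfrac{1}{4}h_{11}h_{22}\gamma P^{\epsilon/2}$. The standard Gaussian tail bound then makes the single-pair error probability decay like $\exp(-c\,P^{\epsilon})$ for some constant $c>0$. A union bound over the $O(P^{1/2})$ constellation points still leaves an overall error probability $O\!\bigl(P^{1/2}\exp(-c\,P^{\epsilon})\bigr)$, which tends to zero, giving \eqref{eq:epcase1}.

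The one delicate point is the no-collision claim, which I would split by cases. If $v_m=v_m'$ but $v_p\neq v_p'$, the separation comes from the fine-grid spacing $\gamma h_{11}h_{22}P^{\epsilon/2}$ alone. If $v_m\neq v_m'$, the coarse-grid separation $2\gamma h_{11}h_{22}P^{(1-\alpha+\epsilon)/2}$ is reduced by at most the total $v_p$-range $2\gamma h_{11}h_{22}P^{(1-\alpha)/2}$, leaving a gap of order $P^{(1-\alpha+\epsilon)/2}$. Taking the minimum of the two cases yields the stated $d_{\min}$ bound. This separation-by-scales is the standard signal-separation device in the PAM framework of \cite{MGMK:14,NM:13} already invoked by the paper, and no new tool beyond it is required.
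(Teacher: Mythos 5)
Your proposal is correct, but it takes a genuinely different route from the paper. The paper uses \emph{successive} decoding: it first estimates $v_m$ from $y_1$ by absorbing $\sqrt{P^{1-\alpha}}h_{11}h_{22}v_p + \sqrt{P^{2\alpha-1}}h_{12}h_{21}u_p$ into a bounded effective perturbation and invoking the single-level decoding result \cite[Lemma~1]{ChenIC:18} (restated as Lemma~\ref{lm:icchenlemma}); it then removes $\sqrt{P}h_{11}h_{22}v_m$ and decodes $v_p$ from the residual by a second invocation of the same lemma, with $h_{12}h_{21}u_p$ now playing the role of the bounded perturbation. You instead decode $(v_m,v_p)$ \emph{jointly} by nearest-neighbour decoding on the two-level product constellation, computing the minimum distance explicitly ($\gtrsim \gamma h_{11}h_{22}P^{\epsilon/2}$ for large $P$, after verifying that the coarse $v_m$-spacing strictly exceeds the fine $v_p$-range by the factor $P^{\epsilon/2}$) and then applying a Gaussian tail bound plus union bound over the $O(P^{1/2})$ constellation points. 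Both exploit the identical structural fact that for $\alpha\le 1/2$ the helper's term $\sqrt{P^{2\alpha-1}}h_{12}h_{21}u_p$ is bounded above by a constant ($8\gamma\le 2/5$) and therefore lives below the noise floor at receiver~1. The paper's argument is more modular because it reuses an already-proved lemma; yours is self-contained but requires the explicit two-scale separation check. A few of your numerical constants are loose (e.g.\ the ``twice the fine range'' comparison should really be ``exceeds the fine range,'' and the clean inequality $|z_1|\ge \tfrac14 h_{11}h_{22}\gamma P^{\epsilon/2}$ holds only once $h_{11}h_{22}P^{\epsilon/2}\ge 32$), but none of these affect the limiting conclusion.
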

\begin{proof}
The proof is described in Appendix~\ref{sec:errorcase1}. In the proof, a successive decoding method is used in the estimation  of  $\{v_{m}, v_{p} \}$ from $y_1$.
\end{proof}

By combining the results of \eqref{eq:rate2256}, \eqref{eq:rate39737}  and \eqref{eq:epcase1},  it produces the following bound 
  \begin{align}
\Imu(v; y_1)  &\geq   \frac{ 1  - 2\epsilon}{2} \log P + o(\log P) .   \label{eq:rb286227}  
 \end{align}
 Note that the  term $ \Imu ( v; y_{2} )$ in  the right-hand side of \eqref{eq:lbrate1} can be considered as a penalty term in the secure rate, incurred by the secrecy constraint. We will show that, with our scheme design using signal alignment, 
 this penalty will be minimized to a small value that can be ignored in terms of GDoF. It can be seen from the received signal of the eavesdropper  that, the jamming signal is aligned at a specific power level and direction with the  signal sent from transmitter~1 (see \eqref{eq:yvk2}). 
Let us now bound  the penalty  term $ \Imu ( v; y_{2} )$ as follows: 
  \begin{align}
  \Imu ( v; y_{2} )    
\leq &  \Imu(v; y_2,  v_{m} + u_p  )    \non \\
=   & \Imu(v;    v_{m} + u_p  )   +  \Imu(v;   h_{21}    h_{22}  v_{p}        +  z_{2}   |  v_{m} + u_p )    \non   \\
=   & \Hen( v_{m} + u_p  )  - \Hen(u_p  )      +  \hen( h_{21}    h_{22}  v_{p}        +  z_{2}  )   -  \hen(   z_{2} )   \non    \\
\leq &   \underbrace{\log (4  \cdot P^{ \frac{ \alpha - \epsilon}{2}} +1)    -   \log (2  \cdot P^{ \frac{ \alpha  - \epsilon}{2}} +1) }_{\leq 1 }    +   \frac{1}{2} \log ( 2 \pi e (  \underbrace{| h_{21}|^2  | h_{22}|^2      +  1 }_{\leq 17}))       -  \frac{1}{2}\log (2 \pi e)    \label{eq:rb2526}   \\
\leq  & \log (2\sqrt{17})    \label{eq:rb26277}  
 \end{align}
where  
\eqref{eq:rb2526} results from the identity  that  Gaussian input maximizes the differential entropy and the fact that $v_{m} + u_p \in  \Omega (\xi   =  2 \gamma \cdot P^{ - \frac{ \alpha  - \epsilon}{2}},   \   Q = 2 P^{ \frac{ \alpha  - \epsilon}{2}} ) $. Note that uniform distribution maximizes the entropy. 
At the final step,  we  incorporate the results of  \eqref{eq:rb286227}  and \eqref{eq:rb26277} into \eqref{eq:lbrate1} and then  get the following bound on the secure rate  
\begin{align}
R         \geq   \frac{ 1  - 2\epsilon}{2} \log P + o(\log P).    \label{eq:rbcase1f}   
\end{align}
It implies that the secure GDoF  $d =  1$ is achievable  for this case  with   $0 \leq   \alpha \leq  1/2$.

\subsection{Rate analysis when $1/2\leq  \alpha \leq  3/4$   \label{sec:CJscheme1234}}
 
 Given the parameter design in Table~\ref{tab:wthpara}, in this case the transmitted signals are simplified as 
  \begin{align}
  x_1  = &     \sqrt{P^{ -(2\alpha -1)}}   h_{22}v_{m}   +   \sqrt{P^{ - \alpha }}    h_{22} v_{p} \label{eq:xvkkk139747}   \\
  x_2  = &        \sqrt{P^{ - \alpha  }}   h_{21}  u_{p}      \label{eq:u3def858594}  
 \end{align}
 which gives  the following forms of the received signals 
\begin{align}
y_{1} &=       \sqrt{P^{ 2- 2\alpha }} h_{11}    h_{22} v_{m}  +    \sqrt{P^{ 1 - \alpha }} h_{11}    h_{22}  v_{p}  +    h_{12}  h_{21}  u_p   +  z_{1}    \label{eq:yvk13637}  \\
y_{2} &=     \sqrt{P^{ 1- \alpha  }}  h_{21}    h_{22}   (   v_{m}    +     u_p ) +    h_{21}    h_{22}  v_{p}        +  z_{2} .   \label{eq:yvk2373737}  
\end{align}
In this case, we can prove that the secure rate  $R  \geq   \frac{ 2 - 2 \alpha -  2\epsilon}{2} \log P + o(\log P)  $ is achievable.  The rate analysis for this case follows from the steps in the previous case (cf.~\eqref{eq:lbrate1}-\eqref{eq:rbcase1f}). 
To avoid the repetition, we will just provide the outline of the proof for this case.  In the first step, it can be proved that 
 \begin{align}
\Imu(v; y_1)  &\geq   \frac{ 2 - 2 \alpha -  2\epsilon}{2} \log P + o(\log P)    \label{eq:rb26262}  
 \end{align}
 by following the derivations in \eqref{eq:lbrate1}-\eqref{eq:rb286227}. In this case $v =  \sqrt{P^{ -(2 \alpha -1) }}  v_{m} +   \sqrt{P^{ - \alpha }}  v_{p}$ and  $\Hen(v) = \Hen(v_{m}) +   \Hen(v_{p})   =  \frac{ 2 - 2 \alpha -  2\epsilon}{2} \log P + o(\log P)$.   Similarly to the conclusion in  Lemma~\ref{lm:errorcase1} for the previous case,  in this case it is also true  that  the error probability of the estimation  of  $\{v_{m}, v_{p} \}$ from $y_1$  vanishes as $P\to \infty$. A successive decoding method is also used in this estimation.
 In the second step, by following the derivations related to \eqref{eq:rb2526} and \eqref{eq:rb26277},    it can be proved that   
     \begin{align}
 \Imu ( v; y_{2} )   &\leq    o(\log P),    \label{eq:rb29537}  
 \end{align} 
 which, together with \eqref{eq:rb26262}, gives the lower bound on the secure rate $R  \geq   \frac{ 2 - 2 \alpha -  2\epsilon}{2} \log P + o(\log P)  $.  It implies that the secure GDoF  $d = 2- 2\alpha $ is achievable  for this case.

  \subsection{Rate analysis when $3/4\leq  \alpha \leq  5/6$   \label{sec:CJscheme3456}}
 
  Given the parameter design in Table~\ref{tab:wthpara}, in this case the transmitted signals are simplified as 
  \begin{align}
  x_1  = &      h_{22} v_{c}  +  \sqrt{P^{ - (2 \alpha - 1)}}   h_{22}v_{m}   +    \sqrt{P^{ - \alpha }}     h_{22} v_{p} \label{eq:xvkkk1285267}   \\
  x_2  = &    \sqrt{P^{ \alpha - 1  }}   h_{21}  u_{c} +    \sqrt{P^{ - \alpha   }}   h_{21}  u_{p}     .  \label{eq:u3def3i638}  
 \end{align}
 The received signals are given by 
 \begin{align}
y_{1} &=    \sqrt{P} h_{11}    h_{22} v_{c} +     \sqrt{P^{ 2 \alpha - 1  }}  h_{12}  h_{21}  u_c  +   \sqrt{P^{ 2 -2 \alpha }} h_{11}    h_{22} v_{m}  +    \sqrt{P^{ 1 - \alpha }} h_{11}    h_{22}  v_{p}     +     h_{12}  h_{21}  u_p   +  z_{1}    \label{eq:yvk19637374}  \\
y_{2} &=       h_{21}    h_{22}  (   \sqrt{P^{ \alpha }} (  v_{c}    +     u_c   ) +  \sqrt{P^{ 1- \alpha }} (   v_{m}    +     u_p ) )+    h_{21}    h_{22}  v_{p}        +  z_{2} .   \label{eq:yvk2846949}  
\end{align}
The rate analysis also follows~\eqref{eq:lbrate1}-\eqref{eq:rbcase1f}.
In this case, we can prove that the secure rate  $R  \geq   \frac{  2 \alpha -1 -  3\epsilon}{2} \log P + o(\log P)  $ is achievable.  
Again, to avoid the repetition we will just provide the outline of the proof. 
In the first step, it can be proved that 
 \begin{align}
\Imu(v; y_1)  &\geq   \frac{ 2 \alpha -1 -  3\epsilon}{2} \log P + o(\log P)    \label{eq:rb23526}  
 \end{align}
 by following the derivations in \eqref{eq:lbrate1}-\eqref{eq:rb286227}. 
Here we have $v =  v_{c} + \sqrt{P^{ -(2 \alpha -1) }} v_{m} +   \sqrt{P^{ - \alpha }}  v_{p}$ and  $\Hen(v) =\Hen(v_{c}) +   \Hen(v_{m}) +   \Hen(v_{p})   =  \frac{2 \alpha -1 -  3\epsilon}{2} \log P + o(\log P)$.  It is  true  that  the error probability of the estimation  of  $\{v_{c}, v_{m}, v_{p} \}$ from $y_1$  vanishes as $P\to \infty$. A successive decoding method is also used in this estimation.
 In the second step, by following the derivations related to \eqref{eq:rb2526} and \eqref{eq:rb26277},   we can prove that
     \begin{align}
 \Imu ( v; y_{2} )   &\leq    o(\log P) .   \label{eq:rb29537ee}  
 \end{align} 
  Therefore, the secure rate is bounded by $R \geq   \frac{2 \alpha -1 -  3\epsilon}{2} \log P + o(\log P)  $, implying that the secure GDoF  $d = 2 \alpha -1 $ is achievable.

  \subsection{Rate analysis when $5/6\leq  \alpha \leq  1$   \label{sec:CJscheme561}}
 
 In this case, the transmitted signals take the same forms as in \eqref{eq:xvkkk1285267} and \eqref{eq:u3def3i638}, and the received signals are expressed as in \eqref{eq:yvk19637374} and  \eqref{eq:yvk2846949}.  However,  in the rate analysis, the estimation approach is different, where noise removal and signal separation will be used.
 By following the previous derivations  in \eqref{eq:rate5544}-\eqref{eq:rate2256}, we have the following bound 
 \begin{align}
\Imu(v; y_1) 
  \geq  \bigl( 1 -   \text{Pr} [ \{ v_{c} \neq \hat{v}_{c} \} \cup  \{ v_{m} \neq \hat{v}_{m} \} \cup  \{ v_{p} \neq \hat{v}_{p} \}  ] \bigr)   \cdot \Hen(v)  - 1 \label{eq:rate925626}     
 \end{align} 
 where the entropy $\Hen(v)$ in \eqref{eq:rate925626} can be computed as 
   \begin{align}
 \Hen(v)  =  \Hen(v_{c}) + \Hen(v_{m})  +   \Hen(v_{p})  
        = \frac{ 3/2  - \alpha - 3\epsilon}{2} \log P + o(\log P) .       \label{eq:rate39737ee}                              
 \end{align}  
The following lemma shows that the error probability $\text{Pr} [ \{ v_{c} \neq \hat{v}_{c} \} \cup  \{ v_{m} \neq \hat{v}_{m} \} \cup  \{ v_{p} \neq \hat{v}_{p} \}  ] $ in \eqref{eq:rate925626} vanishes as $P$ approaches infinity.
 \begin{lemma}  \label{lm:rateerror561}
Consider the case with $5/6\leq  \alpha \leq  1$, and consider the signal design in  \eqref{eq:xvkkk1}-\eqref{eq:gammadef} and Table~\ref{tab:wthpara}. Then, for almost all the channel realizations  $\{h_{k\ell}\} \in (1, 2]^{2\times 2}$, the error probability of the estimation  of  $\{ v_{c}, v_{m}, v_{p} \}$ from $y_1$ is
 \begin{align}
\text{Pr} [ \{ v_{c} \neq \hat{v}_{c} \} \cup  \{ v_{m} \neq \hat{v}_{m} \} \cup  \{ v_{p} \neq \hat{v}_{p} \}  ]    \to 0         \quad \text {as}\quad  P\to \infty .   \label{eq:epcase4}
 \end{align}
 \end{lemma}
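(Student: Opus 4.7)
The plan is to show that a minimum-distance decoder on $y_1$ recovers the integer labels $(a_c,b_c,a_m,a_p)$ of $(v_c,u_c,v_m,v_p)$ with error probability vanishing as $P\to\infty$; since $(v_c,v_m,v_p)$ is a deterministic function of $(a_c,a_m,a_p)$, this gives \eqref{eq:epcase4}. I split the argument into three pieces: isolating a bounded effective noise, lower bounding the minimum receiver separation via Khintchine--Groshev, and comparing the two.

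Substituting the PAM descriptions in \eqref{eq:constellationGsym1}--\eqref{eq:constellationGsym2} into \eqref{eq:yvk19637374} with the parameter choices of Table~\ref{tab:wthpara} yields $y_1 = S + \eta$, where
\begin{align*}
 S &= 6\gamma P^{(3/2-\alpha+\epsilon)/2}h_{11}h_{22}\,a_c + 6\gamma P^{(\alpha-1/2+\epsilon)/2}h_{12}h_{21}\,b_c \\
   &\quad + 2\gamma P^{(1-\alpha+\epsilon)/2}h_{11}h_{22}\,a_m + \gamma P^{\epsilon/2}h_{11}h_{22}\,a_p,
\end{align*}
and $\eta := h_{12}h_{21}u_p + z_1$. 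By construction $|u_p|\leq 2\gamma$ (the jamming term sits at the noise floor of receiver~1); together with $|h_{12}h_{21}|\leq 4$ and a Gaussian tail bound on $z_1$, this gives $|\eta|\leq T$ with probability $1-o(1)$ for any slowly growing threshold $T$.

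Next I would invoke the Khintchine--Groshev theorem in the form used for real interference alignment by Motahari, Gharan, Maddah-Ali and Khandani to show that, for almost every $\{h_{k\ell}\}\in(1,2]^{2\times 2}$,
\begin{align*}
 d_{\min}(P) := \min_{(\Delta a_c,\Delta b_c,\Delta a_m,\Delta a_p)\neq 0} \bigl|S(\Delta a_c,\Delta b_c,\Delta a_m,\Delta a_p)\bigr| \geq P^{\epsilon/4}
\end{align*}
for all sufficiently large $P$, where the differences range over bounded integer tuples. The argument has two ingredients. First, for almost every channel, $h_{11}h_{22}$ and $h_{12}h_{21}$ are rationally independent, so $S$ can vanish only if both the ``$h_{11}h_{22}$-block'' $6\gamma P^{(3/2-\alpha+\epsilon)/2}\Delta a_c + 2\gamma P^{(1-\alpha+\epsilon)/2}\Delta a_m + \gamma P^{\epsilon/2}\Delta a_p$ and the ``$h_{12}h_{21}$-block'' $6\gamma P^{(\alpha-1/2+\epsilon)/2}\Delta b_c$ are individually small. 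Second, within the $h_{11}h_{22}$-block, the three $P$-power scales $P^{(3/2-\alpha+\epsilon)/2}\gg P^{(1-\alpha+\epsilon)/2}\gg P^{\epsilon/2}$ are sufficiently separated for $5/6\leq\alpha\leq 1$ and small $\epsilon$ that the PAM ranges in \eqref{eq:constellationGsym1}--\eqref{eq:constellationGsym2} do not overlap across scales, yielding an injective ``digit representation'' whose smallest nonzero value is $\Theta(P^{\epsilon/2})$. Quantifying the rational independence of $(h_{11}h_{22},h_{12}h_{21})$ through Khintchine--Groshev absorbs a penalty $P^{-\kappa(\epsilon)}$ with $\kappa(\epsilon)\to 0$, so $d_{\min}(P)\geq P^{\epsilon/2-\kappa(\epsilon)}\geq P^{\epsilon/4}$.

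Choosing $T=P^{\epsilon/8}$ then gives $|\eta|<d_{\min}(P)/2$ with probability $1-o(1)$; on this event a minimum-distance decoder recovers $(a_c,b_c,a_m,a_p)$ exactly, establishing \eqref{eq:epcase4}. The principal obstacle is the lower bound in Step~2: the coefficient structure is non-classical for Khintchine--Groshev since the coefficients split across two channel directions and three $P$-scales with unequal PAM ranges, so a careful ``blocking'' argument is needed. Moreover, the measure-zero exceptional set of channel realizations must be shown to be independent of $P$, so that the statement reads ``for almost every $\{h_{k\ell}\}$, the bound holds for all large $P$'' rather than the weaker ``for each $P$, the bound holds for almost every $\{h_{k\ell}\}$.''
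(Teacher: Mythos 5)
Your plan has the same architecture as the paper's proof in Section~VI: fold the below-noise terms into a bounded disturbance, reduce decoding to a minimum-distance condition, bound the Lebesgue measure of the set of channel realizations for which that condition fails, and close with a Gaussian tail estimate. Three points of divergence are worth flagging. First, the paper does \emph{not} decode $v_p$ jointly: in \eqref{eq:y1rw2352738} the $v_p$-term is placed inside the bounded disturbance $\tilde e = h_{11}h_{22}\,v_p + \sqrt{P^{\alpha-1}}\,h_{12}h_{21}\,u_p$ rather than in the alignment sum $\tilde s$, and the decoder recovers $\tilde s$, cancels it, and only then recovers $v_p$ from the residual \eqref{eq:eb28267}. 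Your version with $a_p$ inside $S$ also works in principle (the extra scale $P^{\epsilon/2}\ll P^{(1-\alpha+\epsilon)/2}$ supports a third digit level within the $h_{11}h_{22}$-direction), but it makes the minimum-distance problem four-dimensional instead of three, with an extra term to track in the blocking argument and no corresponding gain. Second, the step you flag as ``the principal obstacle'' --- a measure bound for a sum whose coefficients split across two channel directions and multiple $P$-scales with unequal PAM ranges --- is exactly what the paper handles via \cite[Lemma~14]{NM:13}, not Khintchine--Groshev per se; that lemma's hypotheses are engineered for precisely this mixed-scale, mixed-direction structure, and the paper's Lemma~\ref{lm:dis8888} is a direct, if bookkeeping-heavy, application of it. In other words, the ``careful blocking argument'' you anticipate but do not carry out is the entire content of the proof, and leaving it as an obstacle leaves the lemma unproved. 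Third, your concern that the exceptional set ought to be $P$-independent is not resolved by the paper either: the outage set $\Ho$ of Lemma~\ref{lm:dis8888} \emph{does} depend on $P$, and all that is shown is $\mathcal{L}(\Ho)\le 12096\,\delta\,P^{-\epsilon/2}\to 0$, so the claim is the per-$P$ reading of ``almost all'' --- which the paper acknowledges in its conclusion (``holds for almost all the channel realizations when $P$ is large''). If you want the stronger reading you describe, you would need an additional step (e.g.\ a Borel--Cantelli argument along a sequence of $P$ values) that neither you nor the paper supplies.
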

\begin{proof}
The proof is described in Section~\ref{sec:epcase4}. In the proof,  noise removal and signal separation are used in the estimation  of  $\{ v_{c}, v_{m}, v_{p} \}$ from $y_1$.
\end{proof}
 The results of \eqref{eq:rate925626}, \eqref{eq:rate39737ee}  and \eqref{eq:epcase4} imply that the bound 
  \begin{align}
\Imu(v; y_1)  &\geq   \frac{ 3/2  - \alpha - 3\epsilon}{2} \log P + o(\log P)    \label{eq:rb26278}  
 \end{align}
 holds true  for almost all the channel realizations  $\{h_{k\ell}\} \in (1, 2]^{2\times 2}$. 
Next, by following the derivations related to \eqref{eq:rb2526} and \eqref{eq:rb26277},    it can be proved that   
     \begin{align}
 \Imu ( v; y_{2} )   &\leq    o(\log P) .   \label{eq:rb34677}  
 \end{align} 
 Thus, we have $R  \geq  \frac{ 3/2  - \alpha - 3\epsilon}{2} \log P + o(\log P)$, implying that the secure GDoF  $d = 3/2  - \alpha  $ is achievable, for almost all the channel realizations in this case.

\subsection{Rate analysis when $1\leq  \alpha \leq  4/3$   \label{sec:CJscheme143}}
 
In this case, the transmitted signals are simplified as 
 \begin{align}
  x_1  = &    \sqrt{P^{ - (\alpha - 1)  }}    h_{22} v_{c}       \label{eq:xvkkk102627}   \\
  x_2  = &      h_{21}  u_{c}       \label{eq:u3def373894}  
 \end{align}
 and the received signals are expressed as
 \begin{align}
y_{1} &=    \sqrt{P^{ 2 - \alpha  }}   h_{11}    h_{22} v_{c}    +     \sqrt{P^{  \alpha  }}  h_{12}  h_{21}  u_c    +  z_{1}    \label{eq:yvk12867}  \\
y_{2} &=       \sqrt{P}   h_{21}    h_{22}  (  v_{c}    +     u_c   )         +  z_{2} .   \label{eq:yvk283882}  
\end{align}
As in the previous case,  the estimation approaches of noise removal and signal separation are also used here for the rate analysis.   
 In this case,  the entropy $\Hen(v)$ is computed as  $\Hen(v)  =  \Hen(v_{c})  = \frac{    \alpha/2 - \epsilon}{2} \log P + o(\log P) $.
 Following the previous derivations  in \eqref{eq:rate5544}-\eqref{eq:rate2256}, we have
 \begin{align}
\Imu(v; y_1) &  \geq  \bigl( 1 -   \text{Pr} [ \{ v_{c} \neq \hat{v}_{c} \}   ] \bigr)   \cdot \Hen(v)  - 1 \non\\
  &=    \bigl( 1 -   \text{Pr} [ \{ v_{c} \neq \hat{v}_{c} \} ] \bigr)   \cdot  \frac{    \alpha/2 - \epsilon}{2} \log P   + o(\log P) .
   \label{eq:rate34647}     
 \end{align} 
 The  lemma below provides the result on the  error  probability $\text{Pr} [ v_{c} \neq \hat{v}_{c}  ] $. 
 \begin{lemma}  \label{lm:epcase5}
Consider the case with $1\leq  \alpha \leq  4/3$, and consider the signal design in  \eqref{eq:xvkkk1}-\eqref{eq:gammadef} and Table~\ref{tab:wthpara}. Then,  for almost all the channel realizations  $\{h_{k\ell}\} \in (1, 2]^{2\times 2}$,  the error probability of the estimation  of  $ v_{c}$ from $y_1$ is
 \begin{align}
 \text{Pr} [  v_{c} \neq \hat{v}_{c}  ]   \to 0         \quad \text {as}\quad  P\to \infty .   \label{eq:epcase5}
 \end{align}
 \end{lemma}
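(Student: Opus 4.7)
The plan is to adapt the noise-removal and signal-separation argument of Lemma~\ref{lm:rateerror561} to the simpler two-signal structure at receiver~1 in this regime. Writing $b_1 = \sqrt{P^{2-\alpha}}\,h_{11}h_{22}$ and $b_2 = \sqrt{P^{\alpha}}\,h_{12}h_{21}$, the noise-free part of $y_1$ in \eqref{eq:yvk12867} equals $b_1 v_c + b_2 u_c$, with $v_c, u_c \in \Omega(6\gamma/Q, Q)$ and $Q = P^{\alpha/4 - \epsilon/2}$. I would decode $(v_c, u_c)$ jointly by nearest-neighbor decoding on this two-dimensional integer lattice; successful pair decoding in particular implies $\hat v_c = v_c$, so bounding the pair-error probability suffices.

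The heart of the argument is a polynomial lower bound on the minimum noiseless spacing
\begin{equation*}
d_{\min}(P) \;=\; \frac{6\gamma}{Q}\,\min\bigl\{\,|b_1 \tilde m + b_2 \tilde n|\,:\,(\tilde m, \tilde n)\in\mathbb{Z}^2\setminus\{(0,0)\},\ |\tilde m|,|\tilde n|\le 2Q\,\bigr\}.
\end{equation*}
The easy case $\tilde n = 0$, $\tilde m \neq 0$ gives $d_{\min} \geq 6\gamma\,|h_{11}h_{22}|\,P^{1 - 3\alpha/4 + \epsilon/2}$, which exceeds $P^{\epsilon/2}$ whenever $\alpha \leq 4/3$. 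For $\tilde n \neq 0$, factoring out $\sqrt{P^{\alpha}}\,|h_{12}h_{21}|$ reduces the question to lower bounding $|\theta_P \tilde m + \tilde n|$, where $\theta_P = (h_{11}h_{22}/h_{12}h_{21})\,P^{1-\alpha}$. For this I would invoke a uniform-in-$P$ Khintchine--Groshev estimate of the type exploited in the real-interference-alignment literature \cite{MGMK:14, NM:13}: for every $\delta > 0$ and almost every channel realization in $(1,2]^{2\times 2}$, there exists $K > 0$ such that $|\theta_P \tilde m + \tilde n| \geq K / (\max(|\tilde m|, |\tilde n|))^{1+\delta}$ for all nonzero $(\tilde m, \tilde n)\in\mathbb{Z}^2$ and all sufficiently large $P$. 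Choosing $\delta$ small relative to $\epsilon/\alpha$ then yields $d_{\min}(P) \geq P^{\eta}$ for some $\eta > 0$.

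With this in hand, a union bound over the $(2Q+1)^2$ candidate pairs combined with the Gaussian tail estimate $\mathcal{Q}(x) \leq \exp(-x^2/2)$ gives
\begin{equation*}
\text{Pr}[v_c \neq \hat v_c] \;\leq\; (2Q+1)^2 \cdot 2\,\mathcal{Q}(d_{\min}(P)/2) \;\leq\; 9\,P^{\alpha/2 - \epsilon}\cdot 2\,\exp(-P^{2\eta}/8),
\end{equation*}
which vanishes super-polynomially fast as $P\to\infty$, yielding \eqref{eq:epcase5}.

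The main technical obstacle is verifying the uniform-in-$P$ Diophantine estimate used above, because $\theta_P$ itself drifts with $P$ and the classical Khintchine--Groshev theorem controls only a single fixed real number. Following the approach of \cite{MGMK:14, NM:13}, one would exploit the fact that the underlying ``coefficients'' are the fixed channel products $h_{11}h_{22}$ and $h_{12}h_{21}$ scaled by deterministic powers of $P$, and use a Borel--Cantelli argument along a countable unbounded sequence of $P$ values to exclude a Lebesgue measure-zero set of channel realizations for which the bound fails infinitely often. This exclusion is precisely the source of the ``almost all channel realizations'' qualifier in the lemma's statement.
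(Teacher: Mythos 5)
Your overall skeleton (nearest-neighbor decoding on the two signal levels, a minimum-distance lower bound, a Gaussian tail) is the same as the paper's, and the scaling arithmetic you carry out for $b_1,b_2$ and $Q$ is correct. The substantive deviation is how you try to lower-bound the minimum distance, and that is where the proposal has a real gap. The paper does \emph{not} invoke any Khintchine--Groshev-type statement at all. It follows the pattern of Appendix~\ref{sec:epcase4} and defines, for each fixed $P$, an outage set
\[
B' \,=\, \bigcup_{\substack{ q'_0, q'_1 \in \Zc,\; |q'_k| \le 2 Q'_0, \\ (q'_0, q'_1) \neq 0}} \bigl\{(g'_1,g'_0)\in(1,4]^2 : |A'_1 g'_1 q'_1 + A'_0 g'_0 q'_0| < \delta\bigr\},
\]
and bounds $\mathcal{L}(B')$ directly via \cite[Lemma~1]{ChenLiArxiv:18} (the two-variable analogue of \cite[Lemma~14]{NM:13}), obtaining $\mathcal{L}(B') \le 192\,\delta\,P^{-\epsilon/2}$. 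This is a plain integration over the channel coefficients at each fixed $P$; there is no uniformity-in-$P$ or Borel--Cantelli involved, because the set $B'$ is allowed to depend on $P$, and "almost all channel realizations" is interpreted here in the sense that this $P$-dependent bad set has vanishing measure as $P\to\infty$.

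Your route, by contrast, requires a \emph{single} exceptional set of Lebesgue measure zero, valid uniformly over all large $P$. You correctly flag that this is not off-the-shelf, because $\theta_P = (h_{11}h_{22}/h_{12}h_{21})P^{1-\alpha}$ drifts with $P$ and the classical Khintchine--Groshev theorem concerns a fixed real. But the proposed fix does not close the gap: (i) a Borel--Cantelli argument along a countable sequence $P_k$ only controls $d_{\min}$ at those $P_k$; the scheme's integer range $Q'_0 = P^{\alpha/4-\epsilon/2}$ and coefficient ratios $A'_0/A'_1 = P^{1-\alpha}$ change continuously with $P$, and you have not shown $d_{\min}(P)$ is monotone or even slowly varying between consecutive $P_k$, so the bound does not transfer to all $P$; (ii) the citation to \cite{MGMK:14,NM:13} for a "uniform-in-$P$ Khintchine--Groshev estimate" is misdirected --- what those references (in the form used here) actually supply is precisely the per-$P$ outage-set measure bound that the paper relies on, not a uniform Diophantine inequality for drifting coefficients. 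The robust way to repair the argument is therefore to replace the Khintchine--Groshev step with the per-$P$ measure bound, exactly as the paper does.
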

\begin{proof}
The proof is described in Appendix~\ref{sec:epcase5}. In the proof,  noise removal and signal separation are used in the estimation  of  $ v_{c}$ from $y_1$.
\end{proof}
 The results of \eqref{eq:rate34647} and \eqref{eq:epcase5} implies that  the following bound 
  \begin{align}
\Imu(v; y_1)  &\geq \frac{    \alpha/2 - \epsilon}{2} \log P + o(\log P)    \label{eq:rb36727}  
 \end{align}
 holds true  for almost all the channel realizations  $\{h_{k\ell}\} \in (1, 2]^{2\times 2}$. 
 Again, it is not hard to prove that $\Imu ( v; y_{2} )  \leq    o(\log P)$. Together with \eqref{eq:rb36727}, it reveals that  $R  \geq  \frac{    \alpha/2 - \epsilon}{2} \log P + o(\log P)$, and that the secure GDoF  $d =  \alpha/2 $ is achievable,  for almost all the channel realizations in this case.

 \subsection{Rate analysis when $4/3 \leq  \alpha \leq  2$   \label{sec:CJscheme432}}
 
 In this case, the transmitted signals take the same forms as in \eqref{eq:xvkkk102627} and \eqref{eq:u3def373894}, and the received signals are expressed as in \eqref{eq:yvk12867} and  \eqref{eq:yvk283882}. 
Here  the entropy $\Hen(v)$ is computed as  $\Hen(v)  =  \Hen(v_{c})  = \frac{   2- \alpha - \epsilon}{2} \log P + o(\log P) $.
Similar to the previous cases, we can prove that 
 \begin{align}
\Imu(v; y_1) &  \geq     \bigl( 1 -   \text{Pr} [ v_{c} \neq \hat{v}_{c} ] \bigr)   \cdot  \frac{   2- \alpha - \epsilon}{2} \log P + o(\log P)
   \label{eq:rate34647ee}     
 \end{align} 
 (cf.~\eqref{eq:rate34647}).
The probability $\text{Pr} [ v_{c} \neq \hat{v}_{c}  ] $ in \eqref{eq:rate34647ee} is the error probability of the estimation of $v_{c}$ from $y_1$. In this case, it can be proved that $u_c  $ and $v_{c}$ can be estimated from $y_1$ in a successive way and the error probability of this estimation  vanishes as $P\to \infty$. The proof of this step is similar to that of Lemma~\ref{lm:errorcase1}, and hence it is omitted here to avoid the repetition.    
 Then, we have 
  \begin{align}
\Imu(v; y_1) &  \geq      \frac{   2- \alpha - \epsilon}{2} \log P + o(\log P). 
   \label{eq:rate4367}     
 \end{align} 
 As in the previous cases,   it can be proved that  $ \Imu ( v; y_{2} )   \leq    o(\log P) $.   
Finally we have a lower bound on the secure rate  $R  \geq  \frac{   2- \alpha - \epsilon}{2} \log P + o(\log P)  $, which implies that  the secure GDoF  $d =  2- \alpha  $ is achievable in this case.

\section{Proof of Lemma~\ref{lm:rateerror561}  \label{sec:epcase4} }

Given the observation $y_1$ in  \eqref{eq:yvk19637374}, and with  $5/6 \leq \alpha \leq 1$,  we will show that  $v_{c}, u_c, v_m$ and $v_{p}$ can be estimated with vanishing error probability, for almost all the channel realizations. 
Our focus is to  prove that $v_{c}, u_c$ and $v_m$ can be estimated from $y_1$ simultaneously with vanishing error probability, for almost all the channel realizations. 
This proof is motivated by the proof of  \cite[Lemma~4]{ChenIC:18}, in which the noise removal and signal separation techniques will be used.  Once $v_{c}, u_c$ and $v_m$ are  estimated correctly from $y_1$, we can remove $v_{c}, u_c$ and $v_m$ from $y_1$ and then estimate $v_{p}$ with vanishing error probability.

Recall that  $v_{c}, u_c \in    \Omega (\xi   =   \frac{ 6 \gamma}{Q},   \   Q =  P^{ \frac{ \alpha  -1/2 - \epsilon}{2}} ) $,  $v_m, u_{p}  \in    \Omega (\xi   = \frac{2 \gamma}{Q},   \   Q = P^{ \frac{ 1 - \alpha - \epsilon}{2}} ) $ and  $v_{p}  \in    \Omega (\xi   = \frac{ \gamma}{Q},   \   Q = P^{ \frac{ 1 - \alpha - \epsilon}{2}} ) $,  for some parameters $\gamma \in (0, 1/20]$ and $\epsilon \to 0$.
Let us describe  $y_1$ in the following form
   \begin{align}
y_{1} &=    \sqrt{P} h_{11}    h_{22} v_{c} +     \sqrt{P^{ 2 \alpha - 1  }}  h_{12}  h_{21}  u_c  +   \sqrt{P^{ 2 -2 \alpha }} h_{11}    h_{22} v_{m}  +    \sqrt{P^{ 1 - \alpha }} h_{11}    h_{22}  v_{p}     +      h_{12}  h_{21}  u_p   +  z_{1}    \non \\
&= \sqrt{P^{ 1 - \alpha + \epsilon }} \cdot 2\gamma  \underbrace{(    3 \sqrt{P^{ 1/2  }}   g_2 q_2 +  3 \sqrt{P^{ 2\alpha - 3/2  }}    g_1 q_1 +  g_0 q_0 )}_{ \defeq \tilde{s}  }   +  \sqrt{P^{ 1 - \alpha}}  \bigl(\underbrace{h_{11}    h_{22}  v_{p}  +    \sqrt{P^{ \alpha-1}}  h_{12}  h_{21}  u_p   \bigr)}_{\defeq  \tilde{e} }   +  z_{1}     \non\\
 &=        \sqrt{P^{ 1 - \alpha + \epsilon }} \cdot 2\gamma   \tilde{s}   +  \sqrt{P^{ 1 - \alpha}} \tilde{e}  +  z_{1}     \label{eq:y1rw2352738}  
\end{align}
where $g_2\defeq g_0 \defeq  h_{11}    h_{22} $,     $ g_1\defeq    h_{12}  h_{21}$,     $\tilde{e}   \defeq   h_{11}    h_{22}  v_{p}  +    \frac{1}{\sqrt{P^{ 1 - \alpha}}}  h_{12}  h_{21}  u_p  $,   $\tilde{s} \defeq g_0 q_0 + 3 \sqrt{P^{ 2\alpha - 3/2  }}   g_1 q_1 + 3 \sqrt{P^{ 1/2  }}  g_2 q_2 $ and  
    \[ q_2  \defeq  \frac{Q_{2}}{6\gamma} \cdot   v_{c}  ,    \quad  q_1  \defeq  \frac{Q_{1}}{6\gamma} \cdot   u_c, \quad   q_0  \defeq  \frac{ Q_{0}}{2\gamma} \cdot  v_{m}, \quad Q_{2} \defeq Q_{1}\defeq  P^{ \frac{ \alpha  -1/2 - \epsilon}{2}},\quad  Q_{0} \defeq   P^{ \frac{ 1 - \alpha  - \epsilon}{2}}.   \]
In this scenario,  the following conditions are always satisfied: $q_k \in \Zc$ and $|q_k| \leq Q_{k}$ for $k=0,1,2$. 
Let   \[ A_2  \defeq  3 \sqrt{P^{ 1/2  }} ,    \quad  A_1  \defeq 3 \sqrt{P^{ 2\alpha - 3/2  }}, \quad   A_0  \defeq  1.   \]
In this scenario with  $5/6 \leq \alpha \leq 1$, without loss of generality we will consider the  case  that\footnote{The result of Lemma~\ref{lm:rateerror561} still holds for the case when any of $\{Q_{0}, Q_{1}, Q_{2}, A_{1}, A_{2}\}$ is not integer. The proof just needs some minor modifications. For example, when $A_2$ is not an integer, we can modify  $ v_{c}$ and  $ u_{c}$ as $ v_{c}= \eta_c v'_{c}$ and $ u_{c}= \eta_c u'_{c}$, where $v'_{c}, u'_{c} \in    \Omega ( \xi =  \frac{  6 \gamma}{Q} ,    Q =  P^{ \frac{ \lambda_{c} }{2}} ) $, and  $\eta_c$ is a selected parameter such that  $0< \eta_c < 1$ and $A_2 \eta_c$ is an integer.  
}  
$Q_{0}, Q_{1}, Q_{2}, A_{1}, A_{2}  \in \Zc^+$.

For the observation $y_1$ in \eqref{eq:y1rw2352738}, the goal is to estimate the sum $\tilde{s} = g_0 (  q_0 + 3 \sqrt{P^{ 1/2  }} q_2) + 3 \sqrt{P^{ 2\alpha - 3/2  }}   g_1 q_1 $  by considering  the other signals as noise (noise removal).  After decoding  $\tilde{s} $ correctly, the three symbols  $q_0, q_1, q_2$ can be estimated, based on  the fact that $\{g_0, g_1\}$ are rationally independent (signal separation, cf.~\cite{MGMK:14}), as well as the fact that $q_0$ and $q_2$ can be reconstructed from $q_0 + 3 \sqrt{P^{ 1/2  }}  q_2$. Note that the minimum distance of $3 \sqrt{P^{ 1/2  }}  q_2$, i.e., $  \min_{  q_2, \bar{q}_2 \in \Zc  \cap [- Q_{2},    Q_{2}], \  q_2 \neq  \bar{q}_2    } 3 \sqrt{P^{ 1/2  }}  \cdot  |q_2 - \bar{q}_2   |$,  is no less  than the maximum of $2q_0$. 
To estimate  $\tilde{s}$ from $y_1$, we will show that the minimum distance of $\tilde{s}$ is sufficiently large, in order to make the error probability vanishing. 
Let us define the minimum distance of $\tilde{s}$  as
  \begin{align}
d_{\min}  (g_0, g_1, g_2)   \defeq    \min_{\substack{ q_1, q_2, \tilde{q}_1, \tilde{q}_2 \in \Zc  \cap [- Q_{2},    Q_{2}]  \\  q_0, \tilde{q}_0 \in \Zc  \cap [- Q_{0},    Q_{0}]   \\  (q_0, q_1, q_2) \neq  (\tilde{q}_0, \tilde{q}_1, \tilde{q}_2)  }}  | g_0  (q_0 - \tilde{q}_0) + 3 \sqrt{P^{ 2\alpha - 3/2  }} g_1 (q_1 - \tilde{q}_1) + 3 \sqrt{P^{ 1/2  }} g_2 (q_2 - \tilde{q}_2)  | .    \label{eq:md295267}
 \end{align}
 The following lemma provides a result on the minimum distance.

\begin{lemma}  \label{lm:dis8888}
For the case with $5/6 \leq \alpha \leq 1$, and  for some constants $\delta \in (0, 1]$ and  $\epsilon >0$,  the following bound on the minimum distance $d_{\min}$   holds true 
 \begin{align}
d_{\min}    \geq  \delta   \label{eq:dG666}
 \end{align}
for all  the channel  realizations $\{h_{11}, h_{12}, h_{22}, h_{21}\} \in (1, 2]^{2\times 2} \setminus \Ho$, where   $\Ho \subseteq (1,2]^{2\times 2}$ is an outage set  whose Lebesgue measure, denoted by $\mathcal{L}(\Ho)$,  has the following bound  
 \begin{align}
\mathcal{L}(\Ho) \leq    12096 \delta   \cdot     P^{ - \frac{ \epsilon  }{2}} .    \label{eq:LM777}
 \end{align}
 \end{lemma}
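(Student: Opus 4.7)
The overall plan is a Khintchine--Groshev style measure argument. The first step exploits the identity $g_0 = g_2 = h_{11}h_{22}$, which lets me rewrite the generic difference inside $d_{\min}$ as $u\, b_0 + v\, b_1$, with $u \defeq h_{11}h_{22} \in (1, 4]$, $v \defeq h_{12}h_{21} \in (1, 4]$, $b_0 \defeq a_0 + A_2 a_2 \in \Z$, and $b_1 \defeq A_1 a_1 \in \Z$, where $a_k \defeq q_k - \tilde q_k$ satisfies $|a_0|\leq 2Q_0$, $|a_1|\leq 2Q_1$, $|a_2|\leq 2Q_2$. A direct exponent check over $\alpha \in [5/6, 1]$ shows $A_2 \geq 4Q_0$ and $A_1 \geq 2Q_0$ once $P$ is large enough; in particular, $b_0 = 0$ forces $a_0 = a_2 = 0$.

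Next, I will prune the tuples that contribute nothing. If $b_1 = 0$ and $b_0 \neq 0$, then $|u\, b_0| \geq |b_0| \geq 1 > \delta$; if $b_0 = 0$ and $b_1 \neq 0$, then $|v\, b_1| \geq A_1 \geq 3 > \delta$. So for $\delta \leq 1$ only tuples with both $b_0, b_1$ nonzero matter. For any such tuple I bound the bad channel measure by a standard fibering argument: for fixed $(h_{12}, h_{21}, h_{22})$ the constraint $|u\, b_0 + v\, b_1| < \delta$ defines an interval of length at most $2\delta/(h_{22}|b_0|) \leq 2\delta/|b_0|$ in the $h_{11}$ variable, and similarly at most $2\delta/|b_1|$ if one instead varies $h_{12}$. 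Integrating over the other three coordinates in $(1, 2]^3$ (whose total volume is one) yields a per-tuple outage measure of at most $\min(2\delta/|b_0|,\, 2\delta/|b_1|)$.

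The remaining task is to sum this per-tuple bound over the surviving tuples and extract the factor $P^{-\epsilon/2}$. I will split into Case (i), $a_2 = 0$ with $a_0, a_1 \neq 0$, and Case (ii), $a_2 \neq 0$, $a_1 \neq 0$. In Case (i), $b_0 = a_0$ and $|b_1| = A_1 |a_1|$; since $A_1 > 2Q_0 \geq |b_0|$, I use the $|b_1|$-bound, obtaining a contribution proportional to $\delta\,(Q_0/A_1)\log Q_1$. In Case (ii), $|b_0| \geq A_2|a_2| - 2Q_0 \geq A_2|a_2|/2$, so the $|b_0|$-bound gives a contribution proportional to $\delta\,(Q_0 Q_1/A_2)\log Q_2$. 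The crucial scalings are $Q_0/A_1 = \tfrac{1}{3}P^{5/4 - 3\alpha/2 - \epsilon/2} \leq \tfrac{1}{3}P^{-\epsilon/2}$ (valid throughout $5/6 \leq \alpha \leq 1$) and $Q_0 Q_1/A_2 = \tfrac{1}{3}P^{-\epsilon}$; absorbing the $\log P$ factors into the exponential decay and tracking the numerical constants produces the claimed bound $\mathcal{L}(\Ho) \leq 12096\, \delta\, P^{-\epsilon/2}$.

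The main obstacle is Case (i): the saving is only $P^{-\epsilon/2}$ rather than the cleaner $P^{-\epsilon}$ of Case (ii), and its validity rests on the borderline arithmetic $3\alpha/2 \geq 5/4$, which holds precisely at and above the lower endpoint $\alpha = 5/6$ of this regime. Verifying this boundary case uniformly in $\alpha$, together with ensuring $A_1 > 2Q_0$ so that $\max(|b_0|, |b_1|) = |b_1|$ throughout Case (i), is the most delicate part of the argument; everything else reduces to routine lattice counting.
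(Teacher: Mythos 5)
Your approach is genuinely different from the paper's, and in spirit it is sound. The paper bounds the Lebesgue measure of the outage set by invoking \cite[Lemma~14]{NM:13} as a black box, which is a Khintchine--Groshev type counting lemma that directly handles the three-term form $A_2 g_2 q_2 + A_1 g_1 q_1 + g_0 q_0$ and produces the constant~$12096$ without any logarithmic loss. You instead exploit the identity $g_0 = g_2 = h_{11}h_{22}$ at the outset to collapse the problem to a \emph{two}-term form $u\,b_0 + v\,b_1$ with $u=h_{11}h_{22}$, $v=h_{12}h_{21}$, and then run a direct fibering/slicing argument. That collapse is correct and transparent; your pruning of degenerate tuples (the verification that $A_2\geq 4Q_0$ and $A_1\geq 2Q_0$ for $\alpha\geq 5/6$ and large $P$, so that $b_0=0\Rightarrow a_0=a_2=0$ and the single-nonzero cases have form value above $\delta$) and the exponent bookkeeping $Q_0/A_1=\tfrac13 P^{5/4-3\alpha/2-\epsilon/2}$ and $Q_0Q_1/A_2=\tfrac13 P^{-\epsilon}$ are all correct.

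However, there is a genuine gap at the summation step. Your per-tuple bound $\min\bigl(2\delta/|b_0|,\,2\delta/|b_1|\bigr)$ is correct, but when you sum it over the surviving tuples you run into harmonic sums: in Case~(i) the sum $\sum_{a_1}1/|a_1|$ and in Case~(ii) the sum $\sum_{a_2}1/|a_2|$ each produce a factor of order $\log P$. Your proposal acknowledges this and says the $\log P$ is ``absorbed into the exponential decay,'' but that is not literally true: $(\log P)\,P^{-\epsilon/2}$ is \emph{not} $\leq C\,P^{-\epsilon/2}$ for any fixed $C$. To absorb the log you would have to give up a little of the exponent (replace $\epsilon$ by some $\epsilon'<\epsilon$), which then no longer yields the claimed bound $12096\,\delta\,P^{-\epsilon/2}$ with the stated $\epsilon$. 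This mismatch is precisely what \cite[Lemma~14]{NM:13} buys the paper: it does a sharper lattice-interval count in place of the naive harmonic summation. You could repair your argument in two ways without invoking that lemma. For Case~(i), observe that it is actually \emph{empty} for large $P$: since $A_1/Q_0 = 3P^{3\alpha/2-5/4+\epsilon/2}\geq 3P^{\epsilon/2}\to\infty$, we have $A_1 - 8Q_0 > 1 \geq \delta$ once $P$ exceeds a threshold, so no tuple with $a_2=0$ and $a_0,a_1\neq 0$ can make the form smaller than $\delta$. For Case~(ii), the naive harmonic sum still leaves a $\log P$, and fixing it requires a more refined count of how many bad $a_2$-intervals can intersect $(1,4]$ for a given $(u,a_0,a_1)$; simply flagging this as ``routine lattice counting'' understates the issue. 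As written, the argument establishes $\mathcal{L}(\Ho)=O(\delta P^{-\epsilon/2}\log P)$, which is enough for the downstream use ($\mathcal{L}(\Ho)\to 0$) but is not the stated inequality.
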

 \begin{proof}
 For $\beta \defeq  \delta  \in (0, 1]  $,  we define an event as
\begin{align}
B( q_2, q_1, q_0)  \defeq \{ (g_2,  g_1, g_0)  \in (1,4]^3 :  |  A_2 g_2 q_2  + A_1 g_1 q_1 +  g_0q_0   | < \beta \}     .  \label{eq:BO999}
\end{align}
Also define 
\begin{align}
B  \defeq   \bigcup_{\substack{ q_0, q_1, q_2 \in \Zc:  \\ |q_k| \leq  2 Q_k  \ \forall k  \\    (q_0, q_1, q_2) \neq  0}}  B(q_2, q_1, q_0) .   \label{eq:Boutage22}
\end{align}
For $5/6 \leq \alpha \leq 1$, by  \cite[Lemma~14]{NM:13} we have the following bound on the Lebesgue measure of $B$ (i.e., $\Lc (B )$)
\begin{align}
\Lc (B ) & \leq   504 \beta \cdot 4 \Bigl(  2 \min \bigl\{  \frac{  Q_{0}}{ A_2},  Q_{2} \bigr\}  +  \tilde{Q}_2 \cdot \min \bigl\{ Q_{1} ,  \frac{Q_{0} }{A_1},  \frac{  A_2 \tilde{Q}_2}{ A_1}   \bigr\} \non\\ &  \   \   \quad\quad + 2 \min \bigl\{    \frac{  Q_{0}}{ A_1} , Q_{1} \bigr\}  +  \tilde{Q}_1 \cdot \min \bigl\{ Q_{2} ,  \frac{Q_{0} }{A_2},  \frac{  A_1 \tilde{Q}_1}{ A_2}   \bigr\}     \Bigr)   \non\\
&\leq     504 \beta \cdot 4 \Bigl(   \frac{  2Q_{0}}{ A_2}   +  \tilde{Q}_2 \cdot   \frac{Q_{0} }{A_1}  +  \frac{  2Q_{0}}{ A_1}  +  \tilde{Q}_1 \cdot \frac{Q_{0} }{A_2}   \Bigr)   \non\\
&\leq     504 \beta \cdot 4 \Bigl(     Q_{1}      \cdot   \frac{9Q_{0} }{A_2}  +  \frac{  4Q_{0}}{ A_1}   \Bigr)   \non\\
&\leq     504 \beta \cdot 8 Q_{0}  \max \bigl\{        \frac{9 Q_{1} }{A_2},  \frac{  4}{ A_1}  \bigr\}   \non\\
&\leq     504 \beta \cdot 8 Q_{0}  \cdot 3   P^{ \frac{ \alpha  -1 }{2}}   \non\\
                & =  12096 \delta   \cdot     P^{ - \frac{ \epsilon  }{2}}          \label{eq:LB55555}
\end{align}
where   $\tilde{Q}_1 = \min\bigl\{ Q_{1},  8  \frac{\max\{ Q_{0},  A_2 Q_{2} \}}{A_1}\bigr\}  = Q_{1}$ and  $\tilde{Q}_2 = \min\bigl\{ Q_{2},    8  \frac{\max\{ Q_{0},   A_1 Q_{1} \}}{A_2}\bigr\}  = Q_{1}  \cdot \min\bigl\{ 1,    \frac{ 8A_1  }{A_2}\bigr\}$. 
In this scenario, we can treat $B$  as an outage set. When $(g_0, g_1, g_2)\notin B$, by definition  we have $d_{\min} (g_0, g_1, g_2)   \geq  \delta$. 
Recall that $g_2\defeq g_0 \defeq  h_{11}    h_{22} $ and  $ g_1\defeq    h_{12}  h_{21}$. 
At this point, we define a new set  $\Ho$  as
\[ \Ho \defeq \{  (h_{11}, h_{22}, h_{12}, h_{21} ) \in (1, 2]^{2\times 2} :      (  g_2=g_0, g_1, g_0) \in B  \} . \]
We define  $ \mathbbm{1}_{\Ho} (h_{11}, h_{22}, h_{12}, h_{21})= 1 $ if  $(h_{11}, h_{22}, h_{12}, h_{21}) \in  \Ho$, else   $\mathbbm{1}_{\Ho} (h_{11}, h_{22}, h_{12}, h_{21})= 0$. Similarly, we define $ \mathbbm{1}_{B}  ( g_1, g_0) =1$ if $(g_2 =g_0, g_1, g_0) \in  B$, else $ \mathbbm{1}_{B}  ( g_1, g_0) =0$. Then we can  bound the Lebesgue measure of $\Ho$ as 
\begin{align}
\Lc (\Ho ) & =  \int_{h_{11}=1}^2  \int_{h_{12}=1}^2 \int_{h_{21}=1}^2 \int_{h_{22}=1}^2    \mathbbm{1}_{\Ho}  (h_{11}, h_{22}, h_{12}, h_{21}) d h_{22} d h_{21} d h_{12}  d h_{11}          \non\\  
& =  \int_{h_{11}=1}^2  \int_{h_{12}=1}^2 \int_{h_{21}=1}^2 \int_{h_{22}=1}^2    \mathbbm{1}_{B}  (  h_{12}  h_{21} ,  h_{11}    h_{22}) d h_{22} d h_{21} d h_{12}  d h_{11}          \non\\ 
& \leq    \int_{h_{11}=1}^2   \int_{h_{12}=1}^2  \int_{g_{1} =1}^4 \int_{g_{0}=1}^4    \mathbbm{1}_{B}  ( g_{1}, g_{0})   h_{11}^{-1}  h_{12}^{-1}d g_{0} d g_{1} d h_{12}  d h_{11}          \non\\ 
& \leq    \int_{h_{11}=1}^2  \int_{h_{12}=1}^2   \mathcal{L}(B) d h_{12}  d h_{11}          \non\\ 
& \leq      12096 \delta   \cdot     P^{ - \frac{ \epsilon  }{2}}              \label{eq:LB3333}  
\end{align}
where the last step uses the result in \eqref{eq:LB55555}.
 \end{proof}

In the rest of this section, we will consider the channel  realizations $(h_{11}, h_{22}, h_{12}, h_{21}) \in (1, 2]^{2\times 2} $ that are not in the outage set   $\Ho$. The result of Lemma~\ref{lm:dis8888} reveals that  \[ \mathcal{L}(\Ho)  \to 0, \quad \text{for} \quad  P\to \infty. \] 
When the channel realizations satisfy the condition  $(h_{11}, h_{22}, h_{12}, h_{21}) \notin \Ho$, we have the following property on the minimum distance defined in \eqref{eq:md295267}:  $d_{\min}    \geq   \delta $ for a given constant $\delta \in (0, 1]$. 
With this result, we can estimate $\tilde{s}$ from $y_1$ expressed in \eqref{eq:y1rw2352738}.
For the random variable $\tilde{e}   =   h_{11}    h_{22}  v_{p}  +    \frac{1}{\sqrt{P^{ 1 - \alpha}}}  h_{12}  h_{21}  u_p  $ appeared in \eqref{eq:y1rw2352738}, it is true that 
\[ |\tilde{e} | \leq  \tilde{e}_{\max} \defeq  3/5  \quad  \forall   \tilde{e}.  \]  
At this point,  we have the following bound on the error probability of the estimation of $\tilde{s}$ from $y_1$ 
 \begin{align}
 \text{Pr} [ \tilde{s} \neq   \hat{\tilde{s}} ]    
  &\leq     \text{Pr} \Bigl[   | z_1  +    \sqrt{P^{ 1 - \alpha}} \tilde{e}  |  >   \sqrt{P^{ 1 - \alpha + \epsilon }} \cdot 2\gamma  \cdot \frac{d_{\text{min}} }{2}  \Bigr]   \non \\
       & \leq   2  \cdot     {\bf{Q}} \bigl(   P^{\frac{ 1 - \alpha + \epsilon}{2} } \cdot 2\gamma    \cdot \frac{d_{\text{min}} }{2}   -  P^{ \frac{1 - \alpha}{2}} \tilde{e}_{\max}  \bigr)    \non   \\ 
    & \leq    2  \cdot     {\bf{Q}} \bigl(  P^{ \frac{1 - \alpha}{2}} ( \gamma \delta  P^{ \frac{ \epsilon}{2}}   -  3/5)  \bigr)     \label{eq:error9982cc} 
  \end{align}
  where $\hat{\tilde{s}}$ denotes the estimate of $\tilde{s}$;
 ${\bf{Q}}(\tau )  \defeq  \frac{1}{\sqrt{2\pi}} \int_{\tau}^{\infty}  \exp( -\frac{ z^2}{2} ) d z$;
 the last step stems from the result that $d_{\min}    \geq   \delta $.
By following the fact that $ {\bf{Q}} (\tau ) \leq   \frac{1}{2}\exp ( - \tau^2 /2 ),  \    \forall \tau \geq 0$,  the result in \eqref{eq:error9982cc} implies the following conclusion 
 \begin{align}
 \text{Pr} [ \tilde{s} \neq   \hat{\tilde{s}} ]    \to 0  \quad  \text{for} \quad   P\to \infty.    \label{eq:eb7272}                           
 \end{align}
   After decoding  $\tilde{s} = g_0 (  q_0 + 3 \sqrt{P^{ 1/2  }} q_2) + 3 \sqrt{P^{ 2\alpha - 3/2  }}   g_1 q_1 $ correctly, the three symbols  $q_0, q_1, q_2$ can be recovered, as illustrated before in this section.

Next,  we  remove $\tilde{s}$ from $y_1$, which leads to
\begin{align}
 y_1  -  \sqrt{P^{ 1 - \alpha + \epsilon }} \cdot 2\gamma   \tilde{s} &=   \sqrt{P^{ 1 - \alpha}}  h_{11}    h_{22}  v_{p}  +      h_{12}  h_{21}  u_p      +  z_{1}      .     \label{eq:eb28267}  
\end{align}
Since the interference term $h_{12}  h_{21}  u_p$ in  \eqref{eq:eb28267} is under the noise level, i.e.,  $h_{12}  h_{21}  u_p  \leq 8\gamma \leq 2/5$, one can easily prove that the  
the error probability for decoding $v_{p}$  from the observation in  \eqref{eq:eb28267} is 
 \begin{align}
  \text{Pr} [ v_{p} \neq \hat{v}_{p} ]     \to 0  \quad  \text{for} \quad   P\to \infty . \label{eq:eb33334}                    
   \end{align}
  
Therefore, the error probability $\text{Pr} [ \{ v_{c} \neq \hat{v}_{c} \} \cup  \{ v_{m} \neq \hat{v}_{m} \} \cup  \{ v_{p} \neq \hat{v}_{p} \}  ]$ vanishes as $P$ approaches infinity, 
    for almost all the channel realizations  $(h_{11}, h_{22}, h_{12}, h_{21}) \in (1, 2]^{2\times 2} $.

\section{Conclusion}   \label{sec:concl333}

In this work,  we characterized the optimal secure GDoF  of a symmetric Gaussian  wiretap channel with a helper, under a weak notion of secrecy constraint. The result reveals that, adding a helper can significantly increase the secure GDoF of the wiretap channel.
A new converse and a new scheme are provided in this work.  The  converse derived in this work holds for the symmetric and asymmetric channels. 
In the proposed scheme, the helper sends a cooperative jamming signal at a specific power level and direction, which allows to  minimize the penalty in GDoF incurred by  the secrecy constraint.  
In the secure rate analysis, the techniques of noise removal and signal separation  are used.  
The optimal secure GDoF is described in  different expressions for different interference regimes. For the regimes of $0\leq \alpha \leq 5/6$ and $4/3 \leq \alpha \leq 2$, the achievable secure GDoF result holds for all the channel realizations under our channel model.  For the regime of $ 5/6< \alpha <4/3$, the achievable secure GDoF result holds for almost all the channel realizations when  $P$ is large, under our channel model. 
In the future work, we will generalize our secure GDoF result  to understand the constant-gap secure  capacity.

\appendices

\section{Proof of Lemma~\ref{lm:bound1112}} \label{sec:bound1112}

Recall that $\bar{y}_{2}(t)$ is a nosy version of $y_{2}(t)$,  defined in  \eqref{eq:defybdet}. From chain rule, we have
\begin{align}
   \Imu(w;   \bar{y}_{2}^{\bln})  &=     \Imu( w;  \bar{y}_{2}^{\bln} | s_{22}^{\bln})  +  \Imu( w; s_{22}^{\bln})   -   \Imu( w;  s_{22}^{\bln} |  \bar{y}_{2}^{\bln})     \non\\
   &=    \Imu( w;  \bar{y}_{2}^{\bln} | s_{22}^{\bln})    -   \Imu( w;  s_{22}^{\bln} |  \bar{y}_{2}^{\bln})   \label{eq:ups9883} 
\end{align}
where 
 \eqref{eq:ups9883} follows from the independence between $w$ and $s_{22}^{\bln}$. 
For the term $\Imu( w;  s_{22}^{\bln} |  \bar{y}_{2}^{\bln})$ in \eqref{eq:ups9883}, it can be bounded by 
\begin{align}
  &\Imu( w;  s_{22}^{\bln} |  \bar{y}_{2}^{\bln}) \non\\
    = &  \hen( s_{22}^{\bln} |  \bar{y}_{2}^{\bln})  -  \hen(   s_{22}^{\bln} |  \bar{y}_{2}^{\bln}, w)            \non\\
   \leq & \sum_{t=1}^{\bln}   \hen( s_{22} (t) |  \bar{y}_{2}(t))  -  \hen(   s_{22}^{\bln} |  \bar{y}_{2}^{\bln}, w,  x_{2}^{\bln})       \label{eq:ups35467}     \\
  = &\sum_{t=1}^{\bln}  \hen(   s_{22} (t)  -  \sqrt{P^{ - \phi_3 }}  \bar{y}_{2}(t)   |   \bar{y}_{2}(t))  - \underbrace{ \hen(   \tilde{z}_{2}^{\bln} ) }_{= \frac{\bln}{2}  \log (2\pi e)}     \non    \\
= &\sum_{t=1}^{\bln}  \hen \bigl(   \tilde{z}_{2}(t)    -   h_{21} x_{1}(t)    -  \sqrt{P^{ -  \alpha_{2 1} }}   h_{22} z_{2}(t)      -    \sqrt{P^{ - \phi_3 }} \bar{z}_{2}(t)    \non\\&  \quad\quad    +   ( \sqrt{P^{  (\alpha_{2 2} -  \alpha_{2 1})^+ }}  -  \sqrt{P^{  \alpha_{2 2} -  \alpha_{2 1} }} )   h_{22} x_{2}(t) 
 |   \bar{y}_{2}(t) \bigr)    -  \frac{\bln}{2}  \log (2\pi e) \label{eq:ups2457}     \\
\leq &    \frac{\bln}{2}  \log \bigl(  \underbrace{1 +  | h_{21}|^2  + P^{ -  \alpha_{2 1} }   | h_{22}|^2  +  P^{ - \phi_3 } }_{\leq 10}  +  \underbrace{( \sqrt{P^{  (\alpha_{2 2} -  \alpha_{2 1})^+ }}  -  \sqrt{P^{  \alpha_{2 2} -  \alpha_{2 1} }} )^2}_{\leq 1} \cdot  \underbrace{ | h_{22}|^2}_{\leq 4} \ \bigr)   \label{eq:ups22566}  \\
\leq &    \frac{\bln}{2}  \log  14   \label{eq:ups88753}      
\end{align}
where \eqref{eq:ups35467} follows from chain rule and  the fact that conditioning reduces differential entropy;
\eqref{eq:ups2457} uses the identity that $\hen(   \tilde{z}_{2}^{\bln} )  =\frac{\bln}{2}  \log (2\pi e)  $;
\eqref{eq:ups22566} results from the fact that Gaussian input maximizes the differential entropy, and that conditioning reduces differential entropy.
At this point, we complete the proof of Lemma~\ref{lm:bound1112}.

\section{Proof of Lemma~\ref{lm:differencefirst2}  } \label{sec:differencefirst2}

For  $s_{22}(t)$ and $\bar{y}_{2}(t)$ defined  in  \eqref{eq:defs11} and \eqref{eq:defybdet},  we have
 \begin{align}
 &     \hen( y^{\bln}_1|  s_{22}^{\bln})  - \hen( \bar{y}_{2}^{\bln} | s_{22}^{\bln})    \non\\
 \leq &\hen( y^{\bln}_1|  s_{22}^{\bln})  - \hen( \bar{y}_{2}^{\bln} | s_{22}^{\bln}, z_{2}^{\bln})  \label{eq:ups2566}     \\
  = &\hen( y^{\bln}_1|  s_{22}^{\bln})  - \hen( \{   \bar{y}_{2}(t)  -      \sqrt{P^{-(\alpha_{2 1} - \phi_3 )}}  z_{2}(t)        \}_{t=1}^{\bln} | s_{22}^{\bln}, z_{2}^{\bln})  \non    \\
    = &\hen( y^{\bln}_1|  s_{22}^{\bln})  - \hen( \{   \sqrt{P^{ \phi_3 }}  h_{21} x_{1}(t) +  \sqrt{P^{\alpha_{2 2} -(\alpha_{2 1} - \phi_3 )}}   h_{22} x_{2}(t)    +   \bar{z}_{2}(t)        \}_{t=1}^{\bln} | s_{22}^{\bln}, z_{2}^{\bln})  \non     \\
   = &\hen( y^{\bln}_1|  s_{22}^{\bln})  - \hen( \{   \sqrt{P^{ \phi_3 }}  h_{21} x_{1}(t) +  \sqrt{P^{\alpha_{2 2} -(\alpha_{2 1} - \phi_3 )}}   h_{22} x_{2}(t)    +   \bar{z}_{2}(t)        \}_{t=1}^{\bln} | s_{22}^{\bln})  \label{eq:ups8637}    \\ 
   = &\hen( y^{\bln}_1|  \{    \sqrt{P^{(\alpha_{22}-\alpha_{21})^+}} h_{22} x_{2}(t) +  z'_{2}(t)    \}_{t=1}^{\bln})    \non\\
    &- \hen( \{   \sqrt{P^{ \phi_3 }}  h_{21} x_{1}(t) +  \sqrt{P^{\alpha_{2 2} -(\alpha_{2 1} - \phi_3 )}}   h_{22} x_{2}(t)    +   \bar{z}_{2}(t)        \}_{t=1}^{\bln} |  \{    \sqrt{P^{(\alpha_{22}-\alpha_{21})^+}} h_{22} x_{2}(t) +  z'_{2}(t)    \}_{t=1}^{\bln})  \label{eq:ups826773}       
\end{align}
where 
 \eqref{eq:ups2566} uses the fact that conditioning reduces differential entropy;
 \eqref{eq:ups8637}  follows from the fact that $z_{2}^{\bln}$ is independent of $\{   \sqrt{P^{-  \phi_3 }}  h_{21} x_{1}(t) +  \sqrt{P^{\alpha_{2 2} -(\alpha_{2 1} - \phi_3 )}}   h_{22} x_{2}(t)    +   \bar{z}_{2}(t)        \}_{t=1}^{\bln}$ and $ s_{22}^{\bln} = \{    \sqrt{P^{(\alpha_{22}-\alpha_{21})^+}} h_{22} x_{2}(t) +  \tilde{z}_{2}(t)    \}_{t=1}^{\bln}$;
 in \eqref{eq:ups826773}  we replace $\tilde{z}_{2}(t) $ with a  new   noise random variable  $z'_{2}(t)  \sim \mathcal{N}(0, 1)$  that is  independent of the other noise random variables and transmitted signals $\{x_1(t), x_2 (t)\}_t$;
note that   replacing $\tilde{z}_{2}(t)  \sim \mathcal{N}(0, 1)$ with   $z'_{2}(t)  \sim \mathcal{N}(0, 1)$  will not change the differential entropies in \eqref{eq:ups8637}, due to the fact that differential entropy depends on  distributions.
To bound the right-hand side of \eqref{eq:ups826773}, we will use the  result of  \cite[Lemma~9]{ChenIC:18} that is described below.
\vspace{5pt}
\begin{lemma}  \cite[Lemma~9]{ChenIC:18}  \label{lm:differenceb}   
Let $y_{1}(t) = \sqrt{P^{\alpha_{11}}}  h_{11} x_{1}(t)+  \sqrt{P^{\alpha_{12}}}  h_{12} x_{2}(t) +z_{1}(t) $ and $y_{2}(t)=  \sqrt{P^{\alpha_{21}}}  h_{21} x_{1}(t)+  \sqrt{P^{\alpha_{22}}}  h_{22} x_{2}(t) +z_{2}(t) $, as defined  in \eqref{eq:Cmodel}. Consider a random variable (or a set of random variables),  $\bar{w}_1$, that is independent of $\{x^{\bln}_{2},   z_{1}^{\bln}, z_{2}^{\bln},  \tilde{z}_{1}^{\bln}, \tilde{z}_{2}^{\bln}, \bar{z}_{2}^{\bln}, \bar{z}_{3}^{\bln}, \bar{z}_{4}^{\bln}\}$;  and consider another  random variable (or another set of random variables), $\bar{w}_2$, that is independent of $\{x^{\bln}_{1},  z_{1}^{\bln}, z_{2}^{\bln},  \tilde{z}_{1}^{\bln}, \tilde{z}_{2}^{\bln}, \bar{z}_{2}^{\bln}, \bar{z}_{3}^{\bln}, \bar{z}_{4}^{\bln}\}$. Then,  we have 
\begin{align} 
   \hen(y^{\bln}_2| \bar{w}_1)  - \hen(y^{\bln}_1| \bar{w}_1)   \leq &  \frac{n}{2} \log \Bigl(1+ P^{\alpha_{22}-\alpha_{12}} \cdot \frac{|h_{22}|^2}{|h_{12}|^2} +  P^{\alpha_{21} -(\alpha_{11} - \alpha_{12} )^+} \cdot \frac{|h_{21}|^2}{|h_{11}|^2}  \Bigr) +    \frac{n}{2} \log 10   \label{eq:differenceb11G}  \\  
 \hen(y^{\bln}_1| \bar{w}_2)  - \hen(y^{\bln}_2| \bar{w}_2)  \leq  &   \frac{n}{2} \log \Bigl(1+ P^{\alpha_{11}-\alpha_{21}} \cdot \frac{|h_{11}|^2}{|h_{21}|^2} +  P^{\alpha_{12} -(\alpha_{22} - \alpha_{21} )^+} \cdot \frac{|h_{12}|^2}{|h_{22}|^2}  \Bigr) +    \frac{n}{2} \log 10 . \label{eq:differenceb22G}      
\end{align}
\end{lemma}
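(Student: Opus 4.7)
The plan is to derive the bound by writing $y_2$ as a scaled copy of $y_1$ plus a residual, applying a chain-rule decomposition to the conditional differential entropies, and Gaussian-maximizing the residual's per-letter entropy.

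First I would introduce the scaling $\beta \defeq (h_{22}/h_{12})\sqrt{P^{\alpha_{22}-\alpha_{12}}}$, chosen so that $\beta y_1$ and $y_2$ share identical $x_2$ coefficients. Writing $y_2 = \beta y_1 + e$ with residual $e = \gamma x_1 + z_2 - \beta z_1$ and $\gamma = \sqrt{P^{\alpha_{21}}}h_{21} - \beta\sqrt{P^{\alpha_{11}}}h_{11}$, the scaling identity $\hen(\beta y_1^{\bln}|\bar{w}_1) = \hen(y_1^{\bln}|\bar{w}_1) + \bln\log|\beta|$ and the chain-rule inequality $\hen(y_2^{\bln}|\bar{w}_1) \leq \hen(\beta y_1^{\bln},e^{\bln}|\bar{w}_1) = \hen(\beta y_1^{\bln}|\bar{w}_1) + \hen(e^{\bln}|y_1^{\bln},\bar{w}_1)$ together yield
\begin{equation*}
\hen(y_2^{\bln}|\bar{w}_1) - \hen(y_1^{\bln}|\bar{w}_1) \;\leq\; \tfrac{\bln}{2}\log\beta^2 + \hen(e^{\bln}|y_1^{\bln},\bar{w}_1),
\end{equation*}
in which $\tfrac{\bln}{2}\log\beta^2 = \tfrac{\bln}{2}\log(P^{\alpha_{22}-\alpha_{12}}h_{22}^2/h_{12}^2)$ already accounts for the first of the two exponents in the stated bound.

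Second, to tame the other exponent, I would rewrite the per-letter residual using $a \defeq \sqrt{P^{\alpha_{11}}}h_{11}$ and $b \defeq \sqrt{P^{\alpha_{12}}}h_{12}$: since $a x_1 = y_1 - b x_2 - z_1$, one has $e = (\gamma/a)y_1 - (\gamma/a)b x_2 - ((\gamma/a)+\beta)z_1 + z_2$. The key simplification is the identity $(\gamma/a)+\beta = \sqrt{P^{\alpha_{21}-\alpha_{11}}}h_{21}/h_{11}$, a clean channel ratio for $x_1$ into $y_2$; after the $(\gamma/a)y_1$ piece is absorbed by conditioning on $y_1^{\bln}$, the Gaussian cap reads
\begin{equation*}
\hen(e(t)|y_1^{\bln},\bar{w}_1) \leq \tfrac{1}{2}\log\bigl(2\pi e\bigl((\gamma/a)^2 b^2\,\mathrm{Var}(x_2|y_1^{\bln},\bar{w}_1) + P^{\alpha_{21}-\alpha_{11}}\tfrac{h_{21}^2}{h_{11}^2} + 1 + \text{cross}\bigr)\bigr),
\end{equation*}
using $\mathrm{Var}(x_2|y_1^{\bln},\bar{w}_1)\leq 1$, $\mathrm{Var}(z_1|y_1^{\bln},\bar{w}_1)\leq 1$, and independence of $z_2$. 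The $(\gamma/a)^2 b^2$ cross-term is then controlled case-by-case on $\mathrm{sign}(\alpha_{11}-\alpha_{12})$: in Case A $(\alpha_{11}\geq\alpha_{12})$ the LMMSE shrinkage of $x_1$ through $y_1$ trades $b^2/a^2 \leq P^{-(\alpha_{11}-\alpha_{12})}h_{12}^2/h_{11}^2$ into the exponent, giving $\alpha_{21}-\alpha_{11}+\alpha_{12}=\alpha_{21}-(\alpha_{11}-\alpha_{12})^+$; in Case B $(\alpha_{11}<\alpha_{12})$ the trivial exponent $\alpha_{21}$ applies directly. Constants $2\pi e$ and the various $(u-v)^2\leq 2(u^2+v^2)$ slacknesses aggregate into the additive $\tfrac{\bln}{2}\log 10$.

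The main obstacle is precisely this cross-term management: bounding $\gamma^2$ and $\beta^2$ independently generates a spurious $P^{\alpha_{11}+\alpha_{22}-\alpha_{12}}$ contribution from $\beta^2\,\mathrm{Var}(ax_1)$ that would exceed the allowed $P^{\alpha_{21}-(\alpha_{11}-\alpha_{12})^+}$. The resolution is to keep $(\gamma/a)+\beta$ intact so that the $x_1$ contributions in $\beta y_1$ and $y_2$ align perfectly, and to invoke the LMMSE shrinkage of $x_1$ given $y_1^{\bln}$ in the regime where $x_1$ is the dominant direction of $y_1$. The companion inequality \eqref{eq:differenceb22G} follows by the symmetric argument, swapping $(y_1,\bar{w}_2)\leftrightarrow(y_2,\bar{w}_1)$ and the channel indices $\{11,12,21,22\}\leftrightarrow\{22,21,12,11\}$.
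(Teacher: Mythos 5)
The paper does not reprove this lemma; it is imported verbatim from \cite[Lemma~9]{ChenIC:18}, so there is no in-paper proof to compare against. Judging your blind attempt on its own terms, the architecture (cancel the $x_2$ coefficient, condition on $y_1^{\bln}$, Gaussian-cap the residual) is a sensible starting point, but two specific steps do not close.

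\textbf{The spurious $\log\beta^2$.} Your chain-rule step is $\hen(y_2^{\bln}\mid\bar w_1)\le \hen(\beta y_1^{\bln},e^{\bln}\mid\bar w_1)=\hen(\beta y_1^{\bln}\mid\bar w_1)+\hen(e^{\bln}\mid y_1^{\bln},\bar w_1)$, and then you split off $\hen(\beta y_1^{\bln}\mid\bar w_1)=\hen(y_1^{\bln}\mid\bar w_1)+\bln\log|\beta|$. But the Jacobian of $(\beta y_1,e)\mapsto(y_2,y_1)$ is $1/|\beta|$, so $\hen(\beta y_1^{\bln},e^{\bln}\mid\bar w_1)=\hen(y_2^{\bln},y_1^{\bln}\mid\bar w_1)+\bln\log|\beta|$; you have traded the tight inequality $\hen(y_2^{\bln}\mid\bar w_1)\le \hen(y_2^{\bln},y_1^{\bln}\mid\bar w_1)=\hen(y_1^{\bln}\mid\bar w_1)+\hen(e^{\bln}\mid y_1^{\bln},\bar w_1)$ for one that is looser by exactly $\bln\log|\beta|$. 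That surplus is not absorbable: since $\hen(e^{\bln}\mid y_1^{\bln},\bar w_1)\ge\hen(z_2^{\bln}-\beta z_1^{\bln})=\tfrac{\bln}{2}\log(2\pi e(1+\beta^2))$, your right-hand side already exceeds $\tfrac{\bln}{2}\log(2\pi e\,\beta^2(1+\beta^2))\sim\tfrac{\bln}{2}\cdot 2(\alpha_{22}-\alpha_{12})\log P$, which is GDoF-level off from the target $\tfrac{\bln}{2}\log(1+P^{\alpha_{22}-\alpha_{12}}\cdot+\cdots)+\tfrac{\bln}{2}\log 10$. The "$\log\beta^2$ accounts for the first exponent" reading turns a sum inside one logarithm into a sum of two logarithms (a product inside one logarithm), which is strictly larger.

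\textbf{The residual variance.} Even after dropping the $\log\beta^2$ factor and keeping only $\hen(e^{\bln}\mid y_1^{\bln},\bar w_1)$, the Gaussian cap with $\mathrm{Var}(x_2\mid y_1^{\bln},\bar w_1)\le 1$ yields a variance containing $(\gamma/a)^2b^2$. With $\gamma/a=\sqrt{P^{\alpha_{21}-\alpha_{11}}}h_{21}/h_{11}-\beta$ and $b^2=P^{\alpha_{12}}|h_{12}|^2$, the $\beta^2b^2$ piece has exponent $\alpha_{22}$. In the regime $\alpha_{22}-\alpha_{12}>\alpha_{21}-\alpha_{11}$ (for instance $\alpha_{11}=2$, $\alpha_{12}=1$, $\alpha_{21}=0$, $\alpha_{22}=2$) this dominates and exceeds both target exponents $\alpha_{22}-\alpha_{12}$ and $\alpha_{21}-(\alpha_{11}-\alpha_{12})^+$ by $P^{\alpha_{12}}$. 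The LMMSE shrinkage you invoke cannot rescue this, because the lemma must hold for \emph{every} $\bar w_1$ independent of $\{x_2^{\bln},\text{noises}\}$, including a trivial one; for trivial $\bar w_1$, conditioning solely on $y_1^{\bln}$ gives no controlled shrinkage of $\mathrm{Var}(x_2\mid y_1)$ (nor of $\mathrm{Var}(x_1\mid y_1)$ in a way that removes the $\beta^2b^2$ term), since $x_1$ and $x_2$ are arbitrary (non-Gaussian, possibly correlated across time) inputs and one signal can completely obscure the other in $y_1$. A correct proof needs a different handle on the residual than ``condition on $y_1^{\bln}$ and take unconditional per-term variances.''

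<br>
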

\vspace{5pt}
Note that the result in \eqref{eq:differenceb22G}  holds true when   $\bar{w}_2$ is set as $\bar{w}_2 \defeq \{    \sqrt{P^{(\alpha_{22}-\alpha_{21})^+}} h_{22} x_{2}(t) +  z'_{2}(t)    \}_{t=1}^{\bln}$.   Let us define  $\alpha_{21}' \defeq \phi_3 $,  $\alpha_{22}' \defeq \alpha_{2 2} -(\alpha_{2 1} - \phi_3 ) $ and  $ y_2 (t)'  \defeq  \sqrt{P^{ \alpha_{21}' }}  h_{21} x_{1}(t) +  \sqrt{P^{\alpha_{22}'}}   h_{22} x_{2}(t)    +   \bar{z}_{2}(t) $. Then, by incorporating the result of \eqref{eq:differenceb22G} into   \eqref{eq:ups826773},
we have  
 \begin{align}
    \hen( y^{\bln}_1|  s_{22}^{\bln})  - \hen( \bar{y}_{2}^{\bln} | s_{22}^{\bln})    \leq &  \hen(y^{\bln}_1| \bar{w}_2)  - \hen(y'^{\bln}_2| \bar{w}_2)  \label{eq:ups92866888}  \\
 \leq  &  \frac{n}{2} \log \Bigl(1+ P^{\alpha_{11}-\alpha_{21}'} \cdot \frac{|h_{11}|^2}{|h_{21}|^2} +  P^{\alpha_{12} -(\alpha_{22}' - \alpha_{21}' )^+} \cdot \frac{|h_{12}|^2}{|h_{22}|^2}  \Bigr) +    \frac{n}{2} \log 10    \label{eq:ups826773777}\\
 =  &   \frac{n}{2} \log \Bigl(1+ P^{\alpha_{11} - \phi_3} \cdot \frac{|h_{11}|^2}{|h_{21}|^2} +  P^{\alpha_{12} -(\alpha_{2 2} -\alpha_{2 1}  )^+} \cdot \frac{|h_{12}|^2}{|h_{22}|^2}  \Bigr)     +    \frac{n}{2} \log 10  \label{eq:ups92866}     
\end{align}
where \eqref{eq:ups92866888} is from \eqref{eq:ups826773};
and \eqref{eq:ups826773777} follows from  \eqref{eq:differenceb22G}.
Then, we complete the proof of Lemma~\ref{lm:differencefirst2}.

\section{Proof of Lemma~\ref{lm:differencea11}} \label{sec:differencea11}

In this section we provide the proof of Lemma~\ref{lm:differencea11}. Recall that  $y_{2}(t)$, $s_{22}(t) $, $\bar{x}_{1}(t)$ and $\bar{x}_{2}(t)$ are defined in  \eqref{eq:Cmodel},    \eqref{eq:defs11},  \eqref{eq:defx1b}, and  \eqref{eq:defx2b}, respectively.
In this setting, we have
   \begin{align}
& \quad \  \hen(     \bar{y}_{2}^{\bln} , s_{22}^{\bln} |  w)     -   \hen(  y^{\bln}_1, s_{22}^{\bln}| w)    \non\\
& =  \hen(     \bar{y}_{2}^{\bln} , s_{22}^{\bln} |  w)     -   \hen(  y^{\bln}_1,  s_{22}^{\bln}, \bar{x}_{1}^{\bln} | w)  + \underbrace{\hen(   \bar{x}_{1}^{\bln} | w, y^{\bln}_1,  s_{22}^{\bln}) }_{\defeq J_{11}}  \non\\
 & =  \hen(     \bar{y}_{2}^{\bln} ,     s_{22}^{\bln}, s_{12}^{\bln}, \bar{x}_{2}^{\bln}, \bar{x}_{1}^{\bln} |  w)  -  \underbrace{\hen(    s_{12}^{\bln}, \bar{x}_{2}^{\bln}, \bar{x}_{1}^{\bln} |  w,   \bar{y}_{2}^{\bln} ,     s_{22}^{\bln})      }_{\defeq J_{22}} -   \hen(  y^{\bln}_1, s_{22}^{\bln}, \bar{x}_{1}^{\bln}| w)  +  J_{11}  \non   \\
  & =  \hen(  s_{22}^{\bln}, s_{12}^{\bln}, \bar{x}_{2}^{\bln}, \bar{x}_{1}^{\bln} |  w)  +  \underbrace{ \hen(     \bar{y}_{2}^{\bln} | w,    s_{22}^{\bln}, s_{12}^{\bln}, \bar{x}_{2}^{\bln}, \bar{x}_{1}^{\bln} ) }_{\defeq J_{33} }     -   \hen(  y^{\bln}_1, s_{22}^{\bln}, \bar{x}_{1}^{\bln}| w)   +  J_{11}  - J_{22}   \non   \\
    & =    \hen(\bar{x}_{1}^{\bln} |  w)  + \hen(  s_{22}^{\bln}, s_{12}^{\bln} | \bar{x}_{1}^{\bln} , w)  +  \hen( \bar{x}_{2}^{\bln} |  s_{22}^{\bln}, s_{12}^{\bln}, \bar{x}_{1}^{\bln},  w)   \non\\& \quad   - \hen(  \bar{x}_{1}^{\bln}| w)  -   \hen(  y^{\bln}_1, s_{22}^{\bln} | \bar{x}_{1}^{\bln}, w)  +  J_{11} - J_{22}   +J_{33} \non   \\
    & =  \underbrace{   \hen(  s_{22}^{\bln}, s_{12}^{\bln} | \bar{x}_{1}^{\bln} , w)  }_{= \hen(  y^{\bln}_1, s_{22}^{\bln} | \bar{x}_{1}^{\bln}, w,   x_{1}^{\bln})}  -   \hen(  y^{\bln}_1, s_{22}^{\bln} | \bar{x}_{1}^{\bln}, w)  +  \hen( \bar{x}_{2}^{\bln} |  s_{22}^{\bln}, s_{12}^{\bln}, \bar{x}_{1}^{\bln},  w) +  J_{11}  - J_{22} +J_{33} \non   \\
   & =    \underbrace{\hen(  y^{\bln}_1, s_{22}^{\bln} | \bar{x}_{1}^{\bln}, w,   x_{1}^{\bln})  -   \hen(  y^{\bln}_1, s_{22}^{\bln} | \bar{x}_{1}^{\bln}, w)}_{ =    -  \Imu(  x_{1}^{\bln}; y^{\bln}_1, s_{22}^{\bln} | \bar{x}_{1}^{\bln}, w)  \leq 0 }  + \underbrace{ \hen( \bar{x}_{2}^{\bln} |  s_{22}^{\bln}, s_{12}^{\bln}, \bar{x}_{1}^{\bln},  w)}_{ \defeq  J_{44} }   +  J_{11}  - J_{22}  +J_{33}  \label{eq:ups8256}    \\
   & \leq        J_{11}  - J_{22} +J_{33}  + J_{44}.  \label{eq:ups9933}    
\end{align}
In the above steps, chain rules are used in the derivations. 
In addition,  \eqref{eq:ups9933} uses the fact that mutual information cannot be negative, and \eqref{eq:ups8256} follows  from the following derivations 
        \begin{align}
  \hen(  s_{22}^{\bln}, s_{12}^{\bln} | \bar{x}_{1}^{\bln} , w)    &=  \hen(  s_{22}^{\bln}, s_{12}^{\bln})     \non\\
  & =  \hen(  s_{22}^{\bln},  s_{12}^{\bln}  | \bar{x}_{1}^{\bln}, w,   x_{1}^{\bln})     \non\\
    & =  \hen(  s_{22}^{\bln},  \{ s_{12}(t)   +     \sqrt{P^{\alpha_{11}}}  h_{11} x_{1}(t)  \}_{t=1}^{\bln}  | \bar{x}_{1}^{\bln}, w,   x_{1}^{\bln})     \non\\
&=   \hen( s_{22}^{\bln}, y^{\bln}_1 | \bar{x}_{1}^{\bln}, w,   x_{1}^{\bln})   
\end{align}           
which use  the independence between $\{s_{22}^{\bln}, s_{12}^{\bln}\}$ and $\{\bar{x}_{1}^{\bln}, w,   x_{1}^{\bln}\}$, as well as the identity  $y_{1}(t) =     \sqrt{P^{\alpha_{11}}}  h_{11} x_{1}(t)+   s_{12}(t)  $.
To complete this proof, we invoke the following lemma. 
 \vspace{10pt}
\begin{lemma}  \label{lm:Jboundsnew}
For $J_{11}$, $J_{22}$, $J_{33}$, and $J_{44}$ defined in this section, we have
\begin{align}
J_{11}  & \leq   \frac{n}{2} \log (42 \pi e )           \label{eq:Jb1new} \\
J_{22}  & \geq   \frac{3n}{2}\log (2\pi e)    \label{eq:Jb2new}  \\
J_{33}  & \leq    \frac{n}{2} \log (16 \pi e )   \label{eq:Jb3new}  \\
J_{44}  & \leq      \frac{n}{2} \log \bigl(2\pi e\bigl(1+      P^{ \phi_3  - \phi_1  }  |h_{22}|^2      \bigr)\bigr)          \label{eq:Jb4new}  
\end{align}
where $\phi_3 \defeq \min\{\alpha_{21},  \alpha_{12},   (\alpha_{11} - \phi_1)^+   \}$  and  $\phi_1 \defeq  ( \alpha_{12}- (\alpha_{22} - \alpha_{21})^+)^+ $. 
\end{lemma}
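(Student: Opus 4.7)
The plan is to prove the three upper bounds via a uniform \emph{noise removal by linear combination} strategy, and to prove the lower bound $J_{22}$ by adding information. For each of $J_{11},J_{33},J_{44}$ I would exploit $\hen(X^{\bln}\mid Y^{\bln})\le \hen(X^{\bln}-f(Y^{\bln}))$ for a carefully chosen deterministic linear map $f$ of the conditioning variables, then apply the Gaussian-maximizes-differential-entropy inequality symbol by symbol. The map $f$ will be chosen so that all signal terms in $X$ get canceled exactly and only a Gaussian residual with variance bounded by a constant (times a function of $\{|h_{k\ell}|^2\}$) remains; with $|h_{k\ell}|\in(1,2]$, this yields a per-symbol bound of the stated form.

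Concretely, for $J_{11}$ I would first cancel $x_2$ in $y_1$ using $s_{22}$, then rescale by $\sqrt{P^{\phi_3-\alpha_{11}}}h_{21}/h_{11}$ to land at the scale of $\bar{x}_1\approx\sqrt{P^{\phi_3}}h_{21}x_1+\bar z_3$; the residual is a linear combination of $\bar z_3,z_1,\tilde z_2$ whose non-trivial coefficient exponents are $\phi_3-\alpha_{11}$ and $\phi_3-\alpha_{11}+\alpha_{12}-(\alpha_{22}-\alpha_{21})^+$, both $\le 0$ by the defining inequalities $\phi_3\le(\alpha_{11}-\phi_1)^+$ and $\alpha_{12}-(\alpha_{22}-\alpha_{21})^+\le\phi_1$, capping variance at $21$. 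For $J_{33}$ I would take $\hat y_2=\bar{x}_1+a\,s_{22}+b\,\bar{x}_2$ with $a=\sqrt{P^{\alpha_{22}-\alpha_{21}+\phi_3-(\alpha_{22}-\alpha_{21})^+}}$ matching the $x_2$-coefficient of $\bar y_2$ and $b=-a/\sqrt{P^{\phi_3}}$ canceling the $\tilde z_2$ injected by $a\,s_{22}$; one checks $|b|\le 1$ and $\phi_3\le\alpha_{21}$, so the residual $\sqrt{P^{-(\alpha_{21}-\phi_3)}}z_2+\bar z_2-\bar z_3-b\,\bar z_4$ has variance at most $4$. For $J_{44}$ I would split on the sign of $\alpha_{12}-(\alpha_{22}-\alpha_{21})^+$: when non-negative (so $\phi_1=\alpha_{12}-(\alpha_{22}-\alpha_{21})^+$), combining $s_{22}$ and $s_{12}$ cancels $x_2$ exactly and leaves residual $\bar z_4+\sqrt{P^{\phi_3-\phi_1}}(h_{22}/h_{12})z_1$ of variance $1+P^{\phi_3-\phi_1}h_{22}^2/h_{12}^2\le 1+P^{\phi_3-\phi_1}h_{22}^2$; when negative, $\phi_1=0$ and the trivial bound $\hen(\bar{x}_2^{\bln})\le\tfrac{n}{2}\log(2\pi e(1+P^{\phi_3}))$ already suffices since $|h_{22}|>1$.

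For $J_{22}$ I would instead \emph{add} conditioning: $J_{22}\ge \hen(s_{12}^{\bln},\bar{x}_2^{\bln},\bar{x}_1^{\bln}\mid w,\bar y_2^{\bln},s_{22}^{\bln},x_1^{\bln},x_2^{\bln})$. Given $(x_1^{\bln},x_2^{\bln})$, subtracting the signal parts reduces $s_{12}^{\bln}$ to $z_1^{\bln}$ and $\bar{x}_1^{\bln}$ to $\bar z_3^{\bln}$ up to known terms, while $s_{22}^{\bln}$ then reveals $\tilde z_2^{\bln}$ and hence reduces $\bar{x}_2^{\bln}$ to $\bar z_4^{\bln}$. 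The remaining conditioning on $(w,\bar y_2^{\bln})$ involves only noises drawn from $\{\tilde z_2,z_2,\bar z_2\}$ together with the inputs and message, all jointly independent of the fresh triple $(z_1^{\bln},\bar z_3^{\bln},\bar z_4^{\bln})$. The chain rule therefore gives $J_{22}\ge \hen(z_1^{\bln})+\hen(\bar z_3^{\bln})+\hen(\bar z_4^{\bln})=\tfrac{3n}{2}\log(2\pi e)$.

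The main obstacle will be the exponent bookkeeping in $J_{11}$ and $J_{33}$: in every branch of the $\min/\max$ defining $\phi_1,\phi_2,\phi_3$ one must verify that each residual-noise coefficient's exponent is $\le 0$, which ultimately rests on the structural inequalities $\phi_3\le(\alpha_{11}-\phi_1)^+$ and $\alpha_{12}-(\alpha_{22}-\alpha_{21})^+\le\phi_1$ built into the definitions; the degenerate case $\alpha_{11}<\phi_1$ is handled separately by the trivial bound $\hen(\bar{x}_1^{\bln})\le\tfrac{n}{2}\log(2\pi e(1+h_{21}^2 P^{\alpha_{11}-\phi_1}))$, which lies well within $\tfrac{n}{2}\log(42\pi e)$. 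Once these exponent facts are in hand, everything else reduces to routine Gaussian-entropy maximization together with $|h_{k\ell}|\in(1,2]$.
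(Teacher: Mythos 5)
Your proposal follows essentially the same approach as the paper: for each of $J_{11},J_{33},J_{44}$ a deterministic linear combination of the conditioning variables is subtracted so that the signal terms cancel and only a bounded-variance Gaussian residual remains, and $J_{22}$ is bounded below by adding conditioning until only the fresh noises $(z_1,\bar z_3,\bar z_4)$ survive. One small discrepancy worth noting: the paper defines $\bar{x}_1(t)=\sqrt{P^{\min\{\alpha_{21},\alpha_{12},\alpha_{11}-\phi_1\}}}h_{21}x_1(t)+\bar z_3(t)$, \emph{without} the $(\cdot)^+$ that appears in $\phi_3$, whereas you write $\bar{x}_1\approx\sqrt{P^{\phi_3}}h_{21}x_1+\bar z_3$. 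Because the paper's exponent satisfies $\min\{\alpha_{21},\alpha_{12},\alpha_{11}-\phi_1\}-\alpha_{11}\le-\phi_1\le 0$ unconditionally, its $J_{11}$ cancellation needs no case split on $\alpha_{11}\gtrless\phi_1$; and in the paper's $J_{33}$ computation the residual retains an $x_1$ term with coefficient $(\sqrt{P^{\phi_3}}-\sqrt{P^{\min\{\alpha_{21},\alpha_{12},\alpha_{11}-\phi_1\}}})h_{21}$, whose square is bounded by $4$ rather than vanishing as you implicitly assume; your claimed variance $\le 4$ should be $\le 8$, which still lands exactly at $\frac{n}{2}\log(16\pi e)$, so the conclusion is unaffected.
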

The proof of Lemma~\ref{lm:Jboundsnew} is given in the following subsection.  
By incorporating the results of Lemma~\ref{lm:Jboundsnew} into  \eqref{eq:ups9933}, we have 
  \begin{align}
  \hen(     \bar{y}_{2}^{\bln} , s_{22}^{\bln} |  w)     -   \hen(  y^{\bln}_1, s_{22}^{\bln}| w)  
    &\leq        J_{11}  - J_{22} +J_{33}  + J_{44}  \non\\
   &\leq     \frac{n}{2} \log \bigl(1+      P^{ \phi_3  - \phi_1  }  |h_{22}|^2 \bigr)     +  \frac{n}{2} \log 168 
\end{align}
which completes the proof of Lemma~\ref{lm:differencea11}.

\subsection{Proof of Lemma~\ref{lm:Jboundsnew}} \label{app:Jboundsnew}

Recall that    $s_{11}(t)  =  \sqrt{P^{\alpha_{11}-\alpha_{12}}} h_{11} x_{1}(t) +  \tilde{z}_{1}(t)$,  $s_{22}(t)  =  \sqrt{P^{(\alpha_{22}-\alpha_{21})^+}} h_{22} x_{2}(t) +  \tilde{z}_{2}(t)$, $ s_{12}(t) =  \sqrt{P^{\alpha_{12}}} h_{12} x_{2}(t)  +z_{1}(t)$,  $\bar{x}_{1}(t)  \defeq  \sqrt{ P^{    \min\{\alpha_{21},  \alpha_{12},  \alpha_{11} -   \phi_1\}   }}  h_{21} x_{1}(t)  +   \bar{z}_{3}(t)  $, $ \bar{x}_{2}(t)  \defeq     \sqrt{P^{  \phi_3}}   \tilde{z}_{2}(t)   +   \bar{z}_{4}(t) $, $\bar{y}_{2}(t) \defeq  \sqrt{P^{-(\alpha_{2 1} - \phi_3 )}} y_{2}(t)    +   \bar{z}_{2}(t) $,  $\phi_3 \defeq \min\{\alpha_{21},  \alpha_{12},  \phi_2 \}$,     $ \phi_2 \defeq  (\alpha_{11} - \phi_1)^+$ and  $\phi_1 \defeq  ( \alpha_{12}- (\alpha_{22} - \alpha_{21})^+)^+ $.

At first we focus on the  bound of $J_{11}$:
\begin{align}
&J_{11}    \non\\
= &\hen(   \bar{x}_{1}^{\bln} | w, y^{\bln}_1,  s_{22}^{\bln})   \nonumber \\
  \leq & \sum_{t=1}^{\bln}  \hen( \bar{x}_{1} (t) |  y_1(t), s_{22}(t)  )  \label{eq:Jb9255}  \\
    = & \sum_{t=1}^{\bln}  \hen\bigl(   \bar{x}_{1} (t)   -       \sqrt{ P^{   \min\{\alpha_{21},  \alpha_{12},  \alpha_{11} -   \phi_1\}    -   \alpha_{11}  }}     \frac{h_{21} }{h_{11}}  \bigl( y_1(t)   -     \sqrt{P^{    \alpha_{12}   -    (\alpha_{22 } - \alpha_{21})^+  }}  \frac{h_{12} }{h_{22}}   s_{22}(t) \bigr)   |     y_1(t), s_{22}(t)  \bigr)   \label{eq:Jb2595}  \\
        = & \sum_{t=1}^{\bln}  \hen\bigl(     \bar{z}_{3}(t)   -     \sqrt{ P^{   \min\{\alpha_{21},  \alpha_{12},  \alpha_{11} -   \phi_1\}    -   \alpha_{11}  }}      \frac{h_{21} }{h_{11}} z_1(t)   \non\\& \quad  +    \sqrt{P^{  \min\{\alpha_{21},  \alpha_{12},  \alpha_{11} -   \phi_1\}    -   \alpha_{11}   +  \alpha_{12}   -    (\alpha_{22 } - \alpha_{21})^+  }}  \frac{h_{21}h_{12} }{h_{11} h_{22}}  \tilde{z}_{2}(t)    |     y_1(t), s_{22}(t)  \bigr)  \non  \\
       \leq&   \frac{n}{2} \log \bigl(2\pi e \bigl(1+  \underbrace{P^{   \min\{\alpha_{21},  \alpha_{12},  \alpha_{11} -   \phi_1\}    -   \alpha_{11}  }}_{\leq 1}  \cdot \underbrace{ \frac{|h_{21}|^2 }{|h_{11}|^2}}_{\leq 4}  +    \underbrace{ P^{  \min\{\alpha_{21},  \alpha_{12},  \alpha_{11} -   \phi_1\}    -   \alpha_{11}   +  \alpha_{12}   -    (\alpha_{22 } - \alpha_{21})^+  } }_{\leq 1} \cdot  \underbrace{\frac{|h_{21}|^2 |h_{12}|^2 }{|h_{11}|^2  |h_{22}|^2}     }_{\leq 16} \bigr)\bigr)  \label{eq:Jb44112} \\
       \leq &   \frac{n}{2} \log (42 \pi e )    \label{eq:Jb00998} 
\end{align}
where \eqref{eq:Jb9255}  follows from chain rule and the fact that conditioning reduces differential entropy;
  \eqref{eq:Jb2595} uses the fact that $\hen(a|b)=\hen(a-\beta b |b)$ for a constant $\beta$ and two continuous random variables $a$ and $b$; 
  \eqref{eq:Jb44112}  follows from  the fact that Gaussian input maximizes the differential entropy and that conditioning reduces differential entropy;
\eqref{eq:Jb00998} uses the identities $ \min\{\alpha_{21},  \alpha_{12},  \alpha_{11} -   \phi_1\}    -   \alpha_{11}  \leq 0 $ and  $   \min\{\alpha_{21},  \alpha_{12},  \alpha_{11} -   \phi_1\}    -   \alpha_{11}   +  \alpha_{12}   -    (\alpha_{22 } - \alpha_{21})^+   \leq 0 $, where $\phi_1 =  ( \alpha_{12}- (\alpha_{22} - \alpha_{21})^+)^+$.

For $J_{22}$, it can be bounded by 
\begin{align}
J_{22} & = \hen(    s_{12}^{\bln}, \bar{x}_{2}^{\bln}, \bar{x}_{1}^{\bln} |  w,   \bar{y}_{2}^{\bln} ,     s_{22}^{\bln})  \nonumber \\
       & \geq  \hen(    s_{12}^{\bln}, \bar{x}_{2}^{\bln}, \bar{x}_{1}^{\bln} |  w,   \bar{y}_{2}^{\bln} ,     s_{22}^{\bln}, x_{2}^{\bln}, x_{1}^{\bln},  \tilde{z}_{2}^{\bln})    \label{eq:Jb28586}  \\
     & = \hen(    z_{1}^{\bln},  \bar{z}_{4}^{\bln}, \bar{z}_{3}^{\bln}  )    \nonumber  \\
       & =  \frac{3n}{2}\log (2\pi e) 
\end{align}
where  \eqref{eq:Jb28586}  follows from the fact that conditioning reduces differential entropy.

For $J_{33}$, we have the following bound:
\begin{align}
J_{33} 
 = &  \hen(     \bar{y}_{2}^{\bln} | w,    s_{22}^{\bln}, s_{12}^{\bln}, \bar{x}_{2}^{\bln}, \bar{x}_{1}^{\bln} )   \nonumber \\
\leq &  \sum_{t=1}^\bln \hen(     \bar{y}_{2}(t) | s_{22}(t),  \bar{x}_{2} (t),  \bar{x}_{1}(t))   \label{eq:Jb2385677} \\
= &  \sum_{t=1}^\bln \hen(     \bar{y}_{2}(t)   - \bar{x}_{1} (t) -    \sqrt{P^{-(\alpha_{2 2}- \alpha_{2 1}  )^+ +  (\alpha_{2 2}- \alpha_{2 1}  )}}  \bigl(     \sqrt{P^{\phi_3 }}   s_{22}(t) -       \bar{x}_{2} (t) \bigr) | s_{22}(t), \bar{x}_{2} (t),  \bar{x}_{1}(t))  \non \\
= &  \sum_{t=1}^\bln \hen \bigl(  ( \sqrt{ P^{  \phi_3   }}  -  \sqrt{ P^{  \min\{\alpha_{21},  \alpha_{12},  \alpha_{11} -   \phi_1\}   }}  ) h_{21} x_{1}(t)  + \bar{z}_{2}(t)    +   \sqrt{ P^{  \phi_3 - \alpha_{21} }}      z_{2}(t) -   \bar{z}_{3}(t)   \non\\ & \quad  +   \sqrt{P^{-(\alpha_{2 2}- \alpha_{2 1}  )^+ +  (\alpha_{2 2}- \alpha_{2 1}  )}}    \bar{z}_{4} (t)   | s_{22}(t), \bar{x}_{2} (t),  \bar{x}_{1}(t) \bigr)  \non \\
  \leq&   \frac{n}{2} \log \bigl(2\pi e \bigl(  \underbrace{ ( \sqrt{ P^{  \phi_3   }}  -  \sqrt{ P^{   \min\{\alpha_{21},  \alpha_{12},  \alpha_{11} -   \phi_1\}    }}  )^2 }_{\leq 1}  |h_{21} |^2   + 1 + \underbrace{P^{  \phi_3 - \alpha_{21} }}_{\leq 1} +1 +    \underbrace{P^{-(\alpha_{2 2}- \alpha_{2 1}  )^+ +  (\alpha_{2 2}- \alpha_{2 1}  )}}_{\leq 1}   \bigr)\bigr)  \label{eq:Jb467747} \\
       \leq &   \frac{n}{2} \log (16 \pi e )    \label{eq:Jb9267336} 
\end{align}
where \eqref{eq:Jb2385677}  results from chain rule and the fact that conditioning reduces differential entropy;
  \eqref{eq:Jb467747}  follows from  the fact that Gaussian input maximizes the differential entropy and that conditioning reduces differential entropy;
  \eqref{eq:Jb9267336}  uses   the identity that $( \sqrt{ P^{   \min\{\alpha_{21},  \alpha_{12},  (\alpha_{11} -   \phi_1)^+ \}    }}  -  \sqrt{ P^{   \min\{\alpha_{21},  \alpha_{12},  \alpha_{11} -   \phi_1\}    }}  )^2 \leq 1$ and the definition that $\phi_3 \defeq \min\{\alpha_{21},  \alpha_{12},  \phi_2 \}$.

For the term $J_{44}$, we have two different bounds. One bound is given as 
\begin{align}
J_{44} =&  \hen( \bar{x}_{2}^{\bln} |  s_{22}^{\bln}, s_{12}^{\bln}, \bar{x}_{1}^{\bln},  w)    \nonumber \\
\leq &  \sum_{t=1}^\bln \hen(     \bar{x}_{2}(t) |   s_{22}(t), s_{12}(t))   \label{eq:Jb82667} \\
= &  \sum_{t=1}^\bln \hen\bigl(     \bar{x}_{2}(t)       -    \sqrt{P^{ \phi_3 }}  \bigl(  s_{22}(t) -   \sqrt{P^{  - \alpha_{12}  +  ( \alpha_{2 2} -\alpha_{2 1} )^+    }}   \frac{h_{22}}{h_{12}}  s_{12}(t)      \bigr)        |  s_{22}(t),  s_{12}(t) \bigr)  \non \\
= &  \sum_{t=1}^\bln \hen(     \bar{z}_{4}(t) +      \sqrt{P^{ \phi_3  - \alpha_{12}  +  ( \alpha_{2 2} -\alpha_{2 1} )^+    }}        \frac{h_{22}}{h_{12}}      z_{1}(t)  |  s_{22}(t),  s_{12}(t) )  \non \\
\leq &   \frac{n}{2} \log \Bigl(2\pi e\Bigl(1+ P^{ \phi_3  - (\alpha_{12}  -  ( \alpha_{2 2} -\alpha_{2 1} )^+)   } \cdot \frac{|h_{22}|^2}{|h_{12}|^2}  \Bigr)\Bigr)   \label{eq:Jb285773} 
\end{align}
where \eqref{eq:Jb82667}  results from chain rule and the fact that conditioning reduces differential entropy;
 \eqref{eq:Jb285773} follows from  the fact that Gaussian input maximizes the differential entropy and that conditioning reduces differential entropy.
 The other bound is given as
 \begin{align}
J_{44}  \leq &  \sum_{t=1}^\bln \hen(     \bar{x}_{2}(t) )   \label{eq:Jb28571} \\
= &   \frac{n}{2} \log \bigl(2\pi e\bigl(1+ P^{ \phi_3  }   \bigr)\bigr)   \label{eq:Jb9477331} 
\end{align}
 where \eqref{eq:Jb28571}  uses the fact that conditioning reduces differential entropy.
 By combining the bounds in \eqref{eq:Jb285773} and \eqref{eq:Jb9477331}, we finally have 
  \begin{align}
J_{44}  \leq  &    \frac{n}{2} \log \Bigl(2\pi e\Bigl(1+     \min \Bigl\{  P^{ \phi_3}, \   P^{ \phi_3  - (\alpha_{12}  -  ( \alpha_{2 2} -\alpha_{2 1} )^+)   } \cdot \frac{|h_{22}|^2}{|h_{12}|^2}    \Bigr\}    \Bigr)\Bigr)   \non\\
\leq   &   \frac{n}{2} \log \Bigl(2\pi e\Bigl(1+     \min \Bigl\{  P^{ \phi_3} |h_{22}|^2    , \   P^{ \phi_3  - (\alpha_{12}  -  ( \alpha_{2 2} -\alpha_{2 1} )^+)   }  |h_{22}|^2     \Bigr\}    \Bigr)\Bigr)     \non  \\
=   &   \frac{n}{2} \log \Bigl(2\pi e\Bigl(1+      P^{ \phi_3  - (\alpha_{12}  -  ( \alpha_{2 2} -\alpha_{2 1} )^+)^+   }  |h_{22}|^2      \Bigr)\Bigr)      \label{eq:Jb2866372} 
\end{align}
which completes the proof of Lemma~\ref{lm:Jboundsnew}.

\section{Proof of Lemma~\ref{lm:errorcase1}  \label{sec:errorcase1} }

Before showing the proof of Lemma~\ref{lm:errorcase1}, we describe the result of \cite[Lemma~1]{ChenIC:18} below, which will be used later. 
\begin{lemma}  \label{lm:icchenlemma} \cite[Lemma~1]{ChenIC:18}
Let $ y'=  \sqrt{P^{\alpha_1}}  h x + \sqrt{P^{\alpha_2}} e + z$, 
with three random variables $z\sim \mathcal{N}(0, \sigma^2)$,  $x \in \Omega (\xi,  Q)$,  and  $e  \in  \Sc_{e}$, for a given discrete set $\Sc_{e}$, under the condition of \[|e | \leq  e_{\max}, \quad \forall e \in  \Sc_{e}. \]  In this model,  $e_{\max}$, $h$, $\sigma$, $\alpha_1$ and $\alpha_2$   are positive constants  independent of $P$,  with a constraint that  $\alpha_1 > \alpha_2$. Let $\gamma' > 0$ be a finite constant independent of $P$.   If the parameters  $Q$ and $\xi$ are set as 
\begin{align}
Q =  \frac{P^{\frac{\alpha'}{2}} \cdot h \gamma' }{2 e_{\max} },    \quad  \quad    \xi =  \gamma' \cdot \frac{ 1}{Q},   \quad   \text{for} \quad    0<  \alpha' < \alpha_1 -\alpha_2    \non                            
\end{align}
then the error probability of the estimation of $x$ from $y'$ is \[ \text{Pr} (e) \to 0  \quad  \text{as} \quad   P\to \infty.\] 
\end{lemma}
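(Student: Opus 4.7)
The plan is to prove Lemma~\ref{lm:icchenlemma} via a standard minimum-distance (nearest-neighbor) decoding argument, treating the bounded interferer $\sqrt{P^{\alpha_2}} e$ as part of an effective noise and using a Q-function tail bound on the Gaussian component.

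First, I would compute the minimum separation of the received constellation $\{\sqrt{P^{\alpha_1}} h x : x\in \Omega(\xi,Q)\}$. Since the points $x\in\Omega(\xi,Q)$ are equally spaced with spacing $\xi$, the received points are equally spaced with spacing $\sqrt{P^{\alpha_1}} h\,\xi$. Substituting the prescribed $\xi=\gamma'/Q$ and $Q=P^{\alpha'/2} h\gamma'/(2e_{\max})$ gives the minimum distance
\begin{equation*}
d_{\min}=\sqrt{P^{\alpha_1}}\,h\,\xi=\sqrt{P^{\alpha_1}}\,h\cdot\frac{2e_{\max}}{P^{\alpha'/2}h}=2e_{\max}P^{(\alpha_1-\alpha')/2}.
\end{equation*}

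Next, I would apply a hard-decision decoder $\hat x$ that rounds $y'/(\sqrt{P^{\alpha_1}}h)$ to the nearest element of $\Omega(\xi,Q)$. A decoding error requires the total disturbance $\sqrt{P^{\alpha_2}}e+z$ to exceed $d_{\min}/2$ in magnitude. By the triangle inequality and the deterministic bound $|e|\le e_{\max}$, this in turn forces
\begin{equation*}
|z|\;>\;\frac{d_{\min}}{2}-\sqrt{P^{\alpha_2}}e_{\max}\;=\;e_{\max}\bigl(P^{(\alpha_1-\alpha')/2}-P^{\alpha_2/2}\bigr).
\end{equation*}
The hypothesis $0<\alpha'<\alpha_1-\alpha_2$ gives $(\alpha_1-\alpha')/2>\alpha_2/2$, so the right-hand side is positive for all large $P$ and grows without bound as $P\to\infty$. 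Applying the standard tail bound ${\bf Q}(t)\le\tfrac12\exp(-t^2/2)$ with $z\sim\mathcal N(0,\sigma^2)$ yields
\begin{equation*}
\Pr[\hat x\neq x]\;\le\;2\,{\bf Q}\!\left(\frac{e_{\max}\bigl(P^{(\alpha_1-\alpha')/2}-P^{\alpha_2/2}\bigr)}{\sigma}\right)\;\le\;\exp\!\left(-\frac{e_{\max}^2\bigl(P^{(\alpha_1-\alpha')/2}-P^{\alpha_2/2}\bigr)^2}{2\sigma^2}\right),
\end{equation*}
which vanishes as $P\to\infty$, as desired.

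There is no real obstacle here; the argument is a textbook PAM-over-Gaussian analysis. The only subtlety worth flagging is that one must use the worst-case deterministic bound $|e|\le e_{\max}$ rather than any average or distributional property of $e$, since $e$ is an arbitrary element of the given discrete set $\mathcal S_e$ and need not be independent of anything else; this is what makes the argument robust enough to be invoked as a black box in the achievability proofs (e.g., in the successive-decoding and signal-separation steps that estimate PAM symbols in the presence of residual aligned interference whose only guaranteed property is a uniform amplitude bound). The condition $\alpha'<\alpha_1-\alpha_2$ is precisely the exponential gap needed for the minimum-distance margin to dominate the worst-case interference, after which the Gaussian tail decays doubly exponentially in $P$.
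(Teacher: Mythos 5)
Your argument is correct and is exactly the standard PAM--over--Gaussian nearest-neighbor analysis: the prescribed $Q$ and $\xi$ yield received-constellation spacing $\sqrt{P^{\alpha_1}}h\xi = 2e_{\max}P^{(\alpha_1-\alpha')/2}$, the worst-case bound $|e|\le e_{\max}$ reduces the error event to a Gaussian tail at threshold $e_{\max}\bigl(P^{(\alpha_1-\alpha')/2}-P^{\alpha_2/2}\bigr)$, and the condition $\alpha'<\alpha_1-\alpha_2$ makes that threshold diverge. Note that the present paper does not prove this lemma itself --- it is imported verbatim from \cite[Lemma~1]{ChenIC:18} --- but your proof coincides with the proof given there, including the key observation that only the uniform amplitude bound on $e$ (not independence from $z$ or any distributional structure) is used, which is what makes the lemma safe to invoke as a black box in the successive-decoding steps.
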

\vspace{10pt}

Let us now prove Lemma~\ref{lm:errorcase1}.  Given the observation $y_1$ expressed in \eqref{eq:yvk12866ww}, we will show that  $v_{m}$ and $v_{p}$ can be estimated from $y_1$ with vanishing error probability. 
In this case,  $y_1$ can be described as 
 \begin{align}
y_{1} =    \sqrt{P} h_{11}    h_{22} v_{m}   +    \sqrt{P^{ 1 - \alpha }}  e'     +  z_{1}   \non
\end{align}
where $  e' \defeq   h_{11}    h_{22}  v_{p}    +     \sqrt{P^{ 3\alpha - 2 }}  h_{12}  h_{21}  u_p $.
In this scenario with $0\leq \alpha \leq 1/2$,  we have \[ |e' | \leq  7/5 \] for any realizations of $e'$.   Note that, 
  $v_{m}, u_p \in    \Omega (\xi   =   \frac{ 2 \gamma}{Q},   \   Q =  P^{ \frac{ \alpha  - \epsilon}{2}} ) $ and   $v_{p}  \in    \Omega (\xi   = \frac{ \gamma}{Q},   \   Q = P^{ \frac{ 1 - \alpha - \epsilon}{2}} ) $,  for some parameters $\gamma \in (0, 1/20]$ and $\epsilon \to 0$.  Then,  by Lemma~\ref{lm:icchenlemma}, it holds true that
the error probability of the estimation of  $v_{m}$ from $y_1$  is 
 \begin{align}
\text{Pr} [v_{m} \neq \hat{v}_{m}] \to 0, \quad \text {as}\quad  P\to \infty.   \label{eq:errorvc14262}
\end{align}

In the next step, we remove $v_{m}$ from $y_1$ and then estimate  $v_{p}$ from the following observation
 \begin{align}
y'_{1} =       \sqrt{P^{ 1 - \alpha }} h_{11}    h_{22}  v_{p}    +     \sqrt{P^{ 2 \alpha - 1   }}  h_{12}  h_{21}  u_p    +  z_{1} .   \label{eq:y11492626}
\end{align}
For the second term in the right-hand side of  \eqref{eq:y11492626}, the following condition is always satisfied
\[ | h_{12}  h_{21}  u_p | \leq  4 \times 2 \gamma \leq 2/5 .\]
Therefore, by  Lemma~\ref{lm:icchenlemma}, it is also true that
the error probability of the estimation of  $v_{p}$ from $y'_1$  expressed in  \eqref{eq:y11492626} is 
\begin{align}
\text{Pr} [v_{p}  \neq  \hat{v}_{p} |  v_{m} = \hat{v}_{m}]  \to 0, \quad \text {as}\quad  P\to \infty. \label{eq:errorvp02526}
\end{align}

At this point, by combining the results of  \eqref{eq:errorvc14262} and \eqref{eq:errorvp02526}, it gives 
\begin{align}
\text{Pr} [  \{ v_{m} \neq \hat{v}_{m} \} \cup  \{ v_{p} \neq \hat{v}_{p} \}  ]    \to   0         \quad \text {as}\quad  P\to \infty .  \non 
\end{align}

\section{Proof of Lemma~\ref{lm:epcase5}  \label{sec:epcase5} }

Given the case with  $1\leq  \alpha \leq  4/3$,  we will show that  $v_{c}$ and $u_c$  can be estimated from $y_1$  with vanishing error probability, for almost all the channel realizations. 
Recall that  $v_{c}, u_c \in    \Omega (\xi   =   \frac{ 6 \gamma}{Q},   \   Q =  P^{ \frac{ \alpha/2 - \epsilon}{2}} ) $,  for some parameters $\gamma \in (0, 1/20]$ and $\epsilon \to 0$.
Let us describe  $y_1$ in the following form
   \begin{align}
y_{1} &=   \sqrt{P^{ 2 - \alpha  }}   h_{11}    h_{22} v_{c}    +     \sqrt{P^{  \alpha  }}  h_{12}  h_{21}  u_c    +  z_{1}     \non \\
&= 6\gamma  P^{ \epsilon/2}  \underbrace{(A'_0 g'_0 q'_0 +  A'_1 g'_1 q'_1 )}_{\defeq \tilde{s}'}     +  z_{1}    \non\\
&=  6\gamma  P^{ \epsilon/2}   \tilde{s}'     +  z_{1}      \label{eq:y1998877}  
\end{align}
where $ g'_0 \defeq  h_{11}    h_{22} $,     $ g'_1\defeq    h_{12}  h_{21}$,  $A'_0  \defeq  \sqrt{P^{ 2 - 3\alpha/2  }}$, $A'_1  \defeq   \sqrt{P^{  \alpha /2  }} $,  $\tilde{s}' \defeq  A'_0 g'_0 q'_0 +  A'_1 g'_1 q'_1$  and  
    \[ q'_0  \defeq  \frac{Q'_{0}}{6\gamma} \cdot   v_{c}  ,    \quad  q'_1  \defeq  \frac{Q'_{1}}{6\gamma} \cdot   u_c, \quad Q'_{1}\defeq   Q'_{0} \defeq   P^{ \frac{ \alpha/2  - \epsilon}{2}}.   \]
In this scenario,  the following conditions are always satisfied: $q'_0, q'_1 \in \Zc$, $|q'_0| \leq Q'_{0}$ and $|q'_1| \leq Q'_{0}$. 
Similar to the proof of Lemma~\ref{lm:rateerror561}, without loss of generality we will consider the  case  that $Q'_{0}, A'_{0}, A'_{1}  \in \Zc^+$.

For the observation $y_1$ in \eqref{eq:y1998877}, our focus  is to estimate the sum $\tilde{s}' =  A'_0 g'_0 q'_0 +  A'_1 g'_1 q'_1$.  After decoding  $\tilde{s}' $ correctly, $q'_0$ and $q'_1$ can be recovered, because $\{g'_0, g'_1\}$ are rationally independent. 
We define the minimum distance of $\tilde{s}'$  as
  \begin{align}
d'_{\min}  (g'_0, g'_1)   \defeq    \min_{\substack{ q'_0, q'_1, \tilde{q}'_0, \tilde{q}'_1 \in \Zc  \cap [- Q'_{0},    Q'_{0}]   \\  (q'_0, q'_1) \neq  (\tilde{q}'_0, \tilde{q}'_1)  }}     |  A'_0  g'_0  (q'_0 - \tilde{q}'_0) + A'_1 g'_1 (q'_1 - \tilde{q}'_1)   | .    \label{eq:md223344}
 \end{align}
 The following lemma provides a result on the minimum distance.

\begin{lemma}  \label{lm:dis143999}
For the case with $1\leq  \alpha \leq  4/3$, and  for some constants $\delta \in (0, 1]$ and  $\epsilon >0$,  the following bound on the minimum distance $d'_{\min}$  defined in \eqref{eq:md223344}  holds true 
 \begin{align}
d'_{\min}    \geq  \delta  \label{eq:dG666888}
 \end{align}
for all  the channel  realizations $\{h_{11}, h_{12}, h_{22}, h_{21}\} \in (1, 2]^{2\times 2} \setminus \Hop$, where   $\Hop \subseteq (1,2]^{2\times 2}$ is an outage set  whose Lebesgue measure, denoted by $\mathcal{L}(\Hop)$,  has the following bound  
 \begin{align}
\mathcal{L}(\Hop) \leq    192 \delta   \cdot     P^{ - \frac{ \epsilon  }{2}} .    \label{eq:LM777888}
 \end{align}
 \end{lemma}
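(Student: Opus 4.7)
The plan is to adapt, essentially verbatim, the measure-theoretic argument used in the proof of Lemma~\ref{lm:dis8888}, now specialized to the present two-term (rather than three-term) sum. Setting $\beta \defeq \delta$, I will for every nonzero integer pair $(q'_0, q'_1)$ with $|q'_k| \leq 2 Q'_0$ define the bad event
\[ B'(q'_0, q'_1) \defeq \{(g'_0, g'_1) \in (1, 4]^2 : |A'_0 g'_0 q'_0 + A'_1 g'_1 q'_1| < \beta \} \]
and take $B' \defeq \bigcup B'(q'_0, q'_1)$. By construction, $(g'_0, g'_1) \notin B'$ forces $d'_{\min}(g'_0, g'_1) \geq \delta$, so the lemma reduces to a Lebesgue bound on $B'$ followed by a change-of-variables pullback to $(h_{11}, h_{22}, h_{12}, h_{21}) \in (1,2]^{2 \times 2}$ via $g'_0 = h_{11} h_{22}$ and $g'_1 = h_{12} h_{21}$, which is the direct analogue of the passage leading to \eqref{eq:LB3333} and only costs a bounded Jacobian factor.

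To control $\mathcal{L}(B')$, I will invoke the two-dimensional instance of the Khintchine--Groshev-type counting bound \cite[Lemma~14]{NM:13} that was already used in Lemma~\ref{lm:dis8888}. Plugging in $A'_0 = \sqrt{P^{2 - 3\alpha/2}}$, $A'_1 = \sqrt{P^{\alpha/2}}$, and $Q'_0 = P^{(\alpha/2 - \epsilon)/2}$, one checks that $A'_0 \leq A'_1$ throughout the regime $1 \leq \alpha \leq 4/3$, together with the crucial identity $Q'_0 / A'_1 = P^{-\epsilon/2}$. This last identity is precisely the origin of the decay factor $P^{-\epsilon/2}$ on the right-hand side of the target bound \eqref{eq:LM777888}. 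After specializing the Khintchine--Groshev output to these parameter values and running the constant-tracking analogous to the derivation of \eqref{eq:LB55555} --- with the three-term coefficient $504$ replaced by its simpler two-term counterpart --- the final numerical constant should collapse to $192$.

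The step I expect to be the most delicate is verifying that the two-term counting genuinely produces the typical-gap scaling $\beta \, Q'_0 / \max\{A'_0, A'_1\}$ rather than the strictly weaker naive union-bound scaling $\beta \, Q'_0 / \min\{A'_0, A'_1\}$; the latter would contribute a stray factor of $P^{\alpha - 1}$ (nontrivial in the present regime since $\alpha - 1$ can be as large as $1/3$) and badly overshoot the target. Avoiding this spurious growth is exactly the point at which the device inside \cite[Lemma~14]{NM:13} has to be invoked, namely partitioning the pairs $(q'_0, q'_1)$ by their greatest common divisor, reducing to coprime pairs, and exploiting the essential disjointness of the resulting thin strips rather than summing their measures wastefully. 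Once this mechanism is in place, everything else --- including the bookkeeping that yields the explicit constant $192$ --- is routine arithmetic.
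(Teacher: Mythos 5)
Your overall framework is exactly the one the paper uses: define the bad events $B'(q'_1,q'_0)$ of width $\beta=\delta$, take the union $B'$ over nonzero integer pairs within the doubled box $[-2Q'_0,2Q'_0]^2$, observe that $(g'_0,g'_1)\notin B'$ forces $d'_{\min}\geq\delta$, and then pull $B'$ back to $(h_{11},h_{22},h_{12},h_{21})$-space via $g'_0=h_{11}h_{22}$, $g'_1=h_{12}h_{21}$ at the cost of a bounded Jacobian, as in \eqref{eq:LB3333}. You also correctly verify $A'_0\leq A'_1$ on $1\leq\alpha\leq 4/3$, correctly isolate $Q'_0/A'_1 = P^{-\epsilon/2}$ as the source of the decay factor, and correctly warn that a naive scaling like $\beta Q'_0/A'_0$ would overshoot by $P^{\alpha-1}$; the content of the cited counting bound is indeed to deliver the $\min$ over candidate terms that picks out $Q'_0/A'_1$.

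Where you diverge from the paper is in the auxiliary counting lemma. The paper does not re-use \cite[Lemma~14]{NM:13} for the two-term case; it invokes the dedicated two-variable bound \cite[Lemma~1]{ChenLiArxiv:18}, which gives
\[
\Lc(B')\;\leq\; 24\beta \,\min\Bigl\{\tfrac{4Q'_1Q'_0}{A'_1},\;\tfrac{4Q'_0Q'_1}{A'_0},\;\tfrac{8Q'_0}{A'_1},\;\tfrac{8Q'_1}{A'_0}\Bigr\}\;\leq\; 24\beta\cdot\tfrac{8Q'_0}{A'_1}\;=\;192\,\delta\, P^{-\epsilon/2},
\]
so the constant $192=24\times 8$ falls out in one line. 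Your plan to specialize the three-variable \cite[Lemma~14]{NM:13} to a two-variable sum is plausible in spirit, but that lemma's statement carries a constant of $504$ together with several cross-terms in $\tilde{Q}_1$, $\tilde{Q}_2$ of the type seen in \eqref{eq:LB55555}, and it is not routine that the degenerate (one variable removed) specialization collapses to a constant as small as $192$ rather than to some unspecified bounded factor times $P^{-\epsilon/2}$. In short: your argument captures the right scaling and the right pullback, and would prove the qualitative statement $\mathcal{L}(\Hop)=O(P^{-\epsilon/2})$, but reaching the specific constant in \eqref{eq:LM777888} by your route would require the bookkeeping you deferred; the paper sidesteps this by reaching for a lemma already tailored to two terms.
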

 \begin{proof}
For  $\beta \defeq   \delta \in (0, 1]  $,  we define an event as
\begin{align}
B'(  q'_1, q'_0)  \defeq \{ (g'_1, g'_0)  \in (1,4]^2 :  |  A'_1 g'_1 q'_1 +  A'_0 g'_0q'_0   | < \beta \}       \label{eq:BO999888}
\end{align}
and define 
\begin{align}
B'  \defeq   \bigcup_{\substack{ q'_0, q'_1 \in \Zc:  \\ |q'_k| \leq 2 Q'_0  \ \forall k  \\    (q'_0, q'_1) \neq  0}}  B'( q'_1, q'_0) .   \label{eq:Boutage2233}
\end{align}
For this case with $1\leq  \alpha \leq  4/3$, by  \cite[Lemma~1]{ChenLiArxiv:18} we have a bound on  the Lebesgue measure of $B'$, given as
\begin{align}
\Lc (B' ) & \leq   24 \beta    \min\Bigl\{ \frac{  4 Q'_1 Q'_0}{A'_1},  \frac{4 Q'_0 Q'_1}{A'_0},   \frac{ 8 Q'_0 }{A'_1},   \frac{8Q'_1 }{A'_0}\Bigr\}     \non\\
&\leq     24 \beta  \cdot  \frac{ 8 Q'_0 }{A'_1}   \non\\
 & =  192 \delta   \cdot     P^{ - \frac{ \epsilon  }{2}} .         \label{eq:LB55555666}
\end{align}
At this point, we define a new set  $\Hop$  as
\[ \Hop \defeq \{  (h_{11}, h_{22}, h_{12}, h_{21} ) \in (1, 2]^{2\times 2} :      (  g'_1, g'_0) \in B'  \} . \]
By following the steps related to \eqref{eq:LB3333}, we have the following bound on the Lebesgue measure of $\Hop$  
\begin{align}
\Lc (\Hop )  \leq    \Lc (B' ) \leq 192 \delta   \cdot     P^{ - \frac{ \epsilon  }{2}}     .        \label{eq:LB444333555}  
\end{align}
 \end{proof}

Lemma~\ref{lm:dis143999}  reveals  that  the Lebesgue measure of the outage set $\Hop$ is vanishing when $P$ is large, i.e.,  \[ \mathcal{L}(\Hop)  \to 0, \quad \text{for} \quad  P\to \infty. \]
Let us now  consider the channel  condition that  $(h_{11}, h_{22}, h_{12}, h_{21}) \notin \Hop$, in which the minimum distance of $\tilde{s}'$, defined in \eqref{eq:md295267}, satisfies the  inequality of $d'_{\min}    \geq   \delta $ (see \eqref{eq:dG666888}). 
With this result, we can  conclude that  the error probability for decoding $\tilde{s}'$ from $y_1 = 6\gamma  P^{ \epsilon/2} \tilde{s}'  +  z_{1} $ (see  \eqref{eq:y1998877}), denoted by $ \text{Pr} [ \tilde{s}' \neq   \hat{\tilde{s}} ] $,  is 
 \begin{align}
 \text{Pr} [ \tilde{s}' \neq   \hat{\tilde{s}}' ]    \to 0  \quad  \text{for} \quad   P\to \infty  \non                       
 \end{align}
 for almost all the channel realizations in the regime of large $P$.  
 After decoding  $\tilde{s}' $ correctly, $q'_0$ and $q'_1$ can be recovered, based on  the fact that $\{g'_0, g'_1\}$ are rationally independent.  Then, we complete the proof.

\section{Secure GDoF of the  Gaussian  wiretap channel \emph{without} a helper} \label{sec:GDoFnoHelper}

This section  focuses on the wiretap channel \emph{without} a helper (removing transmitter~2).
For this channel, the goal is to understand  the GDoF based on the capacity result of \cite{Wyner:75, CsiszarKorner:78}, which will be used for the GDoF comparison of the wiretap channels with and without a helper. 
For the wiretap channel \emph{without} a helper, the secure capacity, denoted by $C_{no}$, is given by:
\begin{align}
C_{no}=\max_{v \to x_1 \to y_1, y_2} \Imu(v; y_1) - \Imu(v; y_2)     \label{eq:GDoFnoH91566} 
\end{align}
(cf.~\cite{Wyner:75, CsiszarKorner:78}), where the maximum is computed over all random variables $v, x_1, y_1, y_2$ such that  $v \to x_1 \to y_1, y_2$ forms a Markov chain, and $y_{k}=   \sqrt{P^{\alpha_{k1}}} h_{k1} x_{1} +z_{k}$ for  $k=1,2$. 
Let us focus on the  upper bound on the following difference:
\begin{align}
&\Imu(v; y_1) - \Imu(v; y_2)  \non \\
\leq& \Imu(v; y_1, y_2) - \Imu(v; y_2)  \non \\
=& \hen(y_1| y_2)  - \hen(y_1| y_2,v)   \non  \\
\leq & \hen(y_1| y_2)  - \hen(y_1| y_2,v, x_1)   \label{eq:GDoFnoH33556}  \\
= & \hen(y_1| y_2)  - \frac{1}{2} \log (2\pi e)   \label{eq:GDoFnoH92455}  \\
= & \hen(y_1 -   \sqrt{P^{ \alpha_{11} - \alpha_{21} }} \frac{h_{11}}{h_{21}}  y_2  | y_2)  - \frac{1}{2} \log (2\pi e)   \non  \\
= & \hen(z_1 -   \sqrt{P^{ \alpha_{11} - \alpha_{21} }} \frac{h_{11}}{h_{21}}  z_2  | y_2)  - \frac{1}{2} \log (2\pi e)   \non  \\
\leq & \hen(z_1 -   \sqrt{P^{ \alpha_{11} - \alpha_{21} }} \frac{h_{11}}{h_{21}}  z_2  )  - \frac{1}{2} \log (2\pi e)   \label{eq:GDoFnoH7477}  \\
= &  \frac{1}{2} \log (1+  P^{ \alpha_{11} - \alpha_{21} }  \frac{|h_{11}|^2}{|h_{21}|^2} )  \label{eq:GDoFnoH8898} 
\end{align}
where  \eqref{eq:GDoFnoH33556}  and \eqref{eq:GDoFnoH7477} use the identity that conditioning reduces differential entropy;
 \eqref{eq:GDoFnoH92455}  results from the fact that $\hen(y_1| y_2,v, x_1) = \hen(z_1) = \frac{1}{2} \log (2\pi e)$.
 By combining  \eqref{eq:GDoFnoH91566}  and \eqref{eq:GDoFnoH8898}, the secure GDoF, denoted by $d_{no} $, is upper bounded by 
 \begin{align}
d_{no}  \leq  (\alpha_{11} - \alpha_{21} )^+ .   \label{eq:GDoFnoHupper} 
\end{align}
On the other hand, since the secure capacity is optimized over the random variables $v$ and $x_1$, by setting $x_1 = v \sim \mathcal{N}(0, 1)$ we have the lower bound on the secure capacity:
\begin{align}
C_{no} \geq &  \Imu(v; y_1) - \Imu(v; y_2)    \non   \\
= &   \hen(\sqrt{P^{\alpha_{11}}} h_{11} x_{1} +z_1)  - \hen(z_1)   - \hen(\sqrt{P^{\alpha_{21}}} h_{21} x_{1} +z_2)  + \hen(z_2)    \non   \\
= &  \frac{1}{2} \log (1+  P^{ \alpha_{11} }  |h_{11}|^2 )   -   \frac{1}{2} \log (1+  P^{ \alpha_{21} }  |h_{21}|^2 ) .  \label{eq:GDoFnoHlowerc} 
\end{align}
The bound in \eqref{eq:GDoFnoHlowerc} reveals that the secure GDoF is lower bounded by 
 \begin{align}
d_{no}  \geq  (\alpha_{11} - \alpha_{21} )^+ 
\end{align}
which, together with \eqref{eq:GDoFnoHupper}, gives the optimal secure GDoF 
 \begin{align}
d_{no}  =  (\alpha_{11} - \alpha_{21} )^+. 
\end{align}
For the symmetric case of notation with $\alpha_{11}=1$ and $\alpha_{12}= \alpha$, this  secure GDoF becomes
\begin{align}
d_{no}  = (1 -  \alpha)^+       \quad \forall  \alpha \in [0,  \infty) .  \non 
\end{align}

\section{Proof of Corollary~\ref{cor:symSGDoF}} \label{sec:symSGDoF}

For the symmetric setting with $\alpha_{11}= \alpha_{22}=1,   \alpha_{21}= \alpha_{12}=\alpha$, $ \phi_1$ and $ \phi_3$ take the following forms: 
\begin{align}
\phi_1  & = (\alpha - (1 - \alpha)^+)^+ \non \\
\phi_3 &=  \min\{ \alpha,  ( 1   - (\alpha - (1 - \alpha)^+)^+)^+ \}.   \non
\end{align}
In this symmetric case,  the three bounds  in Corollary~\ref {cor:SGDoF} then become
\begin{align}
 d   &\leq   \max \{  \phi_1    ,  (1 -  \phi_3)^+ \} +  (\phi_3   -    \phi_1 )^+  \label{eq:detupWTsym}   \\
 d   &\leq   (1 - \alpha )^+   +   \frac{\max\{1, \alpha  \}}{2}     \label{eq:detup3sym}     \\
    d   &\leq  (2 - \alpha)^+  .  \label{eq:detupsym1}   
\end{align}

When  $0 \leq \alpha \leq 1/2$, it reveals that  $\phi_1  = 0$ and  $\phi_3 =   \alpha $, and the bounds in \eqref{eq:detupWTsym}-\eqref{eq:detupsym1}  can be simplified as
\begin{align*}
d    &\leq     1    \\
  d  & \leq        3/2 - \alpha  \\
  d  &\leq   2 - \alpha  
\end{align*}
which implies that 
 \begin{align*}
d &\leq   \min\{1 ,  3/2 - \alpha, 2 - \alpha \}  = 1 ,  \quad \forall  \alpha\in [0, 1/2 ].
  \end{align*}
  
 When $1/2 \leq \alpha  \leq  1$,  it suggests that $\phi_1 =  2\alpha- 1 $ and $\phi_3 =   \min\{ \alpha, 2 ( 1   -  \alpha  ) \} $. Then,  the bounds in  \eqref{eq:detupWTsym}-\eqref{eq:detupsym1}  can be simplified as 
 \begin{align*}
d   &\leq   \max \{  2\alpha- 1    ,   1 - \alpha  \} +  \min\{ 1-  \alpha  ,  (3  - 4\alpha )^+ \}        \\
 d  & \leq   3/2 - \alpha   \\
 d  &\leq  2 - \alpha  . \\
\end{align*}
From the above results, the GDoF  $d$ can be bounded as 
 \begin{align*}
d &\leq  \min\{ 2- 2 \alpha, 3/2 - \alpha,  2 - \alpha  \}   = 2- 2 \alpha  , \  \quad \forall  \alpha\in [1/2 , 3/4]  \\
d &\leq   \min\{ 2\alpha- 1 , 3/2 - \alpha,  2 - \alpha  \}   =  2\alpha- 1,   \  \quad \forall  \alpha\in [3/4 , 5/6]  \\
d &\leq  \min\{ 2\alpha- 1 , 3/2 - \alpha,  2 - \alpha  \} = 3/2 - \alpha,  \quad \forall  \alpha\in [5/6 , 1] .
  \end{align*}
 When  $1 \leq \alpha $,   then $\phi_1 =  (\alpha - (1 - \alpha)^+)^+     = \alpha$ and  $\phi_3 =  0$,
  and  the bounds in \eqref{eq:detupWTsym}-\eqref{eq:detupsym1}  can be simplified as 
  \begin{align*}
 d   &\leq    \alpha    \\
 d   &\leq         \alpha/2         \\
    d   &\leq  (2 - \alpha)^+   .
\end{align*}
The above results imply that 
 \begin{align*}
 d &\leq  \min\{ (2 - \alpha)^+  , \   \alpha/2 \}  =\alpha/2, \quad \quad \forall  \alpha\in [1, 4/3] \\
 d &\leq  \min\{ (2 - \alpha)^+  , \  \alpha/2 \}  = 2 - \alpha,  \  \quad \forall  \alpha\in [4/3, 2]  \\
 d &\leq  \min\{ (2 - \alpha)^+  , \  \alpha/2 \}  = 0 ,  \  \quad \quad\quad \forall  \alpha\in [2, +\infty].  
  \end{align*}
  At this point we complete the proof.

%\bibliographystyle{IEEEtran}
%\bibliography{IEEEabrv,final_refs}

% Generated by IEEEtran.bst, version: 1.13 (2008/09/30)

\end{document}